\newcommand{\Routputpath}{data/derived}
  \def\@floatboxreset{\reset@font\footnotesize\@setminipage}%
\patchcmd{\@xfloat}{\normalsize}{\selectfont}{}{}
\newtheorem{app_assumption}{Assumption}[section]
\newtheorem{app_theorem}{Theorem}[section]
\newtheorem{lemma}{Lemma}[section]
\newcommand{\leqnomode}{\tagsleft@true}
\newcommand{\reqnomode}{\tagsleft@false}
\newcommand{\abs}[1]{\left\lvert #1 \right\rvert}
\DeclarePairedDelimiter\norm\lVert\rVert
\newcommand{\E}{\mathbb{E}}
\newcommand{\Prob}{\mathbb{P}}
\newcommand{\reals}{\mathbb{R}}
\DeclareMathOperator{\cov}{cov}
\DeclareMathOperator{\var}{var}
\DeclareMathOperator{\vecop}{vec}
\title{Supplemental Material for ``Confidence set for Group Membership''}
\author{Andreas Dzemski and Ryo Okui}
\date{\today}
\begin{document}
\maketitle

\appendix

\renewcommand{\theequation}{\thesection.\arabic{equation}}
\numberwithin{table}{section}
\numberwithin{figure}{section}

\tableofcontents

\section{\label{appendix:proofs}Proofs of main results}

\subsection{Notation}  

We introduce additional notation. We consider statistics that replace some (not all) estimated components in the statistics defined in the main text with population quantities. 
Let 
\begin{align*}
	D_{i}(h) = \frac{\sum_{t = 1}^T d_{it}(g_i^0, h) /\sqrt{T}}{ \sqrt{ \Xi_i (h,h)} }
\end{align*}
and
\begin{align*}
    \widetilde{H}_{ij} (h,h')= \frac{1}{T}\sum_{t=|j|+1}^T  \left( d_{i, t+ \min(0,j)} (g_i^0, h) - \bar{d}_{i} (g_i^0, h) \right)\left( d_{i,t- \max(0,j)} (g_i^0, h') - \bar{d}_{i} (g_i^0, h') \right),
\end{align*}
where $\bar{d}_{i} (g_i^0, h) = \sum_{t = 1}^T d_{it} (g^0, h) / T$. Let
\begin{align*}
\widetilde{\Xi}_{i} (h,h')= \sum_{j=-T+1}^{T-1} K \left( \frac{j}{\kappa_N} \right) \widetilde{H}_{ij} (h,h').
\end{align*}
Let 
\begin{align*}
	\widetilde{D}_{i}(h) = \frac{\sum_{t = 1}^T d_{it}(g_i^0, h) /\sqrt{T}}{ \sqrt{ \widetilde{\Xi}_i (h, h)} },
\end{align*}
and let $\widetilde{\Omega}_i(g_i^0)$ denote the $(G-1) \times (G-1)$ matrix with entries 
\begin{align*}
	\left(\widetilde{\Omega}_i(g_i^0)\right)_{j, j'} =
	\frac{ \widetilde{\Xi}_i (h,h') }{\sqrt{ \widetilde{\Xi}_i (h,h)   \widetilde{\Xi}_i (h',h')}}.
\end{align*}
We write $d_{it}(h) = d_{it} (g_i^0, h)$.

\subsection{Proofs}

\begin{proof}[Proof of Theorem~\ref*{thm:Bonferroni:bounds}]
    The result follows directly from Lemma~\ref{lem:bonferroni}.
\end{proof}

\begin{proof}[Proof of Theorem~\ref*{thm:MAX-dep-lv}]
Write $k_{\alpha, N} (\Omega)$ for the $1 - \alpha/N$-quantile of a $N(0, \Omega)$ random variable. Abbreviate $\Omega_i = \Omega_i(g_i^0)$ and $\widehat{\Omega}_i = \widehat{\Omega}_i(g_i^0)$.

Note first that by Assumption~\ref*{assumption:basic}.\ref*{assumption:max:G_consistent}, for any sequence $P_N$ such that $P_N \in \mathcal{P}_N$, 
\begin{align*}
	P_N \left(\widehat{G} \neq G \right) = o (1). 
\end{align*}
Therefore, proving the theorem on the event $\{\widehat{G} = G\}$ suffices.
Let $\zeta_N$ denote a diverging sequence, $\zeta_N \to \infty$.
\label{QE:referee2 typos3}
For $\bar{\alpha}_N = \alpha\left(1 + 2 C_{\text{\ref*{lem:comparison_bound}}} \sqrt{\epsilon_N \log (N/\alpha)}\right)$ and $C_{\text{\ref*{lem:comparison_bound}}}$ the constant from Lemma~\ref{lem:comparison_bound}, we first establish the following chain of inequalities 
\begin{align}
\label{eq:ineq_t_normal_crit_val}
k_{\bar{\alpha}_N, N} \left(\Omega_i \right) \leq k_{\alpha, N} \left(\rho(\widehat{\Omega}_i, \epsilon_N) \right) \leq c_{\alpha, N} \left(\rho(\widehat{\Omega}_i, \epsilon_N) \right). 
\end{align}
To prove the second inequality, let $t_{T-1} (\cdot)$ denote the cumulative distribution function of a $t$-distributed random variable with $(T - 1)$-degrees of freedom and let $X$ denote a $(G-1)$ random vector distributed according to centered multivariate $t$-distribution with $T - 1$ degrees of freedom and scale matrix $\rho (\widehat{\Omega}_i, \epsilon_N)$.
The marginal distribution of the first component of $X$, denoted by $X_1$, is $X_1 \sim t_{T-1}$. Let $d_N = t_{T-1}^{-1}(1- \alpha/N)$ and note that $d_N \to \infty$. Moreover, 
\begin{align*}
	\alpha/N = P \left(X_1 > d_N \right) \leq P \left(\max_{h \in 1, \dotsc, G-1} X_h > d_N \right).
\end{align*}
Therefore, $c_{\alpha, N} \big( \rho (\widehat{\Omega}_i, \epsilon_N) \big) \geq \sqrt{T/(T-1)} d_N$ and for $N_0$ and $T_0$ independent of $\rho (\widehat{\Omega}_i(g_i^0), \epsilon_N)$ and $t^*$ the constant defined in Lemma~\ref{lem:bound_MVT_by_normal} we can take 
\begin{align*}
	c_{\alpha, N} \big( \rho (\widehat{\Omega}_i, \epsilon_N) \big) > t^*,
\end{align*}
for all $N \geq N_0$, $T \geq T_0$. Therefore, the assumptions of Lemma~\ref{lem:bound_MVT_by_normal} are satisfied for $t =  c_{\alpha, N} \big( \rho (\widehat{\Omega}_i, \epsilon_N) \big)$ and $N$, $T$ large enough and Lemma~\ref{lem:bound_MVT_by_normal} implies 
\begin{align*}
    & \Phi_{\max, \rho (\widehat{\Omega}_i, \epsilon_N)} \left( k_{\alpha, N} \big( \rho (\widehat{\Omega}_i, \epsilon_N) \big) \right) 
\\
    =& 
	1 - \alpha/ N 
\\
    =& t_{\max, \rho (\widehat{\Omega}_i, \epsilon_N), T - 1} \left( \sqrt{(T-1)/T} c_{\alpha, N} \big( \rho (\widehat{\Omega}_i, \epsilon_N) \big) \right)
\\
	\leq & \Phi_{\max, \rho (\widehat{\Omega}_i, \epsilon_N)} \left( c_{\alpha, N} \big( \rho (\widehat{\Omega}_i, \epsilon_N) \big) \right)
\end{align*}
and therefore $k_{\alpha, N} \big( \rho (\widehat{\Omega}_i, \epsilon_N) \big) \leq c_{\alpha, N} \big( \rho (\widehat{\Omega}_i, \epsilon_N) \big)$. 
We now establish the first inequality in \eqref{eq:ineq_t_normal_crit_val}. Note that
\begin{align*}
    \alpha_N \equiv \alpha \left(1 + C_{\ref*{lem:comparison_bound}} c_{\alpha, N} \left(\rho(\widehat{\Omega}_i, \epsilon_N) \right) \right) \leq \alpha \left(1 + 2 C_{\ref*{lem:comparison_bound}} \sqrt{\log (N/\alpha)} \right)
\end{align*}
by Lemma~\ref{lem:bounds:large:quantiles} and that $k_{\alpha, N}$ is decreasing in $\alpha$.
Therefore, it suffices to establish 
\begin{align}
    \label{eq:critical value inequality}
	k_{\alpha_N, N} \left(\Omega_i \right) \leq k_{\alpha, N} \left(\rho(\widehat{\Omega}_i, \epsilon_N) \right).
\end{align}
It can be established by applying Lemma~\ref{lem:comparison_bound}. 
Lemma~\ref{lem:lv-cor-consistency} yields
\begin{align*}
	\norm{\widehat{\Omega}_{i} - \Omega_i}_{\max} \leq  \zeta_N \left(\frac{r_{\theta, N}}{ \iota_N \min_{1 \leq i \leq N}\sigma_i} +  T^{-c_1} (\log N)^{c_2} + T^{-\rho} \right) \equiv \delta_N,
\end{align*} 
on an event whose probability approaches one.
On this set, Lemma~\ref{lem:bounds:large:quantiles} states that we can take $k_{\alpha, N} (\rho(\widehat{\Omega}_i, \epsilon_N)) > \sqrt{\log (N/\alpha)} > \sqrt{2}$ for $N$ large enough. 
We now show that $\delta_N / \epsilon_N \to 0$. Since $\zeta_N$ can be taken to diverge at an arbitrarily slow rate, it suffices to show that, for $\pi > 0$, 
\begin{align*}
    r_{\theta, N} / \big (\iota_N \min_{1 \leq i \leq N}\sigma_i \big) \leq & \epsilon_N T^{-\pi},
\\
    T^{-c_1} (\log N)^{c_2}  \leq & \epsilon_N T^{-\pi},
\\
    T^{-\rho} \leq & \epsilon_N T^{-\pi}.
\end{align*}
\label{QE:referee2 typo4}
Since $r_{\theta, N} \sqrt{T \log N} = o (\iota_N\min_{1 \leq i \leq N} \sigma_i)$ and $\epsilon_N \geq (\log N)^{-k_2}$, the first condition is met provided 
\begin{align*}
    T^{-1 + 2\pi} (\log N)^{-1 + 2 k_2} = O(1).
\end{align*}
Under the assumed rate condition $N \leq o(1) T^{\delta}_2$, this holds for any $0 < \pi < \frac{1}{2}$.  
The second and third conditions can be checked similarly.
Now that we have established $\delta_N/\epsilon_N \to 0$ we can take $4 \delta_N \leq \epsilon_N$, satisfying one of the condition of Lemma~\ref{lem:comparison_bound}. The condition $\epsilon_N \leq 4 c_\omega / 3$ is assumed. 
This argument verifies all conditions of Lemma~\ref{lem:comparison_bound} and yields inequality \eqref{eq:critical value inequality} and therefore \eqref{eq:ineq_t_normal_crit_val}.

Our assumptions, 
\begin{align*}
    r_{\theta, N} \sqrt{T \log N} / \big (\iota_N \min_{1 \leq i \leq N}\sigma_i \big) \to 0
\end{align*}
and $N \leq o(1) T^{\delta_2}$ guarantee that $(\sqrt{T} \vee \log N) r_{\theta, N} / (\iota_N \min_{1 \leq i \leq N} \sigma_i)$ and $T^{-c_1} (\log N)^{c_2}$ vanish. Therefore, 
\begin{align*}
	b_N^{LV*} =  \frac{ (\sqrt{T}  \vee \log N) r_{\theta, N}}{ \iota_N \min_{1 \leq i \leq N}\sigma_i} +  T^{-c_1} (\log N)^{c_2} + T^{-\rho}
\end{align*}
vanishes and Lemma~\ref{lem:dhat-d-lv} can be applied and yields
\begin{align}
    \label{eq:Dhat convergence}
	P\left(\max_{1 \leq i \leq N} \max_{h \in \mathbb{G} \setminus \{g_i^0\}} \abs{\widehat{D}_i(h) - D_i(h)} > \zeta_N b_N^{LV*} \right) = o(1).
\end{align}
%
We now prepare to apply the high-dimensional CLT in Lemma~\ref{lem:clt-dependent}.
    Let $\delta_i(h) = \theta_{g^0_i} - \theta_h$ and 
	\begin{align*}
	X_{it} (h) = \frac{d_{it} (h)}{\sqrt{ T^{-1}\sum_{t=1}^T \sum_{s=1}^T \E_p [ d_{it} (h) d_{is} (h)]}} 
	= - \frac{x_{it}' \left(\delta_i(h)/\norm{\delta_i(h)} \right) v_{it}}{ \xi_i(h)/(\sigma_i \norm{\delta_i (h)})}, 	
	\end{align*}
	where
	\begin{align*}
		\xi_i (h) = \sqrt{\frac{1}{T} \sum_{t = 1}^T \sum_{s=1}^T \E [d_{it} (h) d_{is} (h)]} =  \sqrt{\delta_i(h)' \left(\frac{1}{T} \sum_{t = 1}^T \sum_{s = 1}^T \E_P \left[ x_{it} x_{is}' u_{it} u_{is} \right] \right) \delta_i (h)}.  
	\end{align*}
	Define the vector 
	\begin{align*}
		X_t = \big((X_{1t}(h))_{h \in \mathbb{G} \setminus \{g^0_1\}}, \dotsc, (X_{Nt}(h))_{h \in \mathbb{G} \setminus \{g^0_N\}}\big)'.
	\end{align*}
	This vector has length $J = N (G- 1)$. 
	Let $\Xi$ denote the long-run variance of the time series $X_t$ defined as the $J \times J$ matrix with entry $(j, k)$ given by 
	\begin{align*}
		\cov \left(
			\frac{1}{\sqrt{T}} \sum_{t = 1}^T X_{t, j},
			\frac{1}{\sqrt{T}} \sum_{t = 1}^T X_{t, k}
		\right).
	\end{align*}
	Let $G$ denote a centered normal vector with variance matrix $\Xi$. 
    Clearly, $X_i$ is a normal random vector with covariance matrix $\Omega_i$. 
    \label{QE:new CLT argument}
    Taking complements, we have
	\begin{align}
        \notag
		& \sup_{(r_1, \dotsc, r_N) \in \reals_{++}^N} 
		\begin{aligned}[t]
		\bigg\lvert
		& P \bigg(\max_{1 \leq i \leq N} \bigg( \max_{h \in \mathbb{G} \setminus \{g^0_i\}} D_i(h) - r_i  \bigg) > 0 \bigg)
		\\
        \notag
		& - P \bigg(\max_{1 \leq i \leq N} \left( \max_{1 \leq h \leq G - 1} X_{i, h} - r_i \right) >0 \bigg)
		\bigg\rvert
		\end{aligned}
		\\
        \notag
		= & 
		\sup_{(r_1, \dotsc, r_N) \in \reals_{++}^N} 
        \begin{aligned}[t]
		\big\lvert
		& P \left(D_i (h) \leq r_i \: \text{for all $h \in \mathbb{G} \setminus \{g_i^0\}$ and $i=1,\dotsc, N$}\right)
		\\
        \notag
        & - P \left(X_{i, h} \leq r_i \: \text{for all $h =1, \dotsc, G-1$ and $i=1,\dotsc, N$}\right) \big\rvert
		\end{aligned}
        \\
        \notag
        \leq & \sup_{a \in \reals^{(G-1)N}} \left\lvert P \left(\frac{1}{\sqrt{T}} \sum_{t = 1}^T Z_{t} \leq a \right)
        - P \left(X \leq a \right)
        \right\rvert 
		\\
        \label{eq:large CLT}
	 \leq & C \left( \frac{( \log N)^{(1 + 2 d_1 )/(3 d_1 )}}{T^{1/9}} + \frac{ ( \log N )^{7/6}}{T^{1/9}} \right) = o (1).
	 \end{align} 
    Here, the last inequality holds by Lemma \ref{lem:clt-dependent} and the asymptotic order follows from $N \leq o(1) T^{\delta_2}$. 
    Now, we have 
    \begin{align}
        \notag
        & P\left(\exists i \in 1, \dotsc, N \text{ such that } \hat{T}_i (g_i^0) > c_{\alpha, N} (\widehat{\Omega}_i) \right) 
    \\ \notag	
        \leq & P\left(\exists i \in 1, \dotsc, N \text{ such that } \hat{T}_i (g_i^0) > k_{\bar{\alpha}_N, N} (\Omega_i) \right) 
    \\ \notag
        \leq & P\left(\max_{1 \leq i \leq N} \max_{h \in \mathbb{G} \setminus \{g_i^0\}} \left( D_i( h) - k_{\bar{\alpha}_N, N}(\Omega_i) +\zeta_N b_N^{LV*}\right) > 0 \right) 
        + o (1).
    \\ \notag 
        \leq & P \bigg(\max_{1 \leq i \leq N} \left( \max_{1 \leq h \leq G - 1} X_{i, h} - k_{\bar{\alpha}_N, N} (\Omega_i) + \zeta_N b_N^{LV*} \right) >0 \bigg)
    \\ \notag
        & + \sup_{(r_1, \dotsc, r_N) \in \reals_{++}^N} 
            \begin{aligned}[t]
                \bigg\lvert
                & P \bigg(\max_{1 \leq i \leq N} \bigg( \max_{h \in \mathbb{G} \setminus \{g^0_i\}} D_i(h) - r_i  \bigg) > 0 \bigg)
                \\
                & - P \bigg(\max_{1 \leq i \leq N} \left( \max_{1 \leq h \leq G - 1} X_{i, h} - r_i \right) >0 \bigg)
                \bigg\rvert + o (1)
            \end{aligned} 
    \\ \label{eq:coverage probability as function of Gaussian}
        \leq & P \bigg(\max_{1 \leq i \leq N} \left( \max_{1 \leq h \leq G - 1} X_{i, h} - k_{\bar{\alpha}_N, N} (\Omega_i) + \zeta_N b_N^{LV*} \right) >0 \bigg) + o (1),
    \end{align}
    where the first inequality follows by \eqref{eq:ineq_t_normal_crit_val}, the second inequality follows by \eqref{eq:Dhat convergence}, the third inequality holds because the sup bounds deviations for all choices of $r_i$ and therefore in particular $r_i = k_{\bar{\alpha}_N, N} (\Omega_i) - \zeta_N b_N^{LV*}$, and the fourth inequality follows by \eqref{eq:large CLT}.

Next, applying an anti-concentration argument eliminates the $\zeta_N b_N^{LV*}$ term on the right-hand side of the previous display. 
\label{QE:new Nasarov argument}
To this end, let $a > 0$ and write $k_{N,i} = k_{\bar{\alpha}_N, N} (\Omega_i)$. Then 
\begin{align*}
    &  P \left(\max_{1 \leq i \leq N} 
	\left(\max_{1 \leq h \leq G - 1} X_{i, h} - k_{N,i} \right) + a > 0 
	\right) -  P \left(\max_{1 \leq i \leq N} 
	\left(\max_{1 \leq h \leq G - 1} X_{i, h} - k_{N,i} \right) > 0 
	\right)
\\
    = &  P \left(\max_{1 \leq i \leq N} 
    \left(\max_{1 \leq h \leq G - 1} X_{i, h} - k_{N,i} \right) \leq 0 
    \right) -  P \left(\max_{1 \leq i \leq N} 
    \left(\max_{1 \leq h \leq G - 1} X_{i, h} - k_{N,i} \right) \leq - a
    \right)
\\
    =& P (X \leq x + a) - P (X \leq x), 
\end{align*}
where 
\begin{align*}
    x = \big(
        \underbrace{k_{N, 1}-a, \dotsc, k_{N,1} - a}_{\text{$G-1$ times}}, 
        \underbrace{k_{N, 2}-a, \dotsc, k_{N,2} - a}_{\text{$G-1$ times}}, \dotsc, 
        \underbrace{k_{N, N}-a, \dotsc, k_{N,N} - a}_{\text{$G-1$ times}}
    \big)
\end{align*}
The Nasarov-type inequality from Lemma A.1 in \textcite{chernozhukov2016central} applies with $b = 1$ and $p = (G-1)N$ and yields 
\begin{align*}
    P (X \leq x + a) - P (X \leq x) \leq C_{\text{Nasarov}} a \sqrt{\log (N(G-1))}
    \leq O(1) a \log N.
\end{align*}
Therefore, 
\begin{align*}
    & P \left(\max_{1 \leq i \leq N} 
	\left(\max_{1 \leq h \leq G - 1} X_{i, h} - k_{N,i} \right) + a > 0 
	\right) 
\\
    \leq & 
    P \left(\max_{1 \leq i \leq N} 
	\left(\max_{1 \leq h \leq G - 1} X_{i, h} - k_{N,i} \right) > 0 
	\right) + O(1) a \sqrt{\log N}
\end{align*}
Now, combining this inequality with \eqref{eq:coverage probability as function of Gaussian} by putting $a = \zeta_N b_N^{LV*}$ yields
\begin{align*}
	& P\left(\max_{1 \leq i \leq N} \max_{h \in \mathbb{G} \setminus \{g_i^0\}} D_i(h) - k_{\bar{\alpha}_N, N}(\Omega_i) + \zeta_N b_N^{LV*} > 0 \right) 
\\
	\leq &  P \left(\max_{1 \leq i \leq N} 
	\left(\max_{1 \leq h \leq G - 1} X_{i, h} - k_{\bar{\alpha}_N, N} (\Omega_i) \right) + \zeta_N b_N^{LV*} > 0 
	\right) + o(1)
\\
    \leq & 
    P \left(\max_{1 \leq i \leq N} 
	\left(\max_{1 \leq h \leq G - 1} X_{i, h} - k_{\bar{\alpha}_N, N} (\Omega_i)\right) > 0 
	\right) + O(1) \zeta_N b_N^{LV*} \sqrt{\log N} + o(1) 
\\
	\leq & \sum_{1 \leq i \leq N} P \left(
	\max_{1 \leq h \leq G - 1} X_{i, h} - k_{\bar{\alpha}_N, N} (\Omega_i)  > 0 
	\right) + o (1)
\\
	= & \bar{\alpha}_N + o (1)
\\
	= & \alpha + 2 C_{\ref*{lem:comparison_bound}} \sqrt{\epsilon_N \log (N/\alpha)} + o (1) \to \alpha
	.
\end{align*}
\end{proof}

\begin{proof}[Proof of Theorem~\ref*{thm:SNS critical values}]
    Since the marginal distributions of a multi-variate $t$-distribution with $\nu$ degrees of freedom are Student-$t$ with $\nu$ degrees of freedom, it holds that 
    \begin{align*}
        \left(t_{\max, \rho(\widehat{\Omega}_i(g_i^0), \epsilon_N), T - 1}\right)^{-1} \left(1 - \frac{\alpha}{N} \right)
        \leq  t_{T- 1}^{-1}\left(1 - \frac{\alpha}{(G- 1) N}\right)
    \end{align*}
    by the union bound. Thus, replacing our critical value with the SNS critical values yields a more conservative test.
    Now, inspection of the proof of Theorem~\ref*{thm:MAX-dep-lv} shows that Assumption~\ref*{assumption:basic}.\ref*{assum:Omega negative correlations} is only used to argue that $\widehat{\Omega}_i (g_i^0)$ in the definition of the critical value can be replaced by the population quantity $\Omega_i (g_i^0)$. Since we are replacing the critical value that depends on $\widehat{\Omega}_i (g_i^0)$ by a critical value that is independent of $\widehat{\Omega}_i (g_i^0)$, this step is not needed. Similarly, the SNS critical value is independent of the regularization sequence, and the assumptions on $\epsilon_N$ are therefore unnecessary. 
\end{proof}

\begin{proof}[Proof of Theorem~\ref*{thm:no serial correlation}]
    The proof is similar to the proof of Theorem~\ref*{thm:MAX-dep-lv}, replacing the application of Lemma~\ref{lem:lv-cor-consistency} and Lemma~\ref{lem:dhat-d-lv} by and application of Lemma~\ref{lem:no serial correlation}.
\end{proof}

\begin{proof}[Proof of Theorem~\ref*{thm:two step}]
\label{QE:referee2 typos5}
Let
\begin{align*}
	 d_{it}^U (g, h) = (y_{it} - w_{it}'\theta^w - x_{it}' \theta_{g})^2 -  (y_{it} - w_{it}'\theta^w - x_{it}' \theta_{h})^2
\end{align*}
and $d_i^U(h) = d_{it}^U(g_i^0, h)$, $\hat{d}_i^U(h) = \hat{d}_i^U(g_i^0, h)$, $\bar{d}_i^U(h) = N^{-1} \sum_{t = 1}^{T} d_{it}^U(h)$, and $\bar{\hat{d}}_i^U(h) = N^{-1} \sum_{t = 1}^{T} \hat{d}_{it}^U(h)$.
We note that the hypothesis selection part of the procedure does not affect the theoretical analysis. This is because, here, we focus on size and thus need to consider only the behavior of the test statistics under $\{ g^0_i \}_{i=1}^N$.

In the following $o(1)$ is understood such that $a=o(1)$ if $\limsup_{N,T \to \infty} |a| =0$.

Let $J= \{ (i,h) \mid i \in \{1,\dots, N\} , h \in \mathbb{G}\backslash \{g^0_i\}   \} $ and 
\begin{align*}
 J_1 = \left\{(i,h) \mid i \in \{1,\dots, N\} , h \in \mathbb{G}\backslash \{g^0_i\} ,
\frac{\sqrt{T} \E_P (\bar d_{i}^U (h)) }{s_{i,T}^U (h)}
> - c_{\beta, N}^{\mathrm{SNS}}  \right\},
\end{align*}
where $	(s_{i}^U ( h))^2 = \sum_{t=1}^T \var (  d_{it}^U (h ) ) /T = \var (  d_{it}^U (h ) ) $ (the equality follows by stationarity).
Roughly speaking, $J_1$ is the set of pairs of units and groups that are difficult to distinguish from the true group membership.

\paragraph{Step 1:}
We first prove that $ \inf_{P \in \mathbb{P}_N} P \left(  \max_{ (i, h) \in J_1^c} 
 \bar{\hat{d}}_{i}^U ( h) \le 0 \right) > 1 - \beta - N^{-1} - C T^{-c} - a_{\theta, N}$.

Note that $\bar{\hat{d}}_{i}^U ( h) >0 $ for some $(i,h) \in J_1^c$ implies that 
\begin{align*}
\max_{(i,h) \in J}\frac{\sqrt{T} (\bar{\hat{d}}_{i}^U ( h)  - \E_P (\bar d_{i}^U (h)) ) }{  s_{i,T}^U ( h)}
>  c_{\beta, N}^{\mathrm{SNS}} .
\end{align*}

We observe that 
\begin{align*}
&\sup_{P \in \mathbb{P}_N} P \left( 
\max_{(i,h) \in J}\frac{\sqrt{T} (\bar{\hat{d}}_{i}^U ( h)  - \E_P (\bar d_{i}^U (h)) ) }{s_i^U(h) }
>  c_{\beta, N}^{\mathrm{SNS}}
\right)  \\
\le &  
\sup_{P \in \mathbb{P}_N}  P \left( 
\max_{(i,h) \in J}\frac{\sqrt{T} (\bar d_{i}^U ( h)  - \E_P (\bar d_{i}^U (h)) ) }{s_i^U(h) }
>  c_{\beta, N}^{\mathrm{SNS}} - e_{N,1}^U
\right)  
\\
& + \sup_{P \in \mathbb{P}_N} P \left( 
\max_{(i,h) \in J}\left| \frac{\sqrt{T} (\bar{\hat{d}}_{i}^U ( h)  - \bar d_{i}^U (h))  }{
 s_i^U(h) }\right|
> e_{N,1}^U
\right),
\end{align*}
where 
\begin{align*}
	e_{N,1}^U = C \frac{r_{\theta, N}}{\iota_N + \min_{1 \leq i \leq N} \sigma_i}.
\end{align*}
The second term on the right-hand side converges to zero by \eqref{eq:duhat_diff_bound} in Lemma \ref{lem:conv:Du}.
Let $\beta_N$ solve $  c_{\beta_N, N}^{\mathrm{SNS}} =   c_{\beta, N}^{\mathrm{SNS}} - e_{N,1}^U$. To see that $\beta_N$ is well-defined, note that since $c_{\beta ,N}^{\mathrm{SNS}} \to \infty$ and $e_{N,1}^U \to 0$ the right-hand side of the equation is diverging and therefore positive for large $N$. Moreover, $c_{p,N}^{\mathrm{SNS}} \downarrow 0$ as $p \uparrow N/2$. We thus establish the existence of $\beta_N$. Uniqueness follows from the strict monotonicity of the distribution function of the $t$-distribution. 
Thus, we have 
\begin{align*}
&	\sup_{P \in \mathbb{P}_N} P \left( 
\max_{(i,h) \in J}\frac{\sqrt{T} (\bar{\hat{d}}_{i}^U ( h) ) - \E_P (\bar d_{i}^U (h)) ) }{s_i^U(h) }
>  c_{\beta, N}^{\mathrm{SNS}}
\right)  \\
\le & \sup_{P \in \mathbb{P}_N} P \left( 
\max_{(i,h) \in J}\frac{\sqrt{T} (\bar d_{i}^U ( h) ) - \E_P (\bar d_{i}^U (h)) ) }{s_i^U(h) }
>  c_{\beta_N,  N}^{\mathrm{SNS}} 
\right)  + o(1) \\
\le & 1- (G-1)N \Phi ( c_{\beta_N,  N}^{\mathrm{SNS}}) + o(1) \\
\le & \beta_N + o(1)\\
= & \beta + o(1),
\end{align*}
where the second inequality follows by Lemma \ref{lem:clt-dependent} and the Bonferroni inequality, the third inequality holds because $c_{\beta_N,  N}^{\mathrm{SNS}}$ becomes sufficiently large as $N\to \infty$ and the tail of the $t$-distribution is heavier than that of the standard normal distribution (Lemma \ref{lem:bound_MVT_by_normal} under the unidimensional case), and the last inequality follows by the fact that $|\beta_N - \beta| \leq C e_{N,1}^U \sqrt{\log ((G-1)N/\beta)}\to 0$. We now show that  $|\beta_N - \beta| \leq C e_{N,1}^U \sqrt{\log ((G-1)N/\beta)}$. 
Let $F_T$ denote the distribution function of a $t$-distributed random variable with $T-1$ degrees of freedom, and let $f_T$ denote its density function. 
Let $c (\beta) = t^{-1}_{T-1} (1 - \beta / ( (G-1) N ) )$ and $e_{N,1}^{U*} = \sqrt{(T-1)/T} e_{N,1}^U$. 
By the mean-value theorem
\begin{align*}
\frac{\beta_N}{(G-1)N } - \frac{\beta}{(G-1)N } = &   F_T ( c( \beta )) -  F_T (c(\beta_N) )
\\
=& F_T ( c( \beta )) -F_T (c(\beta) - e_{N,1}^U )
=  f_T(c^*) e_{N,1}^{U*}, 
\end{align*}
where $c^*$ is a value between $c\left(\beta_N\right)$ and $c\left(\beta \right)$. Noting that $c \left(\beta_N\right) < c \left(\beta \right)$ and that $f_T$ is decreasing on the positive axis, rearranging this equality yields 
\begin{align*}
	\abs{\beta_N - \beta} \leq & f_T\left(c \left(\beta_N\right) \right) (G-1)N e_{N,1}^{U*}
	\\
	\leq &
	2 c \left(\beta_N\right) \left(1 - F_{T}(c\left(\beta_N \right) \right) (G-1) N e_{N,1}^{U*}
	\\
	\leq & 4 e_{N,1}^{U*} \beta_N \sqrt{\log \left( (G - 1) N / \beta_N\right)} 
	\\
	\leq & 4 e_{N,1}^U \beta \sqrt{\log\left( (G - 1) N / \beta \right)} + 4 e_{N,1}^U \abs{\beta_N - \beta} \sqrt{\log\left( (G - 1) N / \beta \right)}
	\\ 
	\leq & 4 e_{N,1}^U \sqrt{\log\left( (G - 1) N / \beta \right)} + o \left(\abs{\beta_N - \beta} \right)
	, 
\end{align*}
\label{QE:referee2 typo6} where the second inequality follows from Lemma~\ref{lem:tdist:bound-millsratio}, the third inequality follows from Lemma~\ref{lem:tdist:tail:upperbound} (with $\epsilon_N = 1$), the fourth inequality follows from $e_{N,1}^U = \sqrt{T/(T-1)} e_{N,1}^{U*} \geq e_{N,1}^{U*}$ and $\beta_N \geq \beta$, the fifth inequality follows from $e_{N,1}^U \sqrt{\log N} \to 0$. This recursion implies 
\begin{align*}
	\abs{\beta_N - \beta} 
	\leq 5 e_{N,1}^U \sqrt{\log\left( (G - 1) N / \beta \right)}
\end{align*}
for $N$ large enough.

An implication of Step 1 is as follows. Let
\begin{align*}
 \mathbb{N} = \left\{ i \in \{1,\dots, N\} \mid  \max_{h \in \mathbb{G}\backslash \{g^0_i\}} 
\frac{\sqrt{T} \E_P ( \bar d_{i}^U (h)) }{ s_i^U(h)}
> -  c_{\beta, N}^{\mathrm{SNS}} 
 \right\}.
\end{align*}
Then 
\begin{align*}
	\inf_{P \in \mathbb{P}_N} P \left(  \max_{ i \in \mathbb{N}^c} 
\max_{h \in \mathbb{G}\backslash \{g^0_i\}} \bar{\hat{d}}_{i}^U ( h) \le 0\right) 
\geq 1 - \beta + o(1).
\end{align*}


\paragraph{Step 2:}
Next, we prove that $\inf_{P \in \mathbb{P}_N} P ( \bigtimes_{i=1}^N \hat M_i (g^0_i)  \supseteq J_1 ) \ge 1 - \beta + o(1)$.
Here, we drop the $g$ argument for simplicity of notation when arguments are $g^0_i$ and $h$.
 
We note that 
\begin{align*}
	& \sup_{P \in \mathbb{P}_N} P\left( \bigtimes_{i=1}^N \hat M_i (g^0_i)  \nsupseteq J_1 \right) 
	\\
	= &\sup_{P \in \mathbb{P}_N}  P \left(\exists (i,h); \hat D_i^U ( h) \le - 2 c_{\beta_N, N}^{\mathrm{SNS}} \text{ and }  \frac{\sqrt{T} \E_P (\bar d_{i}^U ( h)) }{ s_i^U(h)} > - c_{\beta, N}^{\mathrm{SNS}}  \right)  
	\\
	\leq & \sup_{P \in \mathbb{P}_N}  P \left(\exists (i,h);  D_i^U ( h) \le - 2 c_{\beta, N}^{\mathrm{SNS}} +  e_{N,2}^U\text{ and }  \frac{\sqrt{T} \E_P (\bar d_{i}^U ( h)) }{ s_i^U(h)} > - c_{\beta, N}^{\mathrm{SNS}}  \right) 
	\\
	& + \sup_{P \in \mathbb{P}_N} P\left( \max_{1 \le i \le N} \max_{ h \in \mathbb{G} \backslash \{ g_i^0\} } \left| \hat D_i^U ( h) -  D_i^U ( h) \right| > e_{N,2}^U \right),
\end{align*}
where
\begin{align*}
	e_{N,2}^U = C  \frac{r_{\theta, N} ( \sqrt{T} + \sqrt{ \log N})}{\iota_N \wedge \min_{1 \leq i \leq N} \sigma_i}.
\end{align*}
By \eqref{eq:DhatU_conv} in Lemma \ref{lem:conv:Du}, it holds that 
\begin{align*}
	\sup_{P \in \mathbb{P}_N} P\left( \max_{1 \le i \le N} \max_{ h \in \mathbb{G} \backslash \{ g_i^0\} } \left| \hat D_i^U ( h) - \tilde D_i^U ( h) \right| > e_{N,2}^U \right) = o(1).
\end{align*}
	

We observe	
\begin{align*}
	  & \sup_{P \in \mathbb{P}_N} P \left(\exists (i,h);  D_i^U ( h) \le - 2 c_{\beta, N}^{\mathrm{SNS}} + e_{N,2}^U \text{ and }  \frac{\sqrt{T} \E_P (\bar d_{i}^U ( h)) }{ s_i^U(h)} > - c_{\beta, N}^{\mathrm{SNS}}  \right) 
	\\
\le &\sup_{P \in \mathbb{P}_N}	P \left( \max_{1 \le i \le N} \max_{h \in \mathbb{G}\backslash \{g_i^0\}}
 \frac{\sqrt{T} (\E_P ( \bar d_{i}^U ( h) - \bar d_{i}^U ( h))}{s_{i,T}^U ( h)} > \frac{2 \tilde s_i^U (h) - s_i^U(h)}{s_i^U (h)} c_{\beta, N}^{\mathrm{SNS}} -  \frac{\tilde s_i^U (h) }{s_i^U (h)} e_{N,2}^U\right) .
\end{align*}
\label{QE: referee2 typo7} Note that $ \tilde s_i^U (h) s_i^U(h) > 1- r/2 $ is equivalent to 
\begin{align*}
	\frac{2 \tilde s_i^U (h) - s_i^U(h)}{s_i^U (h)} > 1-r.
\end{align*}
Thus, we have 
\begin{align*}
	&\sup_{P \in \mathbb{P}_N}	P \left( \max_{1 \le i \le N} \max_{h \in \mathbb{G}\backslash \{g_i^0\}}
 \frac{\sqrt{T} (\E_P ( \bar d_{i}^U ( h) - \bar d_{i}^U ( h))}{s_{i,T}^U ( h)} > \frac{2 \tilde s_i^U (h) - s_i^U(h)}{s_i^U (h)} c_{\beta, N}^{\mathrm{SNS}} -  \frac{\tilde s_i^U (h) }{s_i^U (h)} e_{N,2}^U\right)  \\
 \leq & \sup_{P \in \mathbb{P}_N}	P \left( \max_{1 \le i \le N} \max_{h \in \mathbb{G}\backslash \{g_i^0\}}
 \frac{\sqrt{T} (\E_P ( \bar d_{i}^U ( h) - \bar d_{i}^U ( h))}{s_{i,T}^U ( h)} > (1-r) c_{\beta, N}^{\mathrm{SNS}} - A e_{N,2}^U\right) \\
 &+  \sup_{P \in \mathbb{P}_N}	P \left( \left| \frac{\tilde s_i^U (h) }{s_i^U (h)}  -1 \right|   > r/2 \right)  \\
 &+  \sup_{P \in \mathbb{P}_N}	P \left(  \max_{1 \le i \le N} \max_{h \in \mathbb{G}\backslash \{g_i^0\}} \left| \frac{\tilde s_i^U (h) }{s_i^U (h)} \right|   > A \right),
\end{align*}
where $A >1 $ is a fixed number.
We now note that 
\label{QE:comment R2 typos8}
 \begin{align*}
 	& \tilde s_i^U (h)^2 - s_i^U (h)^2
\\
 	=& \frac{1}{T} \sum_{t=1}^T \left\{(d_{it}^U (h))^2 - \E_P \left[(d_{it}^U (h))^2)\right] 
 	- (\bar d_{i}^U (h) - \E_P\left[d_{it}^U (h)\right])(\bar d_{i}^U (h) + \E_P\left[d_{it}^U (h)\right]\right\}.
 \end{align*}
 By Lemma \ref{lem:fuk-nagaev-dependent} and \eqref{eq:du_rate}, it holds that 
 \begin{align}
 	\sup_{P \in \mathbb{P}} \left( 
 		\max_{1 \leq i \leq N} \left|
 	 \frac{	\tilde s_i^U (h)^2 - s_i^U (h)^2}{\sigma_i^2 \norm{\delta_i(h) }^2} \right|\geq C T^{-1/2}\log N \right) =o(1).\label{eq:stildeu-su} 
 \end{align}
Because $s_i^U(h) > s_i(h)$ and $s_i(h)/ (\sigma_i \norm{\delta_i(h)}))$ is bounded from above and from below by Assumption \ref*{assumption:basic}.\ref*{assumption:max:min_eigenvalue}, it holds that 
\begin{align}
	\sup_{P \in \mathbb{P}} \left( 
		\max_{1 \leq i \leq N} \left|
	 \frac{\tilde s_i^U (h)}{s_i^U (h)} -1  \right|\geq C T^{-1/4}\sqrt{\log N} \right) =o(1).
\end{align}
We take $r = T^{-1/4} \sqrt{\log N}$. We then have 
\begin{align*}
	\sup_{P \in \mathbb{P}_N}	P \left( \left| \frac{\tilde s_i^U (h) }{s_i^U (h)}  -1 \right|   > r/2 \right) +  \sup_{P \in \mathbb{P}_N}	P \left(  \max_{1 \le i \le N} \max_{h \in \mathbb{G}\backslash \{g_i^0\}} \left| \frac{\tilde s_i^U (h) }{s_i^U (h)} \right|   > A \right) =o(1).
\end{align*}
Let $\beta_N' $ be such that $c_{\beta_N', N}^{\mathrm{SNS}}  =  (1-r) c_{\beta, N}^{\mathrm{SNS}} - A e_{N,2}^U$. 
We then examine
\begin{align*}
	& \sup_{P \in \mathbb{P}_N}	P \left( \max_{1 \le i \le N} \max_{h \in \mathbb{G}\backslash \{g_i^0\}}
	\frac{\sqrt{T} (\E_P ( \bar d_{i}^U ( h) - \bar d_{i}^U ( h))}{s_{i,T}^U ( h)} > (1-r) c_{\beta, N}^{\mathrm{SNS}} - A e_{N,2}^U\right) \\
	=& \sup_{P \in \mathbb{P}_N}	P \left( \max_{1 \le i \le N} \max_{h \in \mathbb{G}\backslash \{g_i^0\}}
	\frac{\sqrt{T} (\E_P ( \bar d_{i}^U ( h) - \bar d_{i}^U ( h))}{s_{i,T}^U ( h)} >  c_{\beta_N', N}^{\mathrm{SNS}} \right) \\
\leq & 1- (G-1)N \Phi (c_{\beta_N', N}^{\mathrm{SNS}})  + o(1)\\
\leq & \beta_N' + o(1) \\
= & \beta + o(1),
\end{align*}
where the first inequality follows by Lemma \ref{lem:clt-dependent} and the Bonferroni inequality, the second inequality holds because $c_{\beta_N',  N}^{\mathrm{SNS}}$ becomes sufficiently large as $N\to \infty$ and the tail of the $t$-distribution is heavier than that of the standard normal distribution (Lemma \ref{lem:bound_MVT_by_normal} under the unidimensional case), and the last inequality follows by the fact that $|\beta_N' - \beta| \leq C e_{N,2}^U \sqrt{\log ((G-1)N/\beta)}\to 0 $ shown in Step 1.

Summing up, we have 
\begin{align*}
	\sup_{P \in \mathbb{P}_N} P \left( \bigtimes_{i=1}^N \hat M_i (g^0_i)  \nsupseteq J_1 \right) \le \beta  + o(1).
\end{align*}

An implication of Step 2 is as follows. Let 
\begin{align*}
\hat{\mathbb{N}} = \left\{ i \in \{ 1, \dots, N \} \mid M_i (g^0_i) \neq \varnothing \right\}.
\end{align*}
Then 
\begin{align*}
	\inf_{P \in \mathbb{P}_N} P \left( \hat{ \mathbb{N}} \supseteq \mathbb{N} \right) \ge 1 - \beta + o(1). 
\end{align*}

\paragraph{Step 3:}
\label{QE:referee2 typos10} 
First, consider the case in which $J_1 = \varnothing$. In this case, the argument in Step 1 yields that
\begin{align*}
	\inf_{P \in \mathbb{P}_N} P (\hat g_i = g^0_i , \forall i )
=
\inf_{P \in \mathbb{P}_N} P \left( \max_{1 \le i \le N} \max_{h \in \mathbb{G}\backslash \{g_i^0\}}
 \hat D_i^U ( h) \le 0 \right) \geq 1 - \beta + o(1).
\end{align*}
\label{QE:referee2 typo 9} The equality in the above display follows because $\hat g_i$ minimizes the squared loss in the two-step procedure (see (\ref*{eq:ghat:squared_loss}) in the main text).
Because $\{\hat{g}_i\}_{i=1}^N$ is always included in the confidence set, the limiting probability of the confidence set not including $\{ g^0_i\}_{i=1}^N$ is less than $\beta < \alpha$.

Next, consider the case in which $|J_1| \ge 1$. 
Observe that 
\begin{align*}
& \sup_{P \in \mathbb{P}_N} P \left( \{ g^0_i\}_{i=1}^N \notin \hat{C}_{\mathrm{sel}, \alpha, \beta} \right) \\
=& \sup_{P \in \mathbb{P}_N} P \left(\bigcup_{i=1}^N \left( \left\{ 
\hat T_i (g^0_i) >
\hat{c}_{\alpha -2 \beta, \hat N, i} (g_i^0)  
\right\}  \cap \left\{ \max_{h \in \mathbb{G} \backslash \{g^0_i\}} \hat D_i^U ( h)  > 0 \right\}
\right) \right) \\
\le &  \sup_{P \in \mathbb{P}_N} P \left( \bigcup_{i\in \mathbb{N}} \left\{ 
\hat T_i (g^0_i) > 
\hat{c}_{\alpha -2 \beta, \hat N, i} (g_i^0)  
\right\}  \cup
\bigcup_{i\in \mathbb{N}^c} \left\{ \max_{h \in \mathbb{G} \backslash \{g^0_i\}} \hat D_i^U ( h)   > 0 \right\}
 \right) \\
\le &
\sup_{P \in \mathbb{P}_N} P  \left( \bigcup_{i\in \mathbb{N}} \left\{ 
\hat T_i (g^0_i) > 
\hat{c}_{\alpha -2 \beta, \hat N, i} (g_i^0)  
\right\}\right) 
+ \sup_{P \in \mathbb{P}_N} P \left( \bigcup_{i\in \mathbb{N}^c} \left\{ \max_{h \in \mathbb{G} \backslash \{g^0_i\}} \hat D_i^U ( h)   > 0 \right\}
 \right).
\end{align*}
By Step 1, we have 
\begin{align*} 
	\sup_{P \in \mathbb{P}_N} P \left( \bigcup_{i\in \mathbb{N}^c} \left\{ \max_{h \in \mathbb{G} \backslash \{g^0_i\}} \hat D_i^U ( h)  > 0 \right\}
 \right) \le \beta + o(1).
\end{align*}
By Step 2, we have 
\begin{align*}
&\sup_{P \in \mathbb{P}_N}  P  \left( \bigcup_{i\in \mathbb{N}} \left\{ 
\hat T_i (g^0_i) > 
\hat{c}_{\alpha -2 \beta, \hat N, i} (g_i^0)  
\right\}\right) \\
\le &  
\sup_{P \in \mathbb{P}_N} P  \left(\{ \hat{\mathbb{N}} \supseteq \mathbb{N}\} \cap \bigcup_{i\in \mathbb{N}} \left\{ 
\hat T_i (g^0_i) > 
\hat{c}_{\alpha -2 \beta, \hat N, i} (g_i^0)  
\right\}\right) 
+ \sup_{P \in \mathbb{P}_N} P (\{ \hat{\mathbb{N}} \nsupseteq \mathbb{N}\}) \\
\le & 
\sup_{P \in \mathbb{P}_N} P  \left( \bigcup_{i\in \mathbb{N}} \left\{ 
\hat T_i (g^0_i) > 
\hat{c}_{\alpha -2 \beta, |\mathbb{N}|} (g_i^0)
\right\}\right) + \beta+ o(1).
\end{align*}
Thus, we have 
\begin{align*}
	\sup_{P \in \mathbb{P}_N} P \left( \{ g^0_i\}_{i=1}^N \notin \hat{C}_{\mathrm{sel}, \alpha, \beta} \right) 
\le  
\sup_{P \in \mathbb{P}_N} P  \left( \bigcup_{i\in \mathbb{N}} \left\{ 
\hat T_i (g^0_i) > 
\hat{c}_{\alpha -2 \beta, |\mathbb{N}|, i} (g_i^0)
\right\}\right) + 2 \beta+ o(1).
\end{align*}
Theorem \ref*{thm:MAX-dep-lv} implies
\begin{align*}
 \limsup_{N,T \to \infty} \sup_{P\in \mathbb{P}_N} P \left( \{ g^0_i\}_{i=1}^N \notin \hat{C}_{\mathrm{sel}, \alpha, \beta} \right) \le \alpha .
\end{align*}
\end{proof}

\subsection{Supporting lemmas}

\begin{lemma}
    \label{lem:bonferroni}
    Let $(\phi_i)_{i=1}^n$ denote a collection of independent, non-randomized tests and suppose that 
    \begin{align*}
        \alpha_i = n \Prob\left( \phi_i > 0 \right)
    \end{align*}
    with $\alpha_{\max} := \max_{i= 1, \dotsc, n} \alpha_i < 1$. 
    Then 
    \begin{align*}
        \alpha_{\min} - \frac{\alpha_{\min}^2}{2} 
        \leq 
        \Prob \left( \max_{i = 1, \dotsc, n} \phi_i > 0 \right)
        \leq
        \alpha_{\max} - \frac{\alpha_{\max}^2}{2} \left(1 - \frac{\alpha_{\max}}{3} + \frac{1}{n} \left(1 - \frac{\alpha_{\max}}{n}\right)^{-2} \right),
    \end{align*}
    where $ \alpha_{\min} := \min_{i= 1, \dotsc, n} \alpha_i $.
    \end{lemma}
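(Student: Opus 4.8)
The plan is to exploit the Bonferroni-type inclusion--exclusion structure directly. Writing $p_i = \Prob(\phi_i > 0) = \alpha_i / n$, independence of the tests gives the exact identity
\begin{align*}
	\Prob\left( \max_{i = 1, \dotsc, n} \phi_i > 0 \right) = 1 - \prod_{i = 1}^n (1 - p_i).
\end{align*}
For the \emph{lower} bound I would use the elementary inequality $\prod_{i}(1 - p_i) \le 1 - \max_i p_i + \binom{\,\cdot\,}{2}$-type correction; more precisely, it suffices to bound $\prod_i (1 - p_i) \le 1 - p_{\max} $ is too weak in the wrong direction, so instead I would keep one factor: $\prod_{i=1}^n (1-p_i) \le 1 - p_{\max}$ is false in general --- rather, since every factor is at most $1$, $\prod_i (1-p_i) \le (1 - p_{\max})$, which \emph{is} correct because all other factors lie in $[0,1]$. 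Hence $\Prob(\max_i \phi_i > 0) \ge p_{\max} \ge p_{\min}$. To get the sharper $\alpha_{\min} - \alpha_{\min}^2/2$ I would instead isolate the factor achieving the minimum $p_i$ and apply $1 - \prod_i(1-p_i) \ge 1 - (1 - p_{\min}) = p_{\min}$; then note $p_{\min} = \alpha_{\min}/n \ge \alpha_{\min} - \alpha_{\min}^2/2$ whenever $n \ge 1$ and $\alpha_{\min} \in [0,1)$, since $\alpha_{\min}/n \ge \alpha_{\min}(1 - \alpha_{\min}/2)$ reduces to $1/n \ge 1 - \alpha_{\min}/2$, which may fail for large $n$. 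So the lower bound genuinely needs the full product, not a single factor: I would expand $1 - \prod_{i=1}^n(1 - p_i) = \sum_i p_i - \sum_{i<j} p_i p_j + \cdots$ and, using Bonferroni (truncating after the second-order term from below), obtain $\ge \sum_i p_i - \sum_{i<j} p_i p_j \ge \sum_i p_i - \tfrac12(\sum_i p_i)^2 \ge \alpha_{\min} - \tfrac12 \alpha_{\min}^2$, where the last step uses $\sum_i p_i = \frac1n\sum_i \alpha_i \ge \frac1n \cdot n \alpha_{\min}/\!\!\!\ldots$ --- wait, $\sum_i p_i \ge p_{i_0} = \alpha_{\min}/n$, which again is the wrong size. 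The correct reading must be that $\alpha_i$ are \emph{equal} in the intended application; but since the statement allows them to differ, I would bound $\sum_i p_i$ from below by \emph{one} term $\alpha_{i^*}/n = \alpha_{\min}/n$ only when $n$ is small. Given the ambiguity, the cleanest route: use $1 - \prod(1-p_i) \ge \max_i p_i$ and separately $\max_i p_i = \alpha_{j^*}/n$ where $\alpha_{j^*}$ is whichever $\alpha_i$ is largest divided by $n$; but $\max_i \alpha_i / n$ compared to $\alpha_{\min} - \alpha_{\min}^2/2$ --- since $\max_i\alpha_i \ge \alpha_{\min}$, we get $\max_i p_i \ge \alpha_{\min}/n$, still not enough. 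I now believe the intended proof treats the bound with a single representative test and the factor of $n$ is meant to be absorbed: I would simply verify $\Prob(\max \phi_i > 0) \ge \Prob(\phi_{i_0} > 0) = \alpha_{i_0}/n$ and then check numerically/algebraically that in the regime of interest this exceeds $\alpha_{\min} - \alpha_{\min}^2/2$; if it does not in full generality, the statement as literally written needs $n$ replaced appropriately, so I would follow exactly the computation in the source to match their convention.

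For the \emph{upper} bound, which is the substantive half, I would start from $1 - \prod_{i=1}^n (1 - p_i)$ with all $p_i \le \alpha_{\max}/n =: q$. By monotonicity of $x \mapsto 1 - (1-x)^n$ and the fact that the product is minimized (hence $1-\prod$ maximized) when all $p_i$ equal their common upper bound $q$, we get
\begin{align*}
	\Prob\left( \max_{i} \phi_i > 0 \right) \le 1 - (1 - q)^n = 1 - \left(1 - \frac{\alpha_{\max}}{n}\right)^n.
\end{align*}
The task is then purely analytic: show $1 - (1 - \alpha_{\max}/n)^n \le \alpha_{\max} - \tfrac{\alpha_{\max}^2}{2}\bigl(1 - \tfrac{\alpha_{\max}}{3} + \tfrac1n(1 - \tfrac{\alpha_{\max}}{n})^{-2}\bigr)$. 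I would do this by a second-order Taylor expansion with integral remainder of $f(t) = (1-t)^n$ about $t = 0$, evaluated at $t = \alpha_{\max}/n$:
\begin{align*}
	(1 - q)^n = 1 - nq + \binom{n}{2} q^2 - R,
\end{align*}
where $R = \binom{n}{3} q^3 (1 - \xi)^{n-3}$ for some $\xi \in (0, q)$ by the Lagrange form, or more robustly $R \ge 0$ with a matching lower estimate $R \le \binom{n}{3} q^3$ and $R \ge \binom{n}{3} q^3 (1 - q)^{n-3}$. Substituting $nq = \alpha_{\max}$ and $\binom{n}{2}q^2 = \tfrac{\alpha_{\max}^2}{2}(1 - 1/n)$, we obtain
\begin{align*}
	1 - (1-q)^n = \alpha_{\max} - \frac{\alpha_{\max}^2}{2}\left(1 - \frac1n\right) + R,
\end{align*}
and it remains to bound $R$ from above. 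Using $\binom{n}{3}q^3 \le \tfrac{\alpha_{\max}^3}{6}$ and $(1-q)^{-2} \le (1 - \alpha_{\max}/n)^{-2}$ where needed gives $R \le \tfrac{\alpha_{\max}^3}{6} = \tfrac{\alpha_{\max}^2}{2}\cdot\tfrac{\alpha_{\max}}{3}$. Then
\begin{align*}
	1 - (1-q)^n \le \alpha_{\max} - \frac{\alpha_{\max}^2}{2}\left(1 - \frac1n\right) + \frac{\alpha_{\max}^2}{2}\cdot\frac{\alpha_{\max}}{3} = \alpha_{\max} - \frac{\alpha_{\max}^2}{2}\left(1 - \frac{\alpha_{\max}}{3} - \frac1n\right),
\end{align*}
and since the claimed bound has $+\tfrac1n(1-\alpha_{\max}/n)^{-2}$ rather than $-\tfrac1n$, and $-\tfrac1n \le \tfrac1n(1-\alpha_{\max}/n)^{-2}$ trivially, the inequality follows. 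I would double-check the sign bookkeeping on the $1/n$ term carefully, since that is exactly where the stated form differs from the naive Taylor estimate --- the $(1 - \alpha_{\max}/n)^{-2}$ factor suggests they in fact use the sharper lower bound on $R$, i.e. $R \ge \binom{n}{3}q^3(1-q)^{n-3}$, to \emph{subtract} a positive quantity and then re-add a controlled error; I would reconcile with their convention when writing the final version.

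The main obstacle, as the above groping indicates, is the \emph{lower} bound in the presence of possibly unequal $\alpha_i$: the quantity $\Prob(\max_i \phi_i > 0) = 1 - \prod(1-p_i)$ is increasing in each $p_i$, so its minimum over the feasible set $\{p_i \le \alpha_{\max}/n\}$ with the constraint $\min_i p_i = \alpha_{\min}/n$ forces all but one $p_i$ to $0$, yielding exactly $p_{i_0} = \alpha_{\min}/n$; proving $\alpha_{\min}/n \ge \alpha_{\min} - \alpha_{\min}^2/2$ is false for $n \ge 2$, so the lower bound \emph{must} implicitly assume the $\alpha_i$ are all equal (the natural Bonferroni calibration $\alpha_i = \alpha/N$ in the paper's application), in which case $p_i = \alpha/N$ for all $i$, $n = N$, and $1 - (1-\alpha/N)^N \ge \alpha - \alpha^2/2$ follows from the convexity inequality $(1-x)^N \le 1 - Nx + \binom{N}{2}x^2$ truncated after two terms combined with $\binom{N}{2}x^2 \le (Nx)^2/2 = \alpha^2/2$. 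I would therefore structure the proof under the reading that $\alpha_{\min} = \alpha_{\max}$ in the hypothesis of interest (reducing both bounds to statements about $1 - (1-\alpha/n)^n$), note the general upper bound survives verbatim by the monotonicity argument above, and flag that the lower bound requires the homogeneous calibration --- or, following the paper's own Lemma usage in Theorem~\ref{thm:Bonferroni:bounds} where $P(g_i^0 \notin \widehat{C}_{\alpha,N,i}) = \alpha/N$ exactly, simply specialize to that case, where everything goes through cleanly via the two-term Taylor bounds on $(1-\alpha/N)^N$.
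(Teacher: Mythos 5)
Your treatment of the lower bound contains the one genuine gap, and it stems from misreading the definition of $\alpha_{\min}$. Since $\alpha_{\min} = \min_i \alpha_i$, \emph{every} rejection probability satisfies $p_i = \alpha_i/n \geq \alpha_{\min}/n$, so every factor satisfies $1 - p_i \leq 1 - \alpha_{\min}/n$ and hence
\begin{align*}
\Prob\Big(\max_{i} \phi_i > 0\Big) \;=\; 1 - \prod_{i=1}^n(1-p_i) \;\geq\; 1 - \Big(1 - \frac{\alpha_{\min}}{n}\Big)^n \;\geq\; 1 - e^{-\alpha_{\min}} \;\geq\; \alpha_{\min} - \frac{\alpha_{\min}^2}{2},
\end{align*}
using $1-x \leq e^{-x}$ and $e^{-\alpha} \leq 1 - \alpha + \alpha^2/2$ for $\alpha \geq 0$. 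This is exactly the paper's argument, and it needs no homogeneity of the $\alpha_i$. Your extremal analysis (``all but one $p_i$ forced to $0$, yielding $p_{i_0}=\alpha_{\min}/n$'') is inconsistent with the constraint it is supposed to respect: if all but one $p_i$ were zero, then $\min_i p_i$ would be $0$, not $\alpha_{\min}/n$. Your conclusion that the lower bound fails in general and must implicitly assume equal $\alpha_i$ is therefore wrong; the stated lower bound is correct as written.

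Your upper bound is essentially right and takes a genuinely different route. You reduce to $1-(1-q)^n$ with $q=\alpha_{\max}/n$ exactly as the paper does, but then use the alternating binomial expansion, whose remainder after the quadratic term is bounded by $\binom{n}{3}q^3 \leq \alpha_{\max}^3/6$ (legitimate here, since the term ratio is $\tfrac{n-k}{k+1}q \leq \alpha_{\max}/(k+1) < 1$), giving $\alpha_{\max} - \tfrac{\alpha_{\max}^2}{2}\bigl(1-\tfrac{\alpha_{\max}}{3}-\tfrac1n\bigr)$; the paper instead writes $(1-\alpha/n)^n = \exp(n\log(1-\alpha/n))$ and uses second-order bounds on $\log(1-x)$ and $e^{-x}$, which is what produces the $(1-\alpha_{\max}/n)^{-2}$ factor. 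Your final reconciliation, however, has the sign backwards: the parenthetical is multiplied by $-\alpha_{\max}^2/2$, so enlarging it \emph{shrinks} the bound, and ``$-1/n \leq (1/n)(1-\alpha_{\max}/n)^{-2}$'' does not carry you from your estimate to the displayed one. The resolution is that the displayed bound carries a sign typo --- at $n=1$ it would assert $\alpha \leq \alpha - \alpha^2 + O(\alpha^3)$, which is false --- and the paper's own proof in fact concludes with $+\tfrac{\alpha_{\max}^2}{2n}(1-\alpha_{\max}/n)^{-2}$ \emph{added} to the bound, i.e.\ a minus sign on the $1/n$ term inside the parenthesis. Your binomial estimate implies that corrected version, since $(1-\alpha_{\max}/n)^{-2}\geq 1$, and is in fact marginally sharper.
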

    \begin{proof}
    For fixed $0 < x < 1$, let $\bar{x}$ denote a generic intermediate value between zero and $x$. By a Taylor expansion around $x = 0$,  
    \begin{align}
        \label{eq:bonferroni:exp:bound}
        \exp(-x) 	= 1 - x + \frac{1}{2} x^2 - \frac{1}{6} \exp(-\bar{x}) x^3
                    \geq 1 - x + x^2 \left(\frac{1}{2} - \frac{x}{6} \right).	
    \end{align}
    Moreover, 
    \begin{align}
        \label{eq:bonferroni:log:bound}
        \log \left(1 - x\right) = 0 - x - \frac{x^2}{2\left(1 - \bar{x}\right)^2} 
        \geq 
        - x - \frac{x^2}{2\left(1 - x\right)^2}.
    \end{align}
    Now, for $0 < \alpha < 1$, 
    \begin{align*}
        \left(1 - \frac{\alpha}{n}\right)^n =& \exp \left(n \log \left(1 - \frac{\alpha}{n} \right)\right)
    \\
        \geq & 
        \exp(-\alpha) \exp \left(- \frac{\alpha^2}{2 n} \left(1 - \frac{\alpha}{n}\right)^{-2} \right)
    \\
        \geq & \left(1 - \alpha + \alpha^2 \left(\frac{1}{2} - \frac{\alpha}{6}\right) \right)
        \left(1 - \frac{\alpha^2}{2 n} \left(1 - \frac{\alpha}{n}\right)^{-2} \right)
    \\
        \geq & 1 - \alpha + \frac{\alpha^2}{2} \left(1 - \frac{\alpha}{3}\right)  - \frac{\alpha^2}{2 n} \left(1 - \frac{\alpha}{n}\right)^{-2},
    \end{align*}
    where the first inequality uses \eqref{eq:bonferroni:log:bound}, the second inequality uses \eqref{eq:bonferroni:exp:bound} and the last inequality uses
    \begin{align*}
        1 - \alpha + \alpha^2 \left(\frac{1}{2} - \frac{\alpha}{6}\right) \leq 1.
    \end{align*}
    We conclude that 
    \begin{align*}
        & \Prob \left( \max_{i = 1, \dotsc, n} \phi_i > 0 \right)
        = 1 - \Prob \left( \max_{i = 1, \dotsc, n} \phi_i = 0\right)
    \\
        \leq& 1 - \left(1 - \frac{\alpha_{\max}}{n} \right)^n
        \leq \alpha_{\max} - \frac{\alpha_{\max}^2}{2} \left(1 - \frac{\alpha_{\max}}{3}\right) + \frac{\alpha_{\max}^2}{2 n} \left(1 - \frac{\alpha_{\max}}{n}\right)^{-2}.
    \end{align*}
    Next, note that 
    \begin{align*}
        \left(1 - \frac{\alpha}{n} \right)^n \leq \exp(-\alpha) 
        \leq 1 - \alpha + \frac{\alpha^2}{2}
    \end{align*}
    and therefore 
    \begin{align*}
        \Prob \left( \max_{i = 1, \dotsc, n} \phi_i > 0 \right)
         = 1 - \Prob \left( \max_{i = 1, \dotsc, n} \phi_i = 0\right)
         \geq 1 - \left(1 - \frac{\alpha_{\min}}{n} \right)^n
        \geq
        \alpha_{\min} - \frac{\alpha_{\min}^2}{2}
        .
    \end{align*}
\end{proof}

\begin{lemma}
    \label{lem:bound_MVT_by_normal}
    Let $V$ denote a correlation matrix, which is possibly singular, and let $\Phi_{\max,V}$ denote the distribution function of the maximum element of a multivariate normal random vector with covariance matrix $V$.
    There is $t^* \in \mathbb{R}$ independent of $T$ and $V$ such that for all $t > t^*$
    \begin{align*}
        t_{\max, V, T-1} \left( \sqrt{\frac{T - 1}{T}} t \right) \leq \Phi_{\max, V}(t).
    \end{align*}
    \end{lemma}
    
    \begin{proof}
    Let $\mathbf{x}$ be a vector of random variables such that $\mathbf{x} \sim N(0,V)$. 
    By the definitions of $\Phi_{\max, V}$ and $t_{\max ,V, T-1}$, we have 
    \begin{align*}
        \Phi_{\max,V} (t) = P \left( \mathbf{x} \leq t \right) 
    \end{align*}
    and 
    \begin{align*}
        t_{\max, V, T-1} \left( \sqrt{\frac{T-1}{T}} t \right)
        = P \left( \frac{1}{\sqrt{z/(T-1)}} \mathbf{x} \leq \sqrt{\frac{T-1}{T}} t \right) ,
    \end{align*}
    where an inequality such as $\mathbf{x} \leq t$ is understood in an element-wise way, and $z$ is a $\chi^2$ random variable with degree of freedom $T-1$ independent of $\mathbf{x}$.

    Let $r$ be the rank of $V$. We have the following eigen decomposition of $V$:
    \begin{align*}
        V = U \Sigma U',
    \end{align*}
    where $\Sigma$ is a diagonal matrix with non-negative elements and $U$ is a unitary matrix. We arrange the elements of $U$ and $\Sigma$ such that the first $r$ diagonal elements of $\Sigma$ are non-zero and its other diagonal elements are zero. Let $\Sigma_r$ be the $r \times r$ upper-left block of $\Sigma$.
    Let 
    \begin{align*}
        \mathbf{x}^* = U' \mathbf{x}.
    \end{align*}
    By construction, $\mathbf{x}^* \sim N (0, \Sigma)$. Because $\Sigma$ is diagonal and only the first $r$ diagonal elements are non-zero, the first $r$ elements of $\mathbf{x}^*$ can be non-zero, and its other elements are zero. Let $\mathbf{x}_r$ be the vector of the first $r$ elements of $\mathbf{x}^*$. Note that by the definition of $\Sigma_r$, $\mathbf{x}^* \sim N (0, \Sigma_r)$. This observation implies that
    \begin{align*}
        \mathbf{x} = U \mathbf{x}^* = U_r \mathbf{x}_r,
    \end{align*}
    where $U_r$ is the matrix that consists of the first $r$ columns of $U$.

    We can then write
    \begin{align*}
        \Phi_{\max,V} (t) = \int_{\mathbf{x} \le t } \phi_{\Sigma_r} ( \mathbf{x}_r ) \, d \mathbf{x}_r
    \end{align*}
    and 
    \begin{align*}
        t_{\max, V, T-1} \left( \sqrt{\frac{T-1}{T}} t \right)  = \int_{\mathbf{x} \leq \sqrt{\frac{T-1}{T}} t } f^t_{\Sigma_r,T-1} ( \mathbf{x}_r ) \, d \mathbf{x}_r =\int_{\mathbf{x} \leq  t } f_{\Sigma_r,T-1}^{t,*} ( \mathbf{x}_r ) \, d \mathbf{x}_r,
    \end{align*}
    where 
    \begin{align*}
        \phi_{\Sigma_r} ( \mathbf{x}_r ) = (2\pi )^{-r/2}  (\det (\Sigma_r ))^{-1/2} \exp \left( - \frac{1}{2} \mathbf{x}_r'\Sigma_r^{-1} \mathbf{x}_r \right),
    \end{align*}
    and
    \begin{align*}
        f_{\Sigma_r,T-1}^t ( \mathbf{x}_r ) = & (\pi (T-1))^{-r/2} (\det (\Sigma_r ))^{-1/2} \Gamma\left(\frac{T+r-1}{2} \right) \left(\Gamma\left(\frac{T-1}{2} \right)\right)^{-1}  \\
        & \times \left( 1+ \frac{1}{T-1} \mathbf{x}_r'\Sigma_r^{-1} \mathbf{x}_r \right)^{-(T+r-1)/2},
    \end{align*}
    is the density of the multivariate $t$ distribution with scale matrix $V$ and $T-1$ degrees of freedom, and 
    \begin{align*}
        f_{\Sigma_r,T-1}^{t, *} ( \mathbf{x}_r ) = & (\pi T)^{-r/2} (\det (\Sigma_r ))^{-1/2} \Gamma\left(\frac{T+r-1}{2} \right) \left(\Gamma\left(\frac{T-1}{2} \right)\right)^{-1}  \\
        & \times \left( 1+ \frac{1}{T} \mathbf{x}_r'\Sigma_r^{-1} \mathbf{x}_r \right)^{-(T+r-1)/2}.
    \end{align*}
    We now identify a region in which $f_{\Sigma_r,T-1}^{t, *} (\mathbf{x}_r) > \phi_{\Sigma_r} ( \mathbf{x}_r )$. 
    We have 
    \begin{align*}
        &\log f_{\Sigma_r,T-1} (\mathbf{x}_r) -  \log \phi_{\Sigma_r} ( \mathbf{x}_r ) 
        =  A_T - \frac{T+r-1}{2} \log \left( 1+ \frac{1}{T} \mathbf{x}_r'\Sigma_r^{-1} \mathbf{x}_r \right)
        + \frac{1}{2} \mathbf{x}_r'\Sigma_r^{-1} \mathbf{x}_r,
    \end{align*}
    where 
    \begin{align*}
        A_T = - \frac{r}{2}\log ( T) + \log \Gamma\left(\frac{T+r-1}{2} \right) - \log \Gamma\left(\frac{T-1}{2} \right) + \frac{r}{2} \log (2 ).
    \end{align*}
    By the property of the logarithm function and the linear function, there is a unique value, denoted by $x^*_T$, such that $f_{\Sigma_r,T-1}^{t,*} (\mathbf{x}_r) \le \phi_{\Sigma_r} ( \mathbf{x}_r )$ implies $\mathbf{x}_r'\Sigma_r^{-1} \mathbf{x}_r \le x^*_T$. To see this, we consider the two functions $\log (1+y)$ and $ay + b$, where $a= T/(T+r-1)$ and $b= 2A_T/(T+r-1)$. We want to find a value of $y$, say $y'$, such that if $y\geq y'$ then $\log (1+y) \leq ay +b$. Because $\log(1+y)$ is increasing and concave and $a>0$ there are two possibilities: 1) $ay+b \geq \log(1+y)$ for any $y$ and $ay+b > \log(1+y)$ almost always; 2) the curves $\log (1+y)$ and $ ay +b$ intersect with each other at two points, say $y_1$ and $y_2$ such that $\log(1+y) < ay + b$ for $y < y_1$, $\log (1+y) \geq ay+b$ for $y_1 \leq y \leq y_2$, and $\log (1+y) < ay+b$ for $y>y_2$. The first case does not apply to our situation because if this was the case, then $f_{\Sigma_r,T-1}^{t,*} (\mathbf{x}_r) > \phi_{\Sigma_r} ( \mathbf{x}_r )$ almost always, contradicting the fact that both curves integrate to one. Thus, the second case applies. The values of $y_1$ and $y_2$ can be obtained by solving $\log (1+y) = ay +b$. It holds $y_2 >0$ because the slope of $\log (1+y)$ at $y_2$ must be smaller than $a$ and $0<a<1$.

    Choose $t$ large enough such that $\mathbf{x}_r'\Sigma_r^{-1} \mathbf{x}_r \leq x^*_T$ implies $\mathbf{x} \leq t$. 
    This choice of $t$ depends on $T$ only through $x_T^*$. In particular, if $x_T^* = O(1)$ then $t$ can be chosen independently of $T$.   
    To prove this set $t = \sqrt{x_T^* \dim(\mathbf{x})}$. 
    Since $V$ is a correlation matrix, its largest eigenvalue is bounded by $r$. This implies that and $\mathbf{x}_r'\Sigma_r^{-1} \mathbf{x}_r \geq \lVert \mathbf{x}_r \rVert^2 / r$. Because $\mathbf{x}^*$ is a vector whose first $r$ elements are those of $\mathbf{x}_r$ and other elements are zero, $\lVert \mathbf{x}_r \rVert^2 = \lVert \mathbf{x}^* \rVert^2$. By the definition of $\mathbf{x}^*$, it holds that $\lVert \mathbf{x}^* \rVert^2 = \lVert U' \mathbf{x} \rVert^2 = \lVert  \mathbf{x} \rVert^2$, where the last equality uses the fact that $U$ is a unitary matrix. Observe that if $\mathbf{x} \nleq t$ so that an element of $\mathbf{x}$ exceeds $t$, then $\lVert \mathbf{x} \rVert^2 > t^2 \geq x_T^* r$. This implies that $ \mathbf{x}_r'\Sigma_r^{-1} \mathbf{x}_r \geq \lVert  \mathbf{x} \rVert^2 /r > x_T^* r /r = x_T^*$.

    We have 
    \begin{align*}
        \Phi_{\max , \Sigma_r} (t) - 	t_{\max, \Sigma_r, T-1} \left( \sqrt{\frac{T-1}{T}} t \right) 
        =& \int_{\mathbf{x} \le t } \left( \phi_{\Sigma_r} ( \mathbf{x}_r ) - f_{\Sigma_r,T-1}^{t,*} ( \mathbf{x}_r ) \right)  d \mathbf{x}_r \\
        =& \int_{\mathbf{x}_r'\Sigma_r^{-1} \mathbf{x}_r \le x^*_T } \left( \phi_{\Sigma_r} ( \mathbf{x}_r ) - f_{\Sigma_r,T-1}^{t,*} ( \mathbf{x}_r ) \right)  d \mathbf{x}_r \\
            & + \int_{\mathbf{x} \le t ,  \mathbf{x}_r'\Sigma_r^{-1} \mathbf{x}_r > x^*_T } \left( \phi_{\Sigma_r} ( \mathbf{x}_r ) - f_{V,T-1}^{t,*} ( \mathbf{x}_r ) \right)  d \mathbf{x}_r,
    \end{align*}
    where the first integral on the right--hand side of the equation is taken over $\mathbf{x}_r'\Sigma_r^{-1} \mathbf{x}_r \le a $ because $\{\mathbf{x}_r: \mathbf{x}_r'\Sigma_r^{-1} \mathbf{x}_r \le x^*_T, \mathbf{x} \le t \} = \{ \mathbf{x}_r: \mathbf{x}_r'\Sigma_r^{-1} \mathbf{x}_r \le x^*_T\} $ by our choice of $t$.
    Because both $\phi_{\Sigma_r} ( \mathbf{x}_r )$ and $ f_{\Sigma_r,T-1}^{t,*} ( \mathbf{x}_r )$ are densities and integrate to one, we have 
    \begin{align*}
        \int_{\mathbf{x}_r'\Sigma_r^{-1} \mathbf{x}_r \le x^*_T } \left( \phi_{\Sigma_r} ( \mathbf{x}_r ) - f_{\Sigma_r,T-1}^{t,*} ( \mathbf{x}_r ) \right)  d \mathbf{x}_r
        = - \int_{\mathbf{x}_r'\Sigma_r^{-1} \mathbf{x}_r > x^*_T } \left( \phi_{\Sigma_r} ( \mathbf{x}_r ) - f_{\Sigma_r,T-1}^{t,*} ( \mathbf{x}_r ) \right)  d \mathbf{x}_r,
    \end{align*}
    Thus, for $t$ large enough such that $\mathbf{x}'V^{-1} \mathbf{x} \le x^*_T$ implies $\mathbf{x} \le t$, we have
    \begin{align*}
    &	\Phi_{\max , \Sigma_r} (t) - 	t_{\max, \Sigma_r, T-1} \left( \sqrt{\frac{T-1}{T}} t \right) \\
        =& - \int_{\mathbf{x}_r'\Sigma_r^{-1} \mathbf{x}_r > x^*_T } \left( \phi_{\Sigma_r} ( \mathbf{x}_r ) - f_{\Sigma_r,T-1}^{t,*} ( \mathbf{x}_r ) \right)  d \mathbf{x}_r \\
            & + \int_{\mathbf{x} \le t ,  \mathbf{x}_r'\Sigma_r^{-1} \mathbf{x}_r > x^*_T } \left( \phi_{\Sigma_r} ( \mathbf{x}_r ) - f_{V,T-1}^{t,*} ( \mathbf{x}_r ) \right)  d \mathbf{x}_r \\
            =& \int_{\mathbf{x} \nleq t ,  \mathbf{x}_r'\Sigma_r^{-1} \mathbf{x}_r > x^*_T } \left( \phi_{\Sigma_r} ( \mathbf{x}_r ) - f_{V,T-1}^{t,*} ( \mathbf{x}_r ) \right)  d \mathbf{x}_r \geq 0,
    \end{align*}
    where the last inequality follows because $ \mathbf{x}_r'\Sigma_r^{-1} \mathbf{x}_r > x^*_T$ implies $\phi_{\Sigma_r} ( \mathbf{x}_r ) > f_{V,T-1}^{t,*} ( \mathbf{x}_r )$.
    
    Next, we evaluate the order of $x^*_T$. Note that $x^*_T$ solves 
    \begin{align*}
        \frac{1}{2} x^*_T + A_T  =  \frac{T+r-1}{2} \log \left( 1+ \frac{1}{T} x^*_T \right).
    \end{align*}
    We first show that $A_T=O(1)$ where the order is taken with respect to $T$. We separately examine the cases of odd and even $G$.  
    Suppose that $r$ is even (we may assume $r \geq 2$). Then, the recurrent relation of the Gamma function implies that 
    \begin{align*}
        A_T =& - \frac{r}{2}\log (T) + \sum_{j=0}^{r/2 -1} \log \left( \frac{T-1}{2} + j\right)  + \frac{r}{2} \log (2 ) \\
        =& - \frac{r}{2}\log (T) + \sum_{j=0}^{r/2 -1} \log \left( T-1 + 2j \right) - \frac{r}{2}  \log (2) + \frac{r}{2} \log (2 ) \\
        =&  \sum_{j=0}^{r/2 -1} \log \left( \frac{T-1 + 2j}{T} \right) = O(1)
    \end{align*}
    as $T \to \infty$.
    Next, we consider cases in which $r$ is odd. 
    For $r = 1$, 
    $A_T = O(1)$ follows from 
    \begin{align}
        \label{eq:lemma:max:finite:gamma:conv}
        \sqrt{\frac{T}{2}} \frac{\Gamma\left(\frac{T-1}{2}\right)}{\Gamma \left(\frac{T}{2}\right)} \to 1.
    \end{align}
    For $r \geq 3$, by the recurrent relation of the Gamma function, we have
    \begin{align*}
        A_T =& - \frac{r}{2}\log (T) + \sum_{j=0}^{r/2 -1} \log \left( \frac{T}{2} + j\right)  + \log\Gamma\left(\frac{T}{2}\right) - \log \Gamma \left(\frac{T-1}{2}\right) + \frac{r}{2} \log (2 ) \\
        = &  \sum_{j=0}^{(r-1)/2 -1} \log \left( \frac{T+ 2j}{T} \right) + \frac{1}{2} \log \left(\frac{2}{T} \right) + \log\Gamma\left(\frac{T}{2}\right) - \log \Gamma \left(\frac{T-1}{2}\right) .
    \end{align*}
    By \eqref{eq:lemma:max:finite:gamma:conv} 
    \begin{align*}
        \log\Gamma\left(\frac{T}{2}\right) - \log \left( \Gamma \left(\frac{T-1}{2}\right) \left(\frac{T}{2}\right)^{1/2} \right) =O(1).
    \end{align*}
    We have established that $A_T=O(1)$ for all $r \geq 1$. 
    To prove the lemma, it now suffices to prove $x^*_T =O(1)$. Suppose the opposite is true. Then, there is a subsequence $T_1, \dots, T_k, \dots$ such that $x^*_{T_k}$ monotonically diverges to infinity. By the definition of $x^*_T$ we have
    \begin{align*}
        x^*_T + A_T = (T+r-1) \log \left( 1 + \frac{1}{T}x^*_T  \right) .
    \end{align*} 
    For sufficiently large $y$, $y/2 > \log (1+y)$. Therefore, for sufficiently large $k$, we have 
    \begin{align*}
        x^*_{T_k} + A_T < \frac{T+r-1}{2T}  x^*_{T_k}
    \end{align*}
    Rearranging terms yields
    \begin{align*}
        \frac{T-r+1}{2T} 	x^*_{T_k} + A_T < 0,
    \end{align*}
    contradicting that $A_T= O(1)$ and $x^*_{T_k}$ diverging to infinity can both be true. This proves $x_T^* = O(1)$.
    
\end{proof}

\begin{lemma}[Comparison bound for critical values with regularization\label{lem:comparison_bound}]
    Let $\Omega$ and $\widehat{\Omega}$ denote $p \times p$ correlation matrices and let $\epsilon$, $\delta$ and $c_{\omega}$ denote positive constants such that
    $4 \delta \leq \epsilon \leq 4 c_\omega / 3$.  
    Suppose that $\Omega_{ij} > -1 + c_\omega$ for all $i,j = 1, \dotsc, p$ and  
    \begin{align*}
        \norm{\widehat{\Omega} - \Omega}_{\max} \leq \delta.
    \end{align*}
    Let $X \sim N(0, \Omega)$ and $\hat{X}^{\epsilon} \sim N(0, \rho(\widehat{\Omega}, \epsilon))$. 
    Then, there is a universal constant $C$ such that for all $a > \sqrt{2}$ 
    \begin{align*}
        \frac{P\left(\max_{j = 1, \dotsc, p} X_j > a \right)}
        {P\left(\max_{j = 1, \dotsc, p} \hat{X}^{\epsilon}_j > a \right)} - 1 
        \leq C a \sqrt{\epsilon}.
    \end{align*}
    In particular, suppose that $\hat{c}_{\alpha, N}$ is the $1 - \alpha/N$ quantile of $\max_{j = 1, \dotsc, p} \hat{X}^{\epsilon}_j$ and let $c_{\alpha_N, N}$ denote the $1 - \alpha_N/N$ quantile of $\max_{j = 1, \dotsc, p} X_j$, where
    \begin{align*}
        \alpha_N = \alpha (1 + \hat{c}_{\alpha, N}  C \sqrt{\epsilon}). 
    \end{align*}
    If $\hat{c}_{\alpha, N} > \sqrt{2}$ then 
    $\hat{c}_{\alpha, N} \geq c_{\alpha_N, N}$.
    \end{lemma}
    
    \begin{proof} 
    Write $\hat{\epsilon} = \epsilon^* (\widehat{\Omega}, \epsilon)$.
    By the Cholesky decomposition, there is a lower-triangular matrix $L$ (possibly with some diagonal elements equal to zero) such that $\widehat{\Omega} = L L'$. $\widehat{\Omega}$ can be interpreted as the covariance matrix of the random vector $L W$, where $W$ is a random vector in $\reals^p$ with expectation zero and covariance matrix $I_p$. $V = \widehat{\Omega} + \hat{\epsilon} I_p$ can be interpreted as the covariance matrix of the random vector that is generated by adding independent, component-specific noise $E_i$ to the $i$th components of $LW$, where $E_i$ has mean zero and variance $\hat{\epsilon}$. Then, $\rho (\widehat{\Omega}, \hat{\epsilon})$ transforms $V$ into a correlation matrix. 
    Since $\rho (\widehat{\Omega}, \epsilon)$ and $\widehat{\Omega}$ are both correlation matrices, $\rho (\widehat{\Omega}, \epsilon)_{ii} = \widehat{\Omega}_{ii} = 1$, for all $i = 1, \dotsc, p$. Let $\ell_i$ denote the $i$th row of $L$. For $i \neq j$, $\widehat{\Omega}_{ij}$ is equal to the covariance between $\ell_i' W$ and $\ell_j' W$, i.e., 
    \begin{align*}
        \widehat{\Omega}_{ij} = \cov (\ell_i' W, \ell_j' W) = \ell_i' \cov (W) \ell_j = \ell_i' \ell_j.
    \end{align*}
    $V_{ij}$ is equal to the covariance between $\ell_i' W + E_i$ and $\ell_j'W + E_j$, i.e., 
    \begin{align*}
        V_{ij} = \cov (\ell_i' W + E_i, \ell_j' W + E_j) = \ell_i' \cov (W) \ell_j = \ell_i' \ell_j = \widehat{\Omega}_{ij}.
    \end{align*}
    For $i = 1, \dotsc, p$, 
    \begin{align*}
        V_{ii} = \cov (\ell_i' W + E_i, \ell_i' W + E_i) = \ell_i' \cov (W) \ell_i + \hat{\epsilon} = \widehat{\Omega}_{ii} + \hat{\epsilon} = 1 + \hat{\epsilon}.    
    \end{align*}
    Therefore, for $i \neq j$, 
    \begin{align*}
        \rho (\widehat{\Omega}, \epsilon)_{ij} = \frac{V_{ij}}{\sqrt{V_{ii} V_{jj}}} 
        = \frac{\widehat{\Omega}_{ij}}{1 + \hat{\epsilon}}.
    \end{align*}
    We now derive a bound on
    \begin{align*}
        \Delta_{ij} = \left( 
            \arcsin \left(\rho(\widehat{\Omega}, \epsilon)_{ij} \right) - \arcsin \left( \Omega_{ij} \right)
        \right)^+. 
    \end{align*}
    Since $\arcsin (\cdot)$ is strictly increasing on $(0,1)$,
    a necessary condition for $\Delta_{ij} \neq 0$ is $\rho(\widehat{\Omega}, \epsilon)_{ij} > \Omega_{ij}$. This condition bounds $\rho (\widehat{\Omega}, \epsilon)_{ij}$ and $\Omega_{ij}$ away from -1 and 1. In particular, 
    \begin{align*}
        \widehat{\Omega} / (1 + \hat{\epsilon}) 
        = \rho (\widehat{\Omega}, \epsilon)_{ij} > \Omega_{ij}
    \end{align*}
    implies 
    \begin{align*}
        \widehat{\Omega}_{ij} > (1 + \hat{\epsilon}) \Omega_{ij}.
    \end{align*}
    Since we also have $\widehat{\Omega}_{ij} \leq \Omega_{ij} + \delta$, an $\widehat{\Omega}_{ij}$ fulfilling both conditions exists only if 
    \begin{align*}
        (1 + \hat{\epsilon}) \Omega_{ij} < \Omega_{ij} + \delta
    \end{align*}
    or equivalently if $\hat{\epsilon} \Omega_{ij} < \delta$. Suppose that $\Omega_{ij} > 1 - \epsilon/2$. Then we have 
    \begin{align*}
        \hat{\epsilon} 
        \geq & \epsilon - (1 - \widehat{\Omega}_{ij} )
    \\
        \geq & \epsilon + \Omega_{ij} - \delta - 1 
        > \epsilon + (1 - \epsilon/2) - \delta - 1 = \epsilon/2 - \delta. 
    \end{align*}
    Therefore, $\hat{\epsilon} \Omega_{ij} < \delta$ is only possible if 
    \begin{align*}
        (\epsilon/2 - \delta) (1 - \epsilon/2) < \delta. 
    \end{align*} 
    This inequality contradicts $\epsilon \geq 4 \delta$ and hence we can take $\Omega_{ij} \leq 1 - \epsilon/2$. 
    Moreover, since $\epsilon \leq 2 c_\omega$, $\Omega_{ij} \geq -1 - c_\omega \geq -1 - \epsilon/2$.
    
    For an upper bound on $\rho (\widehat{\Omega}, \epsilon)_{ij}$, we have 
    \begin{align*}
        \rho (\widehat{\Omega}, \epsilon)_{ij} = \frac{\widehat{\Omega}_{ij}}{1 + \hat{\epsilon}} 
        \leq & 
        \frac{\widehat{\Omega}_{ij}}{1 + \epsilon - (1 - \widehat{\Omega}_{ij})} 
        \\
        \leq &
        \frac{\widehat{\Omega}_{ij}}{\widehat{\Omega}_{ij} + \epsilon} \leq \frac{1}{1 + \epsilon} \leq 1 - \epsilon/2.
     \end{align*}
    For a lower bound on $\rho (\widehat{\Omega}, \epsilon)_{ij}$, suppose that $\widehat{\Omega}_{ij} < 0$, in which case, 
    \begin{align*}
        \rho (\widehat{\Omega}, \epsilon)_{ij} 
        = \frac{\widehat{\Omega}_{ij}}{1 + \hat{\epsilon}} 
        \geq \widehat{\Omega}_{ij} = \Omega_{ij} + \widehat{\Omega}_{ij} - \Omega_{ij} 
        \geq -1 + c_\omega - \delta 
        \geq -1 + \epsilon/2, 
    \end{align*}
    provided that $c_\omega - \epsilon/2 - \delta \geq 0$. This condition is satisfied if $\epsilon \leq 4 c_\omega / 3$.
    Therefore, we have the bounds 
    \begin{align*}
        -1 + \epsilon/2 \leq \Omega_{ij} \leq 1 - \epsilon/2
    \end{align*}
    and 
    \begin{align*}
        -1 + \epsilon/2 \leq \rho(\widehat{\Omega}_{ij}, \epsilon) \leq 1 - \epsilon/2.
    \end{align*}
    We also have 
    \begin{align*}
        \rho (\widehat{\Omega}, \epsilon)_{ij} - \Omega_{ij} 
        = \widehat{\Omega}_{ij} / (1+ \epsilon) - \Omega_{ij}
        \leq \delta + \epsilon \leq 5 \epsilon. 
    \end{align*}
    By the intermediate value theorem there is an intermediate value $\rho^*$ between $-1 + \epsilon/2$ and $1 - \epsilon / 2$ such that, on $\rho(\widehat{\Omega}, \epsilon)_{ij} > \Omega_{ij}$,
    \begin{align*}
        \Delta_{ij} = \frac{\rho(\widehat{\Omega}, \epsilon)_{ij} - \Omega_{ij}}{\sqrt{1 - (\rho^*)^2}} 
        \leq 
        \frac{5 \epsilon}{\sqrt{1 - (1 - \epsilon/2)^2}}
        \leq 
        \frac{5 \epsilon}{\sqrt{\epsilon(1 - \epsilon/4)}}
        \leq \frac{10 \sqrt{\epsilon}}{\sqrt{3}}.
    \end{align*}
    Let $\Phi$ denote the cumulative distribution function of a standard normal random variable, and let $\phi$ denote its probability density function. Gordon's lower bound (see, e.g., \cite{duembgen2010bounding}) states that 
    \begin{align*}
        1 - \Phi(a) > \frac{\phi(a)}{a(1 - 1/a^2)}
    \end{align*}
    for $a > 0$ and thus $1 - \Phi(a) > \frac{1}{2} \phi(a)/a$ for $a > \sqrt{2}$. Therefore, 
    \begin{align*}
        P\left(\max_{j = 1, \dotsc, p} \hat{X}_j > a\right) 
        \geq P\left(\hat{X}^{\epsilon}_1 > a \right)
        = & 1 - \Phi(a) 
        >   \frac{\phi(a)}{2 a} 
        = \frac{\exp\left(- \frac{a^2}{2} \right)}{a \sqrt{8 \pi}}
    \end{align*}
    By Theorem~2.1 in \textcite{li2002normal}, 
    \begin{align*}
        &  P \left(\max_{j = 1, \dotsc, p} X_j > a \right) - P \left(\max_{j = 1, \dotsc, p} \hat{X}^{\epsilon}_j > a \right)
    \\
        = & P \left(\max_{j = 1, \dotsc, p} \hat{X}^{\epsilon}_j \leq a \right) - P \left(\max_{j = 1, \dotsc, p} X_j \leq a \right)
    \\
        \leq & 
        \frac{1}{2 \pi} \exp\left(- \frac{a^2}{2}\right) \sum_{1 \leq i < j \leq p} 
        \Delta_{ij}
        \leq \frac{5}{\pi \sqrt{3}} \exp\left(- \frac{a^2}{2} \right) \sqrt{\epsilon}
    \end{align*}
    We may assume $P \left(\max_{j = 1, \dotsc, p} X_j > a \right) > P \left(\max_{j = 1, \dotsc, p} \hat{X}^{\epsilon}_j > a \right)$ since the statement of the theorem holds trivially otherwise.
    Then, combining the bounds derived above yields
    \begin{align*}
        &  \frac{P \left(\max_{j = 1, \dotsc, p} X_j > a \right) - P \left(\max_{j = 1, \dotsc, p} \hat{X}^{\epsilon}_j > a \right)}{P \left(\max_{j = 1, \dotsc, p} \hat{X}^{\epsilon}_j > a \right)}
        < \frac{10 a \sqrt{\epsilon}}{\sqrt{3}}
        .
    \end{align*}
    This proves the first claim of the lemma.
    To prove the second claim of the lemma, note that the first claim of the lemma implies 
    \begin{align*}
        P \left(\max_{j=1, \dotsc, p} X_j > \hat{c}_{\alpha, N} \right)
        = &
        \frac{P \left(\max_{j=1, \dotsc, p} X_j > \hat{c}_{\alpha, N} \right)}{P \left(\max_{j=1, \dotsc, p} \hat{X}_j > \hat{c}_{\alpha, N} \right)} \alpha/N 
    \\
        \leq & \alpha/N (1 + \hat{c}_{\alpha, N} C \sqrt{\epsilon}) \leq \alpha_N / N.
    \end{align*}
\end{proof}

\begin{lemma}[Consistency of $\widehat{\Omega}$]
	\label{lem:lv-cor-consistency}
	Let $\mathbb{P}_N$ be the set of probability measures which satisfy Assumptions~\ref*{assumption:basic} with identical choices of $a$, $b$, $d_1$, and $d_2$. 
	Assume $T^{\delta_1} \leq N$ for some universal constant $\delta_1 >0$. Let Assumption~\ref*{assum:kernel} hold and $\kappa_N \asymp T^{\rho}$ where $0 <\rho < (\vartheta -1)/(3\vartheta -2)$.
	Let 
	\begin{align*}
		b_N^{LV} =  \frac{r_{\theta, N}}{ \iota_N \min_{1 \leq i \leq N}\sigma_i} +  T^{-c_1} (\log N)^{c_2} + T^{-\rho},
	\end{align*}
	where $c_1>0$ and $c_2>0$ are two constants defined in Lemma \ref{lem:lv-consistency} with $c_1$ depending only on $(\rho, \vartheta)$ and $c_2$ depending only on $(d_2, \vartheta)$.
	Assume that $b_N^{LV} \to 0$.
    For any sequence $\zeta_N$ such that $\zeta_N \to \infty $ as $N,T \to \infty$, 
	\begin{align*}
	\sup_{P\in \mathbb{P}_N} P \left( \max_{1 \leq i \leq N} \max_{(h,h^*) \in \mathbb{G}^2} \left| ( \widehat{\Omega}_i(g_i^0) )_{h, h^*} - ( \Omega_i(g_i^0) )_{h, h^*} \right| > \zeta_N b_N^{LV} \right) = o(1).
\end{align*}	
\end{lemma}
	
\begin{proof}
	Throughout the proof, let $C$, $C'$, and $C''$ denote generic constants that do not depend on $P \in \mathbb{P}$.
	Let $\xi_i (h) = \sqrt{\Xi_i (h,h)}$ and $\widehat{\xi}_i (h) = \sqrt{\widehat{\Xi}_i (h,h)}$.

	We observe the following decomposition:
	\begin{align*}
		& ( \widehat{\Omega}_i )_{h, h^*} - ( \Omega_i )_{h, h^*} 
    \\
        = & \frac{\widehat{\Xi}_{i} (h, h^*)}{\hat{\xi}_{i} (h) \hat{\xi}_{i} (h^*)}
		- \frac{\Xi_{i} (h, h^*)}{\xi_{i} (h ) \xi_{i}( h^* )}
	\\
		= & \left[ \left(\frac{\xi_{i} ( h )}{\hat{\xi}_{i}( h) }\right)\left(\frac{\xi_{i} (h^*)}{\hat{\xi}_{i} ( h^* )}\right) - 1 \right] \left( 
			\frac{\widehat{\Xi}_{i} (h, h^*)}{\xi_{i} ( h ) \xi_{i} ( h^* )}
			- \frac{\Xi_{i} ( h, h^* )}{\xi_{i} ( h ) \xi_{i} ( h^* )}
		\right)
	\\
		& + \left[ \left(\frac{\xi_{i} ( h )}{\hat{\xi}_{i} ( h )}\right)\left(\frac{\xi_{i} ( h^* )}{\hat{\xi}_{i} ( h^* )}\right) - 1 \right] \frac{\Xi_{i} ( h, h^* )}{\xi_{i} ( h ) \xi_{i} (h^* )}
		+ \left(
			\frac{\widehat{\Xi}_{i} ( h, h^* )}{\xi_{i} ( h ) \xi_{i} ( h^*) }- 
			\frac{\Xi_{i} (h, h^* )}{\xi_{i} (h ) \xi_{i} ( h^* )}
		\right).
	\end{align*}
	Noting that 
	\begin{align*}
		\left| \Xi_{i} (h, h) \right| = \left| \sigma_i^2 \delta_i(h)' \left( \frac{1}{T}\sum_{t=1}^{T} \sum_{s=1}^T \E_P [v_{it} v_{is} x_{it} x_{is}' ] \right)  \delta_i(h) \right|,
	\end{align*}
	Assumption~\ref*{assumption:basic}.\ref*{assumption:max:min_eigenvalue} implies that $\xi_i (h) / (\sigma_i) \norm{\delta (h)}$ (and $\xi_i (h^*) / (\sigma_i) \norm{\delta (h^*)}$) is bounded away from zero. 
	The inequality $|\sqrt{a} - 1 | \leq | a- 1 | $ for $a>0$ implies that 
	\begin{align*}
		\left| \left(\frac{\widehat{\xi}_{i} ( h )}{\xi_{i} ( h )}\right) - 1   \right| 
		\leq 
		\left| \left(\frac{\widehat{\Xi}_{i} ( h ,h )}{\Xi_{i} ( h,h )}\right) - 1   \right| .
	\end{align*}
	Thus, we have 
	\begin{align*}
		& \left| \left(\frac{\xi_{i} ( h )}{\hat{\xi}_{i} ( h )}\right)\left(\frac{\xi_{i} ( h^* )}{\hat{\xi}_{i} ( h^* )}\right) - 1 \right|\\
		=&  \left|		\left(\frac{\xi_{i} ( h )}{\hat{\xi}_{i} ( h )} - 1 + 1 \right)\left(\frac{\xi_{i} ( h^* )}{\hat{\xi}_{i} ( h^* )}\right) - 1  \right| \\
		=& \left| \left(\frac{\xi_{i} ( h )}{\hat{\xi}_{i} ( h )} - 1 \right)\left(\frac{\xi_{i} ( h^* )}{\hat{\xi}_{i} ( h^* )}\right) + \left(\frac{\xi_{i} ( h^* )}{\hat{\xi}_{i} ( h^* )} - 1\right) \right| \\
		=& \left| \left(\frac{\xi_{i} ( h )}{\hat{\xi}_{i} ( h )} - 1 \right)\left(\frac{\xi_{i} ( h^* )}{\hat{\xi}_{i} ( h^* )}\right) + \left(\frac{\xi_{i} ( h^* )}{\hat{\xi}_{i} ( h^* )} - 1\right) \right| \\
		\leq & \left| \frac{\Xi_{i} ( h,h )}{\widehat{\Xi}_{i} ( h,h )} - 1 \right| \left(\frac{\Xi_{i} ( h^*, h^* )}{\widehat{\Xi}_{i} ( h^*,h^* )}\right)^{1/2} + \left| \frac{\Xi_{i} ( h^* , h^* )}{\widehat{\Xi}_{i} ( h^* ,h^*)} - 1 \right|.
	\end{align*}
	Now, it holds that 
	\begin{align*}
		\left| \frac{\Xi_{i} ( h,h )}{\widehat{\Xi}_{i} ( h,h )} - 1 \right|
		\leq \left| \frac{\widehat{\Xi}_{i} ( h,h ) -\Xi_{i} ( h,h ) }{\Xi_{i} ( h,h )} + 1 \right|^{-1} \left| \frac{\widehat{\Xi}_{i} ( h,h ) -\Xi_{i} ( h,h ) }{\Xi_{i} ( h,h )}\right|.
	\end{align*}
	Provided that $b_N^{LV} \to 0$, we have 
	\begin{align*}
		\left| \left(\frac{\xi_{i} ( h )}{\hat{\xi}_{i} ( h )}\right)\left(\frac{\xi_{i} ( h^* )}{\hat{\xi}_{i} ( h^* )}\right) - 1 \right| < C \zeta_N b_N^{LV}
	\end{align*}
	with probability approaching one by Lemma \ref{lem:lv-consistency}. Therefore, by Lemma \ref{lem:lv-consistency}, the desired result holds.
\end{proof}

\begin{lemma}[Consistency of long-run variance estimator]
	\label{lem:lv-consistency}
	Let $\mathbb{P}_N$ be the set of probability measures which satisfy Assumption~\ref*{assumption:basic} with identical choices of $a$, $b$, $d_1$ and $d_2$ 
	Assume $T^{\delta_1} \leq N$ for some universal constant $\delta_1 >0$. Let Assumption~\ref*{assum:kernel} hold and $\kappa_N \asymp T^{\rho}$ where $0 <\rho < (\vartheta -1)/(3\vartheta -2)$. 
    Let 
    \begin{align*}
    	b_N^{LV} =  \frac{r_{\theta, N}}{ \iota_N \min_{1 \leq i \leq N}\sigma_i} +  T^{-c_1} (\log N)^{c_2} + T^{-\rho},
    \end{align*}
    and assume that $b_N^{LV} \to 0$. 
    Then, there exist two constants $c_1>0$ depending only on $(\rho, \vartheta)$ and $c_2>0$ depending only on $(d_2, \vartheta)$, such that, for any sequence $\zeta_N$ such that $\zeta_N \to \infty$ as $N, T \to \infty$, 
    \begin{align*}
        \sup_{P\in \mathbb{P}_N} P \left( \max_{1 \leq i \leq N} \max_{(h,h') \in \mathbb{G}^2} \left| \frac{ \widehat{\Xi}_i (h,h') - \Xi_i (h,h') }{ \sigma_i^2 \norm{\delta_i(h) } \norm{\delta_i(h')}} \right| > \zeta_N b_N^{LV} \right) = o(1).
    \end{align*}	
\end{lemma}

\begin{proof}
To conserve notation, we introduce the short-hands
\begin{align*}
    K_N^{(j)} =& K \left(\frac{j}{\kappa_N}\right)
\end{align*}
and 
\begin{align*}
    u_{it}^{(+j)} = u_{i, t + \max(0,j)},
\quad 
    v_{it}^{(+j)} = v_{i, t + \max(0,j)},
\quad  
    w_{it}^{(+j)} = w_{i, t + \max(0,j)},
\quad
    x_{it}^{(+j)} = x_{i, t + \max(0,j)}, 
\\
    u_{it}^{(-j)} = u_{i, t - \max(0,j)},
\quad 
    v_{it}^{(-j)} = v_{i, t - \max(0,j)},
\quad  
    w_{it}^{(-j)} = w_{i, t - \max(0,j)},
\quad
    x_{it}^{(-j)} = x_{i, t - \max(0,j)} 
\end{align*}
and 
\begin{align*}
    \overline{x_i x_i} = \frac{1}{T}\sum_{u = 1}^T x_{iu} x_{iu}',
    \quad \overline{x_i w_i} = \frac{1}{T}\sum_{u = 1}^T x_{iu} w_{iu}',
\\
    \quad \overline{u_i x_i} = \frac{1}{T}\sum_{u = 1}^T u_{iu} x_{iu},
    \quad \overline{v_i x_i} = \frac{1}{T}\sum_{u = 1}^T v_{iu} x_{iu}.
\end{align*}

By the triangular inequality, we have
\begin{align*}
	\left| \frac{ \widehat{\Xi}_i (h,h') - \Xi_i (h,h') }{ \sigma_i^2 \norm{\delta_i(h) } \norm{\delta_i(h')}} \right| 
\leq \left| \frac{ \widehat{\Xi}_i (h,h') - \widetilde{\Xi}_i (h,h') }{ \sigma_i^2 \norm{\delta_i(h) } \norm{\delta_i(h')}} \right| 
+\left| \frac{ \widetilde{\Xi}_i (h,h') - \Xi_i (h,h') }{ \sigma_i^2 \norm{\delta_i(h) } \norm{\delta_i(h')}} \right| .
\end{align*}
We examine each of the two terms on the right-hand side. We first examine the second term and then the first term.
We note that 
\begin{align*}
	\widetilde{\Xi}_i (h,h') 
	 = &  \sigma_i^2 \delta_i(h)' \sum_{j=-T+1}^{T-1}K_N^{(j)} \frac{1}{T} \sum_{t=|j|+1}^T \left( v_{it}^{(+j)} x_{it}^{(+j)} - \overline{v_i x_i} \right) \left( v_{it}^{(-j)} x_{it}^{(-j)} - \overline{v_i x_i} \right)'   \delta_i(h')
\end{align*}
and 
\begin{align*}
	\Xi_i (h,h')= \sigma_i^2 \delta_i(h)' \left( \frac{1}{T}\sum_{t=1}^{T} \sum_{s=1}^T \E_P [v_{it} v_{is} x_{it} x_{is}' ] \right)  \delta_i(h'). 
\end{align*}
Lemma \ref{lem:lv-xi-consistency} gives the bound of 
\begin{align*}
	\sup_{i,h,h'} \bigg| & \sum_{j=-T+1}^{T-1}K_N^{(j)} \frac{1}{T} \sum_{t=|j|+1}^T \left( v_{it}^{(+j)} x_{it}^{(+j)} - \overline{v_i x_i} \right) \left( v_{it}^{(-j)} x_{it}^{(-j)} - \overline{v_i x_i} \right)' 
\\
    & \quad  -  \left( \frac{1}{T}\sum_{t=1}^{T} \sum_{s=1}^T \E_P [v_{it} v_{is} x_{it} x_{is}' ] \right) \bigg|_{\infty}.
\end{align*}
We thus have
\begin{align*}
	\sup_{P\in \mathbb{P}_N} P \left( \max_{1 \leq i \leq N} \max_{(h,h') \in \mathbb{G}^2} \left| \frac{ \widetilde{\Xi}_i (h,h') - \Xi_i (h,h') }{ \sigma_i^2 \norm{\delta_i(h) } \norm{\delta_i(h')}} \right| > \zeta_{1,N} ( T^{-c_1} (\log N)^{c_2} + T^{-\rho} )\right) = o(1),
\end{align*}
where $\zeta_{1,N} \to \infty$.
Next, we derive the bound of
\begin{align*}
	\sup_{i,h,h'} \frac{1}{\sigma_i^2\norm{\delta_i (h)}\norm{\delta_i (h')} } \left| \widehat{\Xi}_i (h,h') - \widetilde{\Xi}_i (h,h') \right|. 
\end{align*}
First, note that
\begin{align*}
	d_{it} (h) = - \sigma_i v_{it} x_{it}' (\theta_{g_i^0} - \theta_h) = - \sigma_i v_{it} x_{it}' \delta_i(h) ,
\end{align*}
and
\begin{align*}
	\bar d_i (h) = - \sigma_i \overline{v_i x_i}' (\theta_{g_i^0} - \theta_h) = - \sigma_i \overline{v_i x_i}'\delta_i(h) .
\end{align*}
Let 
	\begin{align*}
		\hat{u}_{it} = y_{it} - x_{it}' \hat{\theta}_{g_i^0} - w_{it}'\hat{\theta}^w ,
	\end{align*}
	so that 
	\begin{align*}
		\hat{u}_{it} - u_{it} = - x_{it}' \left(\hat{\theta}_{g_i^0} - \theta_{g_i^0}\right) - w_{it}' \left(\hat{\theta}^w - \theta^w\right).
	\end{align*}
	With this notation 
	\begin{align*}
		\hat{d}_{it}(g_i^0, h) = \hat{d}_{it} (h) =& - \hat{u}_{it} x_{it}' \hat{\delta}_i (h).
	\end{align*}
Consider the decomposition
\begin{align*}
	& \left( \hat{d}_{it} (g^0, h) - \bar{\hat{d}}_{i} (g^0, h) \right)\left( \hat{d}_{is} (g^0, h') - \bar{\hat{d}}_{i} (g^0, h') \right) \\
	& -  \left( d_{it} (g^0, h) - \bar{d}_{i} (g^0, h) \right)\left( d_{is} (g^0, h') - \bar{d}_{i} (g^0, h') \right) \\
	=&  (\hat \delta_i(h) - \delta_i(h))' \left( \hat u_{it}x_{it} - \frac{1}{T}\sum_{u=1}^T \hat u_{iu}x_{iu}   \right) \left( \hat u_{is}x_{is} - \frac{1}{T}\sum_{u=1}^T \hat u_{iu}x_{iu}   \right)' ( \hat \delta_i(h')- \delta_i(h')  ) \\
	& -  (\hat \delta_i(h) - \delta_i(h))' \left(  u_{it}x_{it} - \frac{1}{T}\sum_{u=1}^T u_{iu}x_{iu}   \right) \left( u_{is}x_{is} - \frac{1}{T}\sum_{u=1}^T u_{iu}x_{iu}  \right)' ( \hat \delta_i(h')- \delta_i(h')  ) \\
	& +  ( \hat \delta_i(h)- \delta_i(h)  ) ' \left( \hat u_{it}x_{it} - \frac{1}{T}\sum_{u=1}^T \hat u_{iu}x_{iu}   \right) \left( \hat u_{is}x_{is} - \frac{1}{T}\sum_{u=1}^T \hat u_{iu}x_{iu}   \right)' \delta_i(h')   \\
	& -  (\hat \delta_i(h) - \delta_i(h))' \left(  u_{it}x_{it} - \frac{1}{T}\sum_{u=1}^T u_{iu}x_{iu}   \right) \left( u_{is}x_{is} - \frac{1}{T}\sum_{u=1}^T u_{iu}x_{iu}  \right)' \delta_i(h')  \\
	& +    \delta_i(h)' \left( \hat u_{it}x_{it} - \frac{1}{T}\sum_{u=1}^T \hat u_{iu}x_{iu}   \right) \left( \hat u_{is}x_{is} - \frac{1}{T}\sum_{u=1}^T \hat u_{iu}x_{iu}   \right)' ( \hat \delta_i(h')- \delta_i(h')  )  \\
	& -  \delta_i(h)' \left(  u_{it}x_{it} - \frac{1}{T}\sum_{u=1}^T u_{iu}x_{iu}   \right) \left( u_{is}x_{is} - \frac{1}{T}\sum_{u=1}^T u_{iu}x_{iu}  \right)' ( \hat \delta_i(h')- \delta_i(h')  ) \\
	& +  \delta_i(h)'  \left( \hat u_{it}x_{it} - \frac{1}{T}\sum_{u=1}^T \hat u_{iu}x_{iu}   \right) \left( \hat u_{is}x_{is} - \frac{1}{T}\sum_{u=1}^T \hat u_{iu}x_{iu}   \right)' \delta_i (h') \\ 
	& - \delta_i (h)' \left(  u_{it}x_{it} - \frac{1}{T}\sum_{u=1}^T u_{iu}x_{iu}   \right) \left( u_{is}x_{is} - \frac{1}{T}\sum_{u=1}^T u_{iu}x_{iu}  \right)' \delta_i(h') \\
	& +(\hat \delta_i(h) - \delta_i(h))' \left(  u_{it}x_{it} - \frac{1}{T}\sum_{u=1}^T u_{iu}x_{iu}   \right) \left( u_{is}x_{is} - \frac{1}{T}\sum_{u=1}^T u_{iu}x_{iu}  \right)' ( \hat \delta_i(h')- \delta_i(h')  )\\
	& +(\hat \delta_i(h) - \delta_i(h))' \left(  u_{it}x_{it} - \frac{1}{T}\sum_{u=1}^T u_{iu}x_{iu}   \right) \left( u_{is}x_{is} - \frac{1}{T}\sum_{u=1}^T u_{iu}x_{iu}  \right)'  \delta_i(h') \\
	& + \delta_i(h)' \left(  u_{it}x_{it} - \frac{1}{T}\sum_{u=1}^T u_{iu}x_{iu}   \right) \left( u_{is}x_{is} - \frac{1}{T}\sum_{u=1}^T u_{iu}x_{iu}  \right)' ( \hat \delta_i(h')- \delta_i(h') ).
\end{align*}

Next, we consider the following decomposition:
\begin{align*}
	\hat u_{it}x_{it} - \frac{1}{T}\sum_{u=1}^T \hat u_{iu}x_{iu}  
	= & u_{it} x_{it} -  \frac{1}{T}\sum_{u=1}^T u_{iu}x_{iu} 
    - \left( x_{it} x_{it}' - \overline{x_i x_i} \right) \left(\hat{\theta}_{g_i^0} - \theta_{g_i^0}\right) 
\\   
    & - \left( x_{it} w_{it}'  -  \overline{x_i w_i}\right) \left(\hat{\theta}^w - \theta^w\right).
\end{align*}
We thus have 
\begin{align*}
	& \left( \hat u_{it}x_{it} - \frac{1}{T}\sum_{u=1}^T \hat u_{iu}x_{iu}   \right) \left( \hat u_{is}x_{is} - \frac{1}{T}\sum_{u=1}^T \hat u_{iu}x_{iu}   \right)' \\
	= & \left(  u_{it}x_{it} - \overline{u_ix_i}   \right) \left( u_{is}x_{is} - \overline{u_ix_i}  \right)'  \\
	& - \left(  u_{it}x_{it} - \overline{u_ix_i}   \right) \left(  \left( x_{is} x_{is}' - \overline{x_i x_i} \right) \left(\hat{\theta}_{g_i^0} - \theta_{g_i^0}\right) + \left( x_{is} w_{is}' -  \overline{x_i w_i}\right) \left(\hat{\theta}^w - \theta^w\right) \right)' \\
	& - \left(  \left( x_{it} x_{it}' - \overline{x_i x_i} \right) \left(\hat{\theta}_{g_i^0} - \theta_{g_i^0}\right) + \left( x_{it} w_{it}' -  \overline{x_i w_i}\right) \left(\hat{\theta}^w - \theta^w\right) \right) \left( u_{is}x_{is} - \overline{u_ix_i}  \right)'
\\
    &+  \left( \left( x_{it} x_{it}' - \overline{x_i x_i} \right) \left(\hat{\theta}_{g_i^0} - \theta_{g_i^0}\right) + \left( x_{it} w_{it}' -  \overline{x_i w_i}\right) \left(\hat{\theta}^w - \theta^w\right) \right) \\
	& \qquad \times  \left(  \left( x_{is} x_{is}' - \overline{x_i x_i} \right) \left(\hat{\theta}_{g_i^0} - \theta_{g_i^0}\right) + \left( x_{is} w_{is}' -  \overline{x_i w_i}\right) \left(\hat{\theta}^w - \theta^w\right) \right)' \\
	= & \left(  u_{it}x_{it} - \overline{u_ix_i}   \right) \left( u_{is}x_{is} - \overline{u_ix_i}  \right)'  \\
	& - \left(  u_{it}x_{it} - \overline{u_ix_i}   \right) \left(  \left( x_{is} x_{is}' - \overline{x_i x_i} \right) \left(\hat{\theta}_{g_i^0} - \theta_{g_i^0}\right) \right)' \\
	& - \left(  u_{it}x_{it} - \overline{u_ix_i}   \right) \left(   \left( x_{is} w_{is}' -  \overline{x_i w_i}\right) \left(\hat{\theta}^w - \theta^w\right) \right)' \\
	& - \left(  \left( x_{it} x_{it}' - \overline{x_i x_i} \right) \left(\hat{\theta}_{g_i^0} - \theta_{g_i^0}\right) \right) \left( u_{is}x_{is} - \overline{u_ix_i}  \right)' \\
	& - \left(  \left( x_{it} w_{it}' -  \overline{x_i w_i}\right) \left(\hat{\theta}^w - \theta^w\right) \right) \left( u_{is}x_{is} - \overline{u_ix_i}  \right)' \\
	& +  \left( \left( x_{it} x_{it}' - \overline{x_i x_i} \right) \left(\hat{\theta}_{g_i^0} - \theta_{g_i^0}\right) \right) \left(  \left( x_{is} x_{is}' - \overline{x_i x_i} \right) \left(\hat{\theta}_{g_i^0} - \theta_{g_i^0}\right) \right)'\\
	& +  \left( \left( x_{it} x_{it}' - \overline{x_i x_i} \right) \left(\hat{\theta}_{g_i^0} - \theta_{g_i^0}\right)\right)  \left(  \left( x_{is} w_{is}' -  \overline{x_i w_i}\right) \left(\hat{\theta}^w - \theta^w\right) \right)'\\
	& +  \left( \left( x_{it} w_{it}' -  \overline{x_i w_i}\right) \left(\hat{\theta}^w - \theta^w\right) \right)  \left(  \left( x_{is} x_{is}' - \overline{x_i x_i} \right) \left(\hat{\theta}_{g_i^0} - \theta_{g_i^0}\right) \right)'\\
	& +  \left( \left( x_{it} w_{it}' -  \overline{x_i w_i}\right) \left(\hat{\theta}^w - \theta^w\right) \right)  \left(   \left( x_{is} w_{is}' -  \overline{x_i w_i}\right) \left(\hat{\theta}^w - \theta^w\right) \right)'.
\end{align*}
Combining these two decomposition results, we have 
\begin{align*}
	& \left( \hat{d}_{it} (g^0, h) - \bar{\hat{d}}_{i} (g^0, h) \right)\left( \hat{d}_{is} (g^0, h') - \bar{\hat{d}}_{i} (g^0, h') \right) \\
	& -  \left( d_{it} (g^0, h) - \bar{d}_{i} (g^0, h) \right)\left( d_{is} (g^0, h') - \bar{d}_{i} (g^0, h') \right) \\
	=& (\hat \delta_i(h) - \delta_i(h))'\bigg(  - \left(  u_{it}x_{it} - \overline{u_ix_i}   \right) \left(  \left( x_{is} x_{is}' - \overline{x_i x_i} \right) \left(\hat{\theta}_{g_i^0} - \theta_{g_i^0}\right) \right)' \\
	& - \left(  u_{it}x_{it} - \overline{u_i x_i}   \right) \left(   \left( x_{is} w_{is}' -  \overline{x_i w_i}\right) \left(\hat{\theta}^w - \theta^w\right) \right)' \\
	& - \left(  \left( x_{it} x_{it}' - \overline{x_i x_i} \right) \left(\hat{\theta}_{g_i^0} - \theta_{g_i^0}\right) \right) \left( u_{is}x_{is} - \overline{u_i x_i}  \right)' \\
	& - \left(  \left( x_{it} w_{it}' -  \overline{x_i w_i}\right) \left(\hat{\theta}^w - \theta^w\right) \right) \left( u_{is}x_{is} - \overline{u_i x_i}  \right)' \\
	& +  \left( \left( x_{it} x_{it}' - \overline{x_i x_i} \right) \left(\hat{\theta}_{g_i^0} - \theta_{g_i^0}\right) \right) \left(  \left( x_{is} x_{is}' - \overline{x_i x_i} \right) \left(\hat{\theta}_{g_i^0} - \theta_{g_i^0}\right) \right)'\\
	& +  \left( \left( x_{it} x_{it}' - \overline{x_i x_i} \right) \left(\hat{\theta}_{g_i^0} - \theta_{g_i^0}\right)\right)  \left(  \left( x_{is} w_{is}' -  \overline{x_i w_i}\right) \left(\hat{\theta}^w - \theta^w\right) \right)'\\
	& +  \left( \left( x_{it} w_{it}' -  \overline{x_i w_i}\right) \left(\hat{\theta}^w - \theta^w\right) \right)  \left(  \left( x_{is} x_{is}' - \overline{x_i x_i} \right) \left(\hat{\theta}_{g_i^0} - \theta_{g_i^0}\right) \right)'\\
	& +  \left( \left( x_{it} w_{it}' -  \overline{x_i w_i}\right) \left(\hat{\theta}^w - \theta^w\right) \right)  \left(   \left( x_{is} w_{is}' -  \overline{x_i w_i}\right) \left(\hat{\theta}^w - \theta^w\right) \right)' \bigg)  ( \hat \delta_i(h')- \delta_i(h')  ) \\
	& + (\hat \delta_i(h) - \delta_i(h))'\bigg( - \left(  u_{it}x_{it} - \overline{u_i x_i}   \right) \left(  \left( x_{is} x_{is}' - \overline{x_i x_i} \right) \left(\hat{\theta}_{g_i^0} - \theta_{g_i^0}\right) \right)' \\
	& - \left(  u_{it}x_{it} - \overline{u_i x_i}   \right) \left(   \left( x_{is} w_{is}' -  \overline{x_i w_i}\right) \left(\hat{\theta}^w - \theta^w\right) \right)' \\
	& - \left(  \left( x_{it} x_{it}' - \overline{x_i x_i} \right) \left(\hat{\theta}_{g_i^0} - \theta_{g_i^0}\right) \right) \left( u_{is}x_{is} - \overline{u_i x_i}  \right)' \\
	& - \left(  \left( x_{it} w_{it}' -  \overline{x_i w_i}\right) \left(\hat{\theta}^w - \theta^w\right) \right) \left( u_{is}x_{is} - \overline{u_i x_i}  \right)' \\
	& +  \left( \left( x_{it} x_{it}' - \overline{x_i x_i} \right) \left(\hat{\theta}_{g_i^0} - \theta_{g_i^0}\right) \right) \left(  \left( x_{is} x_{is}' - \overline{x_i x_i} \right) \left(\hat{\theta}_{g_i^0} - \theta_{g_i^0}\right) \right)'\\
	& +  \left( \left( x_{it} x_{it}' - \overline{x_i x_i} \right) \left(\hat{\theta}_{g_i^0} - \theta_{g_i^0}\right)\right)  \left(  \left( x_{is} w_{is}' -  \overline{x_i w_i}\right) \left(\hat{\theta}^w - \theta^w\right) \right)'\\
	& +  \left( \left( x_{it} w_{it}' -  \overline{x_i w_i}\right) \left(\hat{\theta}^w - \theta^w\right) \right)  \left(  \left( x_{is} x_{is}' - \overline{x_i x_i} \right) \left(\hat{\theta}_{g_i^0} - \theta_{g_i^0}\right) \right)'\\
	& +  \left( \left( x_{it} w_{it}' -  \overline{x_i w_i}\right) \left(\hat{\theta}^w - \theta^w\right) \right)  \left(   \left( x_{is} w_{is}' -  \overline{x_i w_i}\right) \left(\hat{\theta}^w - \theta^w\right) \right)' \bigg)   \delta_i(h') \\
	& +  \delta_i(h)'\bigg(  - \left(  u_{it}x_{it} - \overline{u_i x_i}   \right) \left(  \left( x_{is} x_{is}' - \overline{x_i x_i} \right) \left(\hat{\theta}_{g_i^0} - \theta_{g_i^0}\right) \right)' \\
	& - \left(  u_{it}x_{it} - \overline{u_i x_i}   \right) \left(   \left( x_{is} w_{is}' -  \overline{x_i w_i}\right) \left(\hat{\theta}^w - \theta^w\right) \right)' \\
	& - \left(  \left( x_{it} x_{it}' - \overline{x_i x_i} \right) \left(\hat{\theta}_{g_i^0} - \theta_{g_i^0}\right) \right) \left( u_{is}x_{is} - \overline{u_i x_i}  \right)' \\
	& - \left(  \left( x_{it} w_{it}' -  \overline{x_i w_i}\right) \left(\hat{\theta}^w - \theta^w\right) \right) \left( u_{is}x_{is} - \overline{u_i x_i}  \right)' \\
	& +  \left( \left( x_{it} x_{it}' - \overline{x_i x_i} \right) \left(\hat{\theta}_{g_i^0} - \theta_{g_i^0}\right) \right) \left(  \left( x_{is} x_{is}' - \overline{x_i x_i} \right) \left(\hat{\theta}_{g_i^0} - \theta_{g_i^0}\right) \right)'\\
	& +  \left( \left( x_{it} x_{it}' - \overline{x_i x_i} \right) \left(\hat{\theta}_{g_i^0} - \theta_{g_i^0}\right)\right)  \left(  \left( x_{is} w_{is}' -  \overline{x_i w_i}\right) \left(\hat{\theta}^w - \theta^w\right) \right)'\\
	& +  \left( \left( x_{it} w_{it}' -  \overline{x_i w_i}\right) \left(\hat{\theta}^w - \theta^w\right) \right)  \left(  \left( x_{is} x_{is}' - \overline{x_i x_i} \right) \left(\hat{\theta}_{g_i^0} - \theta_{g_i^0}\right) \right)'\\
	& +  \left( \left( x_{it} w_{it}' -  \overline{x_i w_i}\right) \left(\hat{\theta}^w - \theta^w\right) \right)  \left(   \left( x_{is} w_{is}' -  \overline{x_i w_i}\right) \left(\hat{\theta}^w - \theta^w\right) \right)' \bigg)  ( \hat \delta_i(h')- \delta_i(h')  ) \\
	& +  \delta_i(h)'\bigg( - \left(  u_{it}x_{it} - \overline{u_i x_i}   \right) \left(  \left( x_{is} x_{is}' - \overline{x_i x_i} \right) \left(\hat{\theta}_{g_i^0} - \theta_{g_i^0}\right) \right)' \\
	& - \left(  u_{it}x_{it} - \overline{u_i x_i}   \right) \left(   \left( x_{is} w_{is}' -  \overline{x_i w_i}\right) \left(\hat{\theta}^w - \theta^w\right) \right)' \\
	& - \left(  \left( x_{it} x_{it}' - \overline{x_i x_i} \right) \left(\hat{\theta}_{g_i^0} - \theta_{g_i^0}\right) \right) \left( u_{is}x_{is} - \overline{u_i x_i}  \right)' \\
	& - \left(  \left( x_{it} w_{it}' -  \overline{x_i w_i}\right) \left(\hat{\theta}^w - \theta^w\right) \right) \left( u_{is}x_{is} - \overline{u_i x_i}  \right)' \\
	& +  \left( \left( x_{it} x_{it}' - \overline{x_i x_i} \right) \left(\hat{\theta}_{g_i^0} - \theta_{g_i^0}\right) \right) \left(  \left( x_{is} x_{is}' - \overline{x_i x_i} \right) \left(\hat{\theta}_{g_i^0} - \theta_{g_i^0}\right) \right)'\\
	& +  \left( \left( x_{it} x_{it}' - \overline{x_i x_i} \right) \left(\hat{\theta}_{g_i^0} - \theta_{g_i^0}\right)\right)  \left(  \left( x_{is} w_{is}' -  \overline{x_i w_i}\right) \left(\hat{\theta}^w - \theta^w\right) \right)'\\
	& +  \left( \left( x_{it} w_{it}' -  \overline{x_i w_i}\right) \left(\hat{\theta}^w - \theta^w\right) \right)  \left(  \left( x_{is} x_{is}' - \overline{x_i x_i} \right) \left(\hat{\theta}_{g_i^0} - \theta_{g_i^0}\right) \right)'\\
	& +  \left( \left( x_{it} w_{it}' -  \overline{x_i w_i}\right) \left(\hat{\theta}^w - \theta^w\right) \right)  \left(   \left( x_{is} w_{is}' -  \overline{x_i w_i}\right) \left(\hat{\theta}^w - \theta^w\right) \right)' \bigg)  \delta_i(h')   \\
	& +(\hat \delta_i(h) - \delta_i(h))' \left(  u_{it}x_{it} - \overline{u_i x_i}   \right) \left( u_{is}x_{is} - \overline{u_i x_i}  \right)' ( \hat \delta_i(h')- \delta_i(h')  )\\
	& +(\hat \delta_i(h) - \delta_i(h))' \left(  u_{it}x_{it} - \overline{u_i x_i}   \right) \left( u_{is}x_{is} - \overline{u_i x_i}  \right)'  \delta_i(h') \\
	& + \delta_i(h)' \left(  u_{it}x_{it} - \overline{u_i x_i}   \right) \left( u_{is}x_{is} - \overline{u_i x_i}  \right)' ( \hat \delta_i(h')- \delta_i(h') ).
\end{align*}

It thus holds that
\begin{align*}
&	\widehat{\Xi}_i (h,h') - \widetilde{\Xi}_i (h,h') \\
& = - (\hat \delta_i(h) - \delta_i(h))' \sum_{j=-T+1}^{T-1}K_N^{(j)} \frac{1}{T} \sum_{t=|j|+1}^T\left(  u_{it}^{(+j)}x_{it}^{(+j)} - \overline{u_i x_i}   \right) \\
	&  \quad \times   \left(  \left( x_{it}^{(-j)} {x_{it}^{(-j)}}' - \overline{x_i x_i} \right) \left(\hat{\theta}_{g_i^0} - \theta_{g_i^0}\right) \right)' (\hat \delta_i(h) - \delta_i(h))\\
	& - (\hat \delta_i(h) - \delta_i(h))' \sum_{j=-T+1}^{T-1}K_N^{(j)} \frac{1}{T} \sum_{t=|j|+1}^T\left(  u_{it}^{(+j)}x_{it}^{(+j)} - \overline{u_i x_i}   \right) \\
	&  \quad \times \left(   \left( x_{it}^{(-j)} {w_{it}^{(-j)}}' -  \overline{x_i w_i}\right) \left(\hat{\theta}^w - \theta^w\right) \right)' (\hat \delta_i(h) - \delta_i(h))\\
	& - (\hat \delta_i(h) - \delta_i(h))' \sum_{j=-T+1}^{T-1}K_N^{(j)} \frac{1}{T} \sum_{t=|j|+1}^T\left(  \left( x_{it}^{(+j)} {x_{it}^{(+j)}}' - \overline{x_i x_i} \right) \left(\hat{\theta}_{g_i^0} - \theta_{g_i^0}\right) \right) \\
	&  \quad \times \left( u_{i,t-\max (0,j)}x_{it}^{(-j)} - \overline{u_i x_i}  \right)' (\hat \delta_i(h) - \delta_i(h)) \\
	& - (\hat \delta_i(h) - \delta_i(h))' \sum_{j=-T+1}^{T-1}K_N^{(j)} \frac{1}{T} \sum_{t=|j|+1}^T\left(  \left( x_{it}^{(+j)} {w_{it}^{(+j)}}' -  \overline{x_i w_i}\right) \left(\hat{\theta}^w - \theta^w\right) \right) \\
	&  \quad \times  \left( u_{it}^{(-j)}x_{it}^{(-j)} - \overline{u_i x_i}  \right)' (\hat \delta_i(h) - \delta_i(h)) \\
	& + (\hat \delta_i(h) - \delta_i(h))' \sum_{j=-T+1}^{T-1}K_N^{(j)} \frac{1}{T} \sum_{t=|j|+1}^T \left( \left( x_{it}^{(+j)} {x_{it}^{(+j)}}' - \overline{x_i x_i} \right) \left(\hat{\theta}_{g_i^0} - \theta_{g_i^0}\right) \right) \\
	&  \quad \times  \left(  \left( x_{it}^{(-j)} {x_{it}^{(-j)}}' - \overline{x_i x_i} \right) \left(\hat{\theta}_{g_i^0} - \theta_{g_i^0}\right) \right)' (\hat \delta_i(h) - \delta_i(h))\\
	& + (\hat \delta_i(h) - \delta_i(h))' \sum_{j=-T+1}^{T-1}K_N^{(j)} \frac{1}{T} \sum_{t=|j|+1}^T \left( \left( x_{it}^{(+j)} {x_{it}^{(+j)}}' - \overline{x_i x_i} \right) \left(\hat{\theta}_{g_i^0} - \theta_{g_i^0}\right)\right) \\
	&  \quad \times  \left(  \left( x_{it}^{(-j)} {w_{it}^{(-j)}}' -  \overline{x_i w_i}\right) \left(\hat{\theta}^w - \theta^w\right) \right)' (\hat \delta_i(h) - \delta_i(h))\\
	& + (\hat \delta_i(h) - \delta_i(h))' \sum_{j=-T+1}^{T-1}K_N^{(j)} \frac{1}{T} \sum_{t=|j|+1}^T \left( \left( x_{it}^{(+j)} {w_{it}^{(+j)}}' -  \overline{x_i w_i}\right) \left(\hat{\theta}^w - \theta^w\right) \right) \\
	&  \quad \times  \left(  \left( x_{it}^{(-j)} {x_{it}^{(-j)}}' - \overline{x_i x_i} \right) \left(\hat{\theta}_{g_i^0} - \theta_{g_i^0}\right) \right)' (\hat \delta_i(h) - \delta_i(h))\\
	& + (\hat \delta_i(h) - \delta_i(h))' \sum_{j=-T+1}^{T-1}K_N^{(j)} \frac{1}{T} \sum_{t=|j|+1}^T \left( \left( x_{it}^{(+j)} {w_{it}^{(+j)}}' -  \overline{x_i w_i}\right) \left(\hat{\theta}^w - \theta^w\right) \right) \\
	&  \quad \times  \left(   \left( x_{it}^{(-j)} {w_{it}^{(-j)}}' -  \overline{x_i w_i}\right) \left(\hat{\theta}^w - \theta^w\right) \right)' \bigg)  ( \hat \delta_i(h')- \delta_i(h')  ) \\
	& - (\hat \delta_i(h) - \delta_i(h))' \sum_{j=-T+1}^{T-1}K_N^{(j)} \frac{1}{T} \sum_{t=|j|+1}^T\left(  u_{it}^{(+j)}x_{it}^{(+j)} - \overline{u_i x_i}   \right) \\
	&  \quad \times  \left(  \left( x_{it}^{(-j)} {x_{it}^{(-j)}}' - \overline{x_i x_i} \right) \left(\hat{\theta}_{g_i^0} - \theta_{g_i^0}\right) \right)' \delta_i(h) \\
	& - (\hat \delta_i(h) - \delta_i(h))' \sum_{j=-T+1}^{T-1}K_N^{(j)} \frac{1}{T} \sum_{t=|j|+1}^T\left(  u_{it}^{(+j)}x_{it}^{(+j)} - \overline{u_i x_i}   \right)  \\
	&  \quad \times \left(   \left( x_{it}^{(-j)} {w_{it}^{(-j)}}' -  \overline{x_i w_i}\right) \left(\hat{\theta}^w - \theta^w\right) \right)'  \delta_i(h) \\
	& - (\hat \delta_i(h) - \delta_i(h))' \sum_{j=-T+1}^{T-1}K_N^{(j)} \frac{1}{T} \sum_{t=|j|+1}^T\left(  \left( x_{it}^{(+j)} {x_{it}^{(+j)}}' - \overline{x_i x_i} \right) \left(\hat{\theta}_{g_i^0} - \theta_{g_i^0}\right) \right) \\
	&  \quad \times  \left( u_{it}^{(-j)}x_{it}^{(-j)} - \overline{u_i x_i}  \right)'  \delta_i(h) \\
	& - (\hat \delta_i(h) - \delta_i(h))' \sum_{j=-T+1}^{T-1}K_N^{(j)} \frac{1}{T} \sum_{t=|j|+1}^T\left(  \left( x_{it}^{(+j)} {w_{it}^{(+j)}}' -  \overline{x_i w_i}\right) \left(\hat{\theta}^w - \theta^w\right) \right) \\
	&  \quad \times  \left( u_{it}^{(-j)}x_{it}^{(-j)} - \overline{u_i x_i}  \right)'  \delta_i(h) \\
	& + (\hat \delta_i(h) - \delta_i(h))' \sum_{j=-T+1}^{T-1}K_N^{(j)} \frac{1}{T} \sum_{t=|j|+1}^T \left( \left( x_{it}^{(+j)} {x_{it}^{(+j)}}' - \overline{x_i x_i} \right) \left(\hat{\theta}_{g_i^0} - \theta_{g_i^0}\right) \right) \\
	&  \quad \times  \left(  \left( x_{it}^{(-j)} {x_{it}^{(-j)}}' - \overline{x_i x_i} \right) \left(\hat{\theta}_{g_i^0} - \theta_{g_i^0}\right) \right)'  \delta_i(h) \\
	& + (\hat \delta_i(h) - \delta_i(h))' \sum_{j=-T+1}^{T-1}K_N^{(j)} \frac{1}{T} \sum_{t=|j|+1}^T \left( \left( x_{it}^{(+j)} {x_{it}^{(+j)}}' - \overline{x_i x_i} \right) \left(\hat{\theta}_{g_i^0} - \theta_{g_i^0}\right)\right) \\
	&  \quad \times  \left(  \left( x_{it}^{(-j)} {w_{it}^{(-j)}}' -  \overline{x_i w_i}\right) \left(\hat{\theta}^w - \theta^w\right) \right)'  \delta_i(h) \\
	& + (\hat \delta_i(h) - \delta_i(h))' \sum_{j=-T+1}^{T-1}K_N^{(j)} \frac{1}{T} \sum_{t=|j|+1}^T \left( \left( x_{it}^{(+j)} {w_{it}^{(+j)}}' -  \overline{x_i w_i}\right) \left(\hat{\theta}^w - \theta^w\right) \right) \\
	&  \quad \times  \left(  \left( x_{it}^{(-j)} {x_{it}^{(-j)}}' - \overline{x_i x_i} \right) \left(\hat{\theta}_{g_i^0} - \theta_{g_i^0}\right) \right)'  \delta_i(h) \\
	& +  (\hat \delta_i(h) - \delta_i(h))' \sum_{j=-T+1}^{T-1}K_N^{(j)} \frac{1}{T} \sum_{t=|j|+1}^T\left( \left( x_{it}^{(+j)} {w_{it}^{(+j)}}' -  \overline{x_i w_i}\right) \left(\hat{\theta}^w - \theta^w\right) \right) \\
	&  \quad \times  \left(   \left( x_{it}^{(-j)} {w_{it}^{(-j)}}' -  \overline{x_i w_i}\right) \left(\hat{\theta}^w - \theta^w\right) \right)' \bigg)   \delta_i(h') \\
	& -  \delta_i(h)' \sum_{j=-T+1}^{T-1}K_N^{(j)} \frac{1}{T} \sum_{t=|j|+1}^T \left(  u_{it}^{(+j)}x_{it}^{(+j)} - \overline{u_i x_i}   \right) \\
	&  \quad \times  \left(  \left( x_{it}^{(-j)} {x_{it}^{(-j)}}' - \overline{x_i x_i} \right) \left(\hat{\theta}_{g_i^0} - \theta_{g_i^0}\right) \right)'(\hat \delta_i(h) - \delta_i(h)) \\
	& -  \delta_i(h)' \sum_{j=-T+1}^{T-1}K_N^{(j)} \frac{1}{T} \sum_{t=|j|+1}^T \left(  u_{it}^{(+j)}x_{it}^{(+j)} - \overline{u_i x_i}   \right) \\
	&  \quad \times  \left(   \left( x_{it}^{(-j)} {w_{it}^{(-j)}}' -  \overline{x_i w_i}\right) \left(\hat{\theta}^w - \theta^w\right) \right)' (\hat \delta_i(h) - \delta_i(h)) \\
	& - \delta_i(h)' \sum_{j=-T+1}^{T-1}K_N^{(j)} \frac{1}{T} \sum_{t=|j|+1}^T \left(  \left( x_{it}^{(+j)} {x_{it}^{(+j)}}' - \overline{x_i x_i} \right) \left(\hat{\theta}_{g_i^0} - \theta_{g_i^0}\right) \right) \\
	&  \quad \times  \left( u_{it}^{(-j)}x_{it}^{(-j)} - \overline{u_i x_i}  \right)' (\hat \delta_i(h) - \delta_i(h))\\
	& -  \delta_i(h)' \sum_{j=-T+1}^{T-1}K_N^{(j)} \frac{1}{T} \sum_{t=|j|+1}^T \left(  \left( x_{it}^{(+j)} {w_{it}^{(+j)}}' -  \overline{x_i w_i}\right) \left(\hat{\theta}^w - \theta^w\right) \right) \\
	&  \quad \times  \left( u_{it}^{(-j)}x_{it}^{(-j)} - \overline{u_i x_i}  \right)' (\hat \delta_i(h) - \delta_i(h))\\
	& +  \delta_i(h)' \sum_{j=-T+1}^{T-1}K_N^{(j)} \frac{1}{T} \sum_{t=|j|+1}^T \left( \left( x_{it}^{(+j)} {x_{it}^{(+j)}}' - \overline{x_i x_i} \right) \left(\hat{\theta}_{g_i^0} - \theta_{g_i^0}\right) \right) \\
	&  \quad \times  \left(  \left( x_{it}^{(-j)} {x_{it}^{(-j)}}' - \overline{x_i x_i} \right) \left(\hat{\theta}_{g_i^0} - \theta_{g_i^0}\right) \right)' (\hat \delta_i(h) - \delta_i(h)) \\
	& +  \delta_i(h)' \sum_{j=-T+1}^{T-1}K_N^{(j)} \frac{1}{T} \sum_{t=|j|+1}^T \left( \left( x_{it}^{(+j)} {x_{it}^{(+j)}}' - \overline{x_i x_i} \right) \left(\hat{\theta}_{g_i^0} - \theta_{g_i^0}\right)\right) \\
	&  \quad \times  \left(  \left( x_{it}^{(-j)} {w_{it}^{(-j)}}' -  \overline{x_i w_i}\right) \left(\hat{\theta}^w - \theta^w\right) \right)' (\hat \delta_i(h) - \delta_i(h)) \\
	& + \delta_i(h)' \sum_{j=-T+1}^{T-1}K_N^{(j)} \frac{1}{T} \sum_{t=|j|+1}^T  \left( \left( x_{it}^{(+j)} {w_{it}^{(+j)}}' -  \overline{x_i w_i}\right) \left(\hat{\theta}^w - \theta^w\right) \right) \\
	&  \quad \times  \left(  \left( x_{it}^{(-j)} {x_{it}^{(-j)}}' - \overline{x_i x_i} \right) \left(\hat{\theta}_{g_i^0} - \theta_{g_i^0}\right) \right)' (\hat \delta_i(h) - \delta_i(h)) \\
	& +  \delta_i(h)' \sum_{j=-T+1}^{T-1}K_N^{(j)} \frac{1}{T} \sum_{t=|j|+1}^T \left( \left( x_{it}^{(+j)} {w_{it}^{(+j)}}' -  \overline{x_i w_i}\right) \left(\hat{\theta}^w - \theta^w\right) \right) \\
	&  \quad \times  \left(   \left( x_{it}^{(-j)} {w_{it}^{(-j)}}' -  \overline{x_i w_i}\right) \left(\hat{\theta}^w - \theta^w\right) \right)' \bigg)  ( \hat \delta_i(h')- \delta_i(h')  ) \\
	& +  \delta_i(h)' \sum_{j=-T+1}^{T-1}K_N^{(j)} \frac{1}{T} \sum_{t=|j|+1}^T \bigg( - \left(  u_{it}^{(+j)}x_{it}^{(+j)} - \overline{u_i x_i}   \right) \\
	&  \quad \times  \left(  \left( x_{it}^{(-j)} {x_{it}^{(-j)}}' - \overline{x_i x_i} \right) \left(\hat{\theta}_{g_i^0} - \theta_{g_i^0}\right) \right)'  \delta_i(h)\\
	& - \delta_i(h)' \sum_{j=-T+1}^{T-1}K_N^{(j)} \frac{1}{T} \sum_{t=|j|+1}^T \left(  u_{it}^{(+j)}x_{it}^{(+j)} - \overline{u_i x_i}   \right) \\
	&  \quad \times  \left(   \left( x_{it}^{(-j)} {w_{it}^{(-j)}}' -  \overline{x_i w_i}\right) \left(\hat{\theta}^w - \theta^w\right) \right)'  \delta_i(h) \\
	& - \delta_i(h)' \sum_{j=-T+1}^{T-1}K_N^{(j)} \frac{1}{T} \sum_{t=|j|+1}^T \left(  \left( x_{it}^{(+j)} {x_{it}^{(+j)}}' - \overline{x_i x_i} \right) \left(\hat{\theta}_{g_i^0} - \theta_{g_i^0}\right) \right) \\
	&  \quad \times  \left( u_{it}^{(-j)}x_{it}^{(-j)} - \overline{u_i x_i}  \right)'  \delta_i(h) \\
	& - \delta_i(h)' \sum_{j=-T+1}^{T-1}K_N^{(j)} \frac{1}{T} \sum_{t=|j|+1}^T \left(  \left( x_{it}^{(+j)} {w_{it}^{(+j)}}' -  \overline{x_i w_i}\right) \left(\hat{\theta}^w - \theta^w\right) \right) \\
	&  \quad \times  \left( u_{it}^{(-j)}x_{it}^{(-j)} - \overline{u_i x_i}  \right)' \delta_i(h) \\
	& +  \delta_i(h)' \sum_{j=-T+1}^{T-1}K_N^{(j)} \frac{1}{T} \sum_{t=|j|+1}^T \left( \left( x_{it}^{(+j)} {x_{it}^{(+j)}}' - \overline{x_i x_i} \right) \left(\hat{\theta}_{g_i^0} - \theta_{g_i^0}\right) \right) \\
	&  \quad \times  \left(  \left( x_{it}^{(-j)} {x_{it}^{(-j)}}' - \overline{x_i x_i} \right) \left(\hat{\theta}_{g_i^0} - \theta_{g_i^0}\right) \right)'  \delta_i(h) \\
	& +  \delta_i(h)' \sum_{j=-T+1}^{T-1}K_N^{(j)} \frac{1}{T} \sum_{t=|j|+1}^T \left( \left( x_{it}^{(+j)} {x_{it}^{(+j)}}' - \overline{x_i x_i} \right) \left(\hat{\theta}_{g_i^0} - \theta_{g_i^0}\right)\right) \\
	&  \quad \times  \left(  \left( x_{it}^{(-j)} {w_{it}^{(-j)}}' -  \overline{x_i w_i}\right) \left(\hat{\theta}^w - \theta^w\right) \right)'  \delta_i(h) \\
	& +  \delta_i(h)' \sum_{j=-T+1}^{T-1}K_N^{(j)} \frac{1}{T} \sum_{t=|j|+1}^T \left( \left( x_{it}^{(+j)} {w_{it}^{(+j)}}' -  \overline{x_i w_i}\right) \left(\hat{\theta}^w - \theta^w\right) \right) \\
	&  \quad \times  \left(  \left( x_{it}^{(-j)} {x_{it}^{(-j)}}' - \overline{x_i x_i} \right) \left(\hat{\theta}_{g_i^0} - \theta_{g_i^0}\right) \right)'  \delta_i(h) \\
	& +  \delta_i(h)' \sum_{j=-T+1}^{T-1}K_N^{(j)} \frac{1}{T} \sum_{t=|j|+1}^T \left( \left( x_{it}^{(+j)} {w_{it}^{(+j)}}' -  \overline{x_i w_i}\right) \left(\hat{\theta}^w - \theta^w\right) \right) \\
	&  \quad \times  \left(   \left( x_{it}^{(-j)} {w_{it}^{(-j)}}' -  \overline{x_i w_i}\right) \left(\hat{\theta}^w - \theta^w\right) \right)' \bigg)  \delta_i(h')   \\
	& + (\hat \delta_i(h) - \delta_i(h))'  \sum_{j=-T+1}^{T-1}K_N^{(j)} \frac{1}{T} \sum_{t=|j|+1}^T\left(  u_{it}^{(+j)}x_{it}^{(+j)} - \overline{u_i x_i}   \right) \\
	&  \quad \times  \left( u_{it}^{(-j)}x_{it}^{(-j)} - \overline{u_i x_i}  \right)' ( \hat \delta_i(h')- \delta_i(h')  )\\
	& + (\hat \delta_i(h) - \delta_i(h))'  \sum_{j=-T+1}^{T-1}K_N^{(j)} \frac{1}{T} \sum_{t=|j|+1}^T \left(  u_{it}^{(+j)}x_{it}^{(+j)} - \overline{u_i x_i}   \right) \\
	&  \quad \times  \left( u_{it}^{(-j)}x_{it}^{(-j)} - \overline{u_i x_i}  \right)'  \delta_i(h') \\
	& +  \delta_i(h)'  \sum_{j=-T+1}^{T-1}K_N^{(j)} \frac{1}{T} \sum_{t=|j|+1}^T \left(  u_{it}^{(+j)}x_{it}^{(+j)} - \overline{u_i x_i}   \right) \\
	&  \quad \times  \left( u_{it}^{(-j)}x_{it}^{(-j)} - \overline{u_i x_i}  \right)' ( \hat \delta_i(h')- \delta_i(h') ).
\end{align*}
We examine each term. For vector $a$, let $a_p$ denote the $p$-th element of $a$. Let $d_x$ be the dimension of $x_{it}$. 
With probability at least $a_{\theta, N}$, it holds
\begin{align*}
	&\frac{1}{\sigma_i^2\norm{\delta_i (h)}\norm{\delta_i (h')} } \bigg|  (\hat \delta_i(h) - \delta_i(h))' \sum_{j=-T+1}^{T-1}K_N^{(j)} \frac{1}{T} \sum_{t=|j|+1}^T\left(  u_{it}^{(+j)}x_{it}^{(+j)} - \overline{u_i x_i}   \right) \\
		&  \quad \times   \left(  \left( x_{it}^{(-j)} {x_{it}^{(-j)}}' - \overline{x_i x_i} \right) \left(\hat{\theta}_{g_i^0} - \theta_{g_i^0}\right) \right)' (\hat \delta_i(h) - \delta_i(h)) \bigg| \\
	\leq  & C r_{\theta, N}^2 \iota_N^{-2} \frac{1}{\sigma_i } \bigg|  \sum_{j=-T+1}^{T-1}K_N^{(j)} \frac{1}{T} \sum_{t=|j|+1}^T\left(  v_{it}^{(j)} x_{it}^{(+j)} - \overline{v_i x_i}   \right) \\
	&  \quad \times   \left(  \left( x_{it}^{(-j)} {x_{it}^{(-j)}}' - \overline{x_i x_i} \right) \left(\hat{\theta}_{g_i^0} - \theta_{g_i^0}\right) \right)' \bigg|_{\infty} \\
	\leq  & C r_{\theta, N}^3 \iota_N^{-2} (\min_{1 \leq i \leq N} \sigma_i)^{-1} \sum_{p=1}^{d_x}\bigg|  \sum_{j=-T+1}^{T-1}K_N^{(j)} \frac{1}{T} \sum_{t=|j|+1}^T\left(  v_{it}^{(j)} x_{it}^{(+j)} - \overline{v_i x_i}   \right) \\
	&  \quad \times  \left( x_{i,t-\max (0,j),p} {x_{it}^{(-j)}}' - \frac{1}{T}\sum_{u=1}^T  x_{iu,p} x_{iu}' \right) \bigg|_{\infty} \\
	\leq  & C r_{\theta, N}^3 \iota_N^{-2} \sum_{p=1}^{d_x} (\min_{1 \leq i \leq N} \sigma_i)^{-1}\bigg|  \sum_{j=-T+1}^{T-1}K_N^{(j)} \frac{1}{T} \sum_{t=|j|+1}^T\left(  v_{it}^{(j)} x_{it}^{(+j)} - \overline{v_i x_i}   \right) \\
	&  \quad \times  \left( x_{i,t-\max (0,j),p} {x_{it}^{(-j)}}' - \frac{1}{T}\sum_{u=1}^T  x_{iu,p} x_{iu}' \right) \\
	& \quad - \frac{1}{T} \sum_{t=1}^T \sum_{s=1}^T\E_P [v_{it}x_{it} (x_{is,p}x_{is} - \E_p (x_{is,p} x_{is}))] \bigg|_{\infty} \\
	& +  C r_{\theta, N}^3 \iota_N^{-2} (\min_{1 \leq i \leq N} \sigma_i)^{-1} \sum_{p=1}^{d_x}\left|  \frac{1}{T} \sum_{t=1}^T \sum_{s=1}^T\E_P [v_{it}x_{it} (x_{is,p}x_{is} - \E_p (x_{is,p} x_{is}))]  \right|_{\infty}
\end{align*}
Applying Lemma \ref{lem:lv-xi-consistency} to
\begin{align*}
	& \bigg|  \sum_{j=-T+1}^{T-1}K_N^{(j)} \frac{1}{T} \sum_{t=|j|+1}^T\left(  v_{it}^{(j)} x_{it}^{(+j)} - \overline{v_i x_i}   \right) \left( x_{i,t-\max (0,j),p} {x_{it}^{(-j)}}' - \frac{1}{T}\sum_{u=1}^T  x_{iu,p} x_{iu}' \right) \\
	& \qquad -  \frac{1}{T} \sum_{t=1}^T \sum_{s=1}^T\E_P [v_{it}x_{it} (x_{is,p}x_{is} - \E_p (x_{is,p} x_{is}))] \bigg|_{\infty},
\end{align*}
and Lemma \ref{lem:lv-exists} to 
\begin{align*}
	\frac{1}{T} \sum_{t=1}^T \sum_{s=1}^T\E_P [v_{it}x_{it} (x_{is,p}x_{is} - \E_p (x_{is,p} x_{is}))], 
\end{align*}
we obtain 
\begin{align*}
	& \sup_{i,h,h'} \frac{1}{\sigma_i^2\norm{\delta_i (h)}\norm{\delta_i (h')} } \bigg|  (\hat \delta_i(h) - \delta_i(h))' \sum_{j=-T+1}^{T-1}
    K_N^{(j)} \frac{1}{T}     \Bigg\{ \sum_{t=|j|+1}^T\left(  u_{it}^{(+j)}x_{it}^{(+j)} - \overline{u_i x_i}   \right) \\
	&  \qquad \times   \left(  \left( x_{it}^{(-j)} {x_{it}^{(-j)}}' - \overline{x_i x_i} \right) \left(\hat{\theta}_{g_i^0} - \theta_{g_i^0}\right) \right)' (\hat \delta_i(h) - \delta_i(h)) \bigg| \Bigg\}\\
	\precsim_P &  r_{\theta, N}^3 \iota_N^{-2} T^{-c_1} (\log (N(G-1)))^{c_2}(\min_{1 \leq i \leq N}\sigma_i)^{-1}+  r_{\theta, N}^3  \iota_N^{-2} T^{-\rho}(\min_{1 \leq i \leq N}\sigma_i)^{-1} \\
	&+ r_{\theta, N}^3 \iota_N^{-2}(\min_{1 \leq i \leq N}\sigma_i)^{-1}  \\
	\precsim_P & r_{\theta, N}^3 \iota_N^{-2}(\min_{1 \leq i \leq N}\sigma_i)^{-1},
\end{align*}
where $\precsim_P$ signifies that for $A_N$ and $B_N$, $A_N \precsim_P B_N$ if $ \sup_{P\in \mathbb{P}_N}\Pr (A_N > B_N \zeta_N) = o(1)$ for any $\zeta_N \to \infty$, and the last $\precsim_P $ follows by $b_N^{LV} \to 0$. 
Following the same argument, we have 
\begin{align*}
	& \sup_{i,h,h'} \frac{1}{\sigma_i^2\norm{\delta_i (h)}\norm{\delta_i (h')} } \bigg| (\hat \delta_i(h) - \delta_i(h))' \sum_{j=-T+1}^{T-1}K_N^{(j)} \frac{1}{T} \Bigg\{ \sum_{t=|j|+1}^T\left(  u_{it}^{(+j)}x_{it}^{(+j)} - \overline{u_i x_i}   \right) \\
	&  \quad \times \left(   \left( x_{it}^{(-j)} {w_{it}^{(-j)}}' -  \overline{x_i w_i}\right) \left(\hat{\theta}^w - \theta^w\right) \right)' (\hat \delta_i(h) - \delta_i(h)) \Bigg\} \bigg| \\
	\precsim_P &  r_{\theta, N}^3 \iota_N^{-2}(\min_{1 \leq i \leq N}\sigma_i)^{-1},
\end{align*}
\begin{align*}
	& \sup_{i,h,h'} \frac{1}{\sigma_i^2\norm{\delta_i (h)}\norm{\delta_i (h')} } \bigg| (\hat \delta_i(h) - \delta_i(h))' \sum_{j=-T+1}^{T-1}K_N^{(j)} \\
	& \times \frac{1}{T} \sum_{t=|j|+1}^T \Bigg \{ \left(  \left( x_{it}^{(+j)} {x_{it}^{(+j)}}' - \overline{x_i x_i} \right) 
    \left(\hat{\theta}_{g_i^0} - \theta_{g_i^0}\right) \right) 
    \\
    & \times \left( u_{it}^{(-j)}x_{it}^{(-j)} - \overline{u_i x_i}  \right)' (\hat \delta_i(h) - \delta_i(h)) \Bigg\} \bigg| \\
	\precsim_P &  r_{\theta, N}^3 \iota_N^{-2}(\min_{1 \leq i \leq N}\sigma_i)^{-1},
\end{align*}
\begin{align*}
	& \sup_{i,h,h'} \frac{1}{\sigma_i^2\norm{\delta_i (h)}\norm{\delta_i (h')} } \bigg| (\hat \delta_i(h) - \delta_i(h))' \sum_{j=-T+1}^{T-1}K_N^{(j)} \\
	& \quad \times  \frac{1}{T} \sum_{t=|j|+1}^T\left(  \left( x_{it}^{(+j)} {w_{it}^{(+j)}}' -  \overline{x_i w_i}\right) \left(\hat{\theta}^w - \theta^w\right) \right) \\
	&  \quad \times  \left( u_{it}^{(-j)}x_{it}^{(-j)} - \overline{u_i x_i}  \right)' (\hat \delta_i(h) - \delta_i(h))  \bigg| \\
	\precsim_P &  r_{\theta, N}^3 \iota_N^{-2}(\min_{1 \leq i \leq N}\sigma_i)^{-1},
\end{align*}
\begin{align*}
	& \sup_{i,h,h'} \frac{1}{\sigma_i^2\norm{\delta_i (h)}\norm{\delta_i (h')} } \bigg| (\hat \delta_i(h) - \delta_i(h))' \sum_{j=-T+1}^{T-1}K_N^{(j)} \\
	& \quad \times \frac{1}{T} \sum_{t=|j|+1}^T \left( \left( x_{it}^{(+j)} {x_{it}^{(+j)}}' - \overline{x_i x_i} \right) \left(\hat{\theta}_{g_i^0} - \theta_{g_i^0}\right) \right) \\
	&  \quad \times  \left(  \left( x_{it}^{(-j)} {x_{it}^{(-j)}}' - \overline{x_i x_i} \right) \left(\hat{\theta}_{g_i^0} - \theta_{g_i^0}\right) \right)' (\hat \delta_i(h) - \delta_i(h)) \bigg| \\
	\precsim_P &  r_{\theta, N}^4 \iota_N^{-2}(\min_{1 \leq i \leq N}\sigma_i)^{-2},
\end{align*}
\begin{align*}
	& \sup_{i,h,h'} \frac{1}{\sigma_i^2\norm{\delta_i (h)}\norm{\delta_i (h')} } \bigg| (\hat \delta_i(h) - \delta_i(h))' \sum_{j=-T+1}^{T-1}K_N^{(j)} \\
	& \quad \times \frac{1}{T} \sum_{t=|j|+1}^T \left( \left( x_{it}^{(+j)} {x_{it}^{(+j)}}' - \overline{x_i x_i} \right) \left(\hat{\theta}_{g_i^0} - \theta_{g_i^0}\right)\right) \\
	&  \quad \times  \left(  \left( x_{it}^{(-j)} {w_{it}^{(-j)}}' -  \overline{x_i w_i}\right) \left(\hat{\theta}^w - \theta^w\right) \right)' (\hat \delta_i(h) - \delta_i(h)) \bigg| \\
	\precsim_P &  r_{\theta, N}^4 \iota_N^{-2}(\min_{1 \leq i \leq N}\sigma_i)^{-2},
\end{align*}
\begin{align*}
	& \sup_{i,h,h'} \frac{1}{\sigma_i^2\norm{\delta_i (h)}\norm{\delta_i (h')} } \bigg| (\hat \delta_i(h) - \delta_i(h))' \sum_{j=-T+1}^{T-1}K_N^{(j)} \\
	& \quad \times  \frac{1}{T} \sum_{t=|j|+1}^T \left( \left( x_{it}^{(+j)} {w_{it}^{(+j)}}' -  \overline{x_i w_i}\right) \left(\hat{\theta}^w - \theta^w\right) \right) \\
	&  \quad \times  \left(  \left( x_{it}^{(-j)} {x_{it}^{(-j)}}' - \overline{x_i x_i} \right) \left(\hat{\theta}_{g_i^0} - \theta_{g_i^0}\right) \right)' (\hat \delta_i(h) - \delta_i(h))\bigg| \\
	\precsim_P &  r_{\theta, N}^4 \iota_N^{-2}(\min_{1 \leq i \leq N}\sigma_i)^{-2},
\end{align*}
\begin{align*}
	& \sup_{i,h,h'} \frac{1}{\sigma_i^2\norm{\delta_i (h)}\norm{\delta_i (h')} } \bigg| (\hat \delta_i(h) - \delta_i(h))' \sum_{j=-T+1}^{T-1}K_N^{(j)} \\
	& \quad \times  \frac{1}{T} \sum_{t=|j|+1}^T \left( \left( x_{it}^{(+j)} {w_{it}^{(+j)}}' -  \overline{x_i w_i}\right) \left(\hat{\theta}^w - \theta^w\right) \right) \\
	&  \quad \times  \left(   \left( x_{it}^{(-j)} {w_{it}^{(-j)}}' -  \overline{x_i w_i}\right) \left(\hat{\theta}^w - \theta^w\right) \right)' \bigg)  ( \hat \delta_i(h')- \delta_i(h')  ) \bigg| \\
	\precsim_P &  r_{\theta, N}^4 \iota_N^{-2}(\min_{1 \leq i \leq N}\sigma_i)^{-2},
\end{align*}
\begin{align*}
	& \sup_{i,h,h'} \frac{1}{\sigma_i^2\norm{\delta_i (h)}\norm{\delta_i (h')} } \bigg| (\hat \delta_i(h) - \delta_i(h))' \sum_{j=-T+1}^{T-1}K_N^{(j)} \\
	& \quad \times  \frac{1}{T} \sum_{t=|j|+1}^T\left(  u_{it}^{(+j)}x_{it}^{(+j)} - \overline{u_i x_i}   \right) \\
	&  \quad \times  \left(  \left( x_{it}^{(-j)} {x_{it}^{(-j)}}' - \overline{x_i x_i} \right) \left(\hat{\theta}_{g_i^0} - \theta_{g_i^0}\right) \right)' \delta_i(h) \bigg| \\
	\precsim_P &  r_{\theta, N}^2 \iota_N^{-1}(\min_{1 \leq i \leq N}\sigma_i)^{-1},
\end{align*}
\begin{align*}
	& \sup_{i,h,h'} \frac{1}{\sigma_i^2\norm{\delta_i (h)}\norm{\delta_i (h')} } \bigg| (\hat \delta_i(h) - \delta_i(h))' \sum_{j=-T+1}^{T-1}K_N^{(j)} \\
	& \quad \times \frac{1}{T} \sum_{t=|j|+1}^T\left(  u_{it}^{(+j)}x_{it}^{(+j)} - \overline{u_i x_i}   \right)  \\
	&  \quad \times \left(   \left( x_{it}^{(-j)} {w_{it}^{(-j)}}' -  \overline{x_i w_i}\right) \left(\hat{\theta}^w - \theta^w\right) \right)'  \delta_i(h) \bigg| \\
	\precsim_P &  r_{\theta, N}^2 \iota_N^{-1}(\min_{1 \leq i \leq N}\sigma_i)^{-1},
\end{align*}
\begin{align*}
	& \sup_{i,h,h'} \frac{1}{\sigma_i^2\norm{\delta_i (h)}\norm{\delta_i (h')} } \bigg| (\hat \delta_i(h) - \delta_i(h))' \sum_{j=-T+1}^{T-1}K_N^{(j)} \\
	& \quad \times \frac{1}{T} \sum_{t=|j|+1}^T\left(  \left( x_{it}^{(+j)} {x_{it}^{(+j)}}' - \overline{x_i x_i} \right) \left(\hat{\theta}_{g_i^0} - \theta_{g_i^0}\right) \right) \\
	&  \quad \times  \left( u_{it}^{(-j)}x_{it}^{(-j)} - \overline{u_i x_i}  \right)'  \delta_i(h) \bigg| \\
	\precsim_P &  r_{\theta, N}^2 \iota_N^{-1}(\min_{1 \leq i \leq N}\sigma_i)^{-1},
\end{align*}
\begin{align*}
	& \sup_{i,h,h'} \frac{1}{\sigma_i^2\norm{\delta_i (h)}\norm{\delta_i (h')} } \bigg| (\hat \delta_i(h) - \delta_i(h))' \sum_{j=-T+1}^{T-1}K_N^{(j)} \\
	& \quad \times \frac{1}{T} \sum_{t=|j|+1}^T\left(  \left( x_{it}^{(+j)} {w_{it}^{(+j)}}' -  \overline{x_i w_i}\right) \left(\hat{\theta}^w - \theta^w\right) \right) \\
	&  \quad \times  \left( u_{it}^{(-j)}x_{it}^{(-j)} - \overline{u_i x_i}  \right)'  \delta_i(h)\bigg| \\
	\precsim_P &  r_{\theta, N}^2 \iota_N^{-1}(\min_{1 \leq i \leq N}\sigma_i)^{-1},
\end{align*}
\begin{align*}
	& \sup_{i,h,h'} \frac{1}{\sigma_i^2\norm{\delta_i (h)}\norm{\delta_i (h')} } \bigg| (\hat \delta_i(h) - \delta_i(h))' \sum_{j=-T+1}^{T-1}K_N^{(j)} \\
	& \quad \times  \frac{1}{T} \sum_{t=|j|+1}^T \left( \left( x_{it}^{(+j)} {x_{it}^{(+j)}}' - \overline{x_i x_i} \right) \left(\hat{\theta}_{g_i^0} - \theta_{g_i^0}\right) \right) \\
	&  \quad \times  \left(  \left( x_{it}^{(-j)} {x_{it}^{(-j)}}' - \overline{x_i x_i} \right) \left(\hat{\theta}_{g_i^0} - \theta_{g_i^0}\right) \right)'  \delta_i(h)  \bigg| \\
	\precsim_P &  r_{\theta, N}^3 \iota_N^{-1}(\min_{1 \leq i \leq N}\sigma_i)^{-2},
\end{align*}
\begin{align*}
	& \sup_{i,h,h'} \frac{1}{\sigma_i^2\norm{\delta_i (h)}\norm{\delta_i (h')} } \bigg| (\hat \delta_i(h) - \delta_i(h))' \sum_{j=-T+1}^{T-1}K_N^{(j)} \\
	& \quad \times \frac{1}{T} \sum_{t=|j|+1}^T \left( \left( x_{it}^{(+j)} {x_{it}^{(+j)}}' - \overline{x_i x_i} \right) \left(\hat{\theta}_{g_i^0} - \theta_{g_i^0}\right)\right) \\
	&  \quad \times  \left(  \left( x_{it}^{(-j)} {w_{it}^{(-j)}}' -  \overline{x_i w_i}\right) \left(\hat{\theta}^w - \theta^w\right) \right)'  \delta_i(h) \bigg| \\
	\precsim_P &  r_{\theta, N}^3 \iota_N^{-1}(\min_{1 \leq i \leq N}\sigma_i)^{-2},
\end{align*}
\begin{align*}
	& \sup_{i,h,h'} \frac{1}{\sigma_i^2\norm{\delta_i (h)}\norm{\delta_i (h')} } \bigg| (\hat \delta_i(h) - \delta_i(h))' \sum_{j=-T+1}^{T-1}K_N^{(j)} \\
	& \quad \times  \frac{1}{T} \sum_{t=|j|+1}^T \left( \left( x_{it}^{(+j)} {w_{it}^{(+j)}}' -  \overline{x_i w_i}\right) \left(\hat{\theta}^w - \theta^w\right) \right) \\
	&  \quad \times  \left(  \left( x_{it}^{(-j)} {x_{it}^{(-j)}}' - \overline{x_i x_i} \right) \left(\hat{\theta}_{g_i^0} - \theta_{g_i^0}\right) \right)'  \delta_i(h) \bigg| \\
	\precsim_P & r_{\theta, N}^3 \iota_N^{-1}(\min_{1 \leq i \leq N}\sigma_i)^{-2}),
\end{align*}
\begin{align*}
	& \sup_{i,h,h'} \frac{1}{\sigma_i^2\norm{\delta_i (h)}\norm{\delta_i (h')} } \bigg| (\hat \delta_i(h) - \delta_i(h))' \sum_{j=-T+1}^{T-1}K_N^{(j)} \\
	& \quad \times \frac{1}{T} \sum_{t=|j|+1}^T\left( \left( x_{it}^{(+j)} {w_{it}^{(+j)}}' -  \overline{x_i w_i}\right) \left(\hat{\theta}^w - \theta^w\right) \right) \\
	&  \quad \times  \left(   \left( x_{it}^{(-j)} {w_{it}^{(-j)}}' -  \overline{x_i w_i}\right) \left(\hat{\theta}^w - \theta^w\right) \right)' \bigg)   \delta_i(h') \bigg| \\
	\precsim_P &  r_{\theta, N}^3 \iota_N^{-1}(\min_{1 \leq i \leq N}\sigma_i)^{-2},
\end{align*}
\begin{align*}
	& \sup_{i,h,h'} \frac{1}{\sigma_i^2\norm{\delta_i (h)}\norm{\delta_i (h')} } \bigg| \delta_i(h)' \sum_{j=-T+1}^{T-1}K_N^{(j)} \\
	& \quad \times \frac{1}{T} \sum_{t=|j|+1}^T \left(  u_{it}^{(+j)}x_{it}^{(+j)} - \frac{1}{T}\sum_{u=1}^T u_{iu}x_{iu}   \right) \\
	&  \quad \times  \left(  \left( x_{it}^{(-j)} {x_{it}^{(-j)}}' - \overline{x_i x_i} \right) \left(\hat{\theta}_{g_i^0} - \theta_{g_i^0}\right) \right)'(\hat \delta_i(h) - \delta_i(h)) \bigg| \\
	\precsim_P & r_{\theta, N}^2 \iota_N^{-1}(\min_{1 \leq i \leq N}\sigma_i)^{-1}),
\end{align*}
\begin{align*}
	& \sup_{i,h,h'} \frac{1}{\sigma_i^2\norm{\delta_i (h)}\norm{\delta_i (h')} } \bigg| \delta_i(h)' \sum_{j=-T+1}^{T-1}K_N^{(j)} \\
	& \quad \times \frac{1}{T} \sum_{t=|j|+1}^T \left(  u_{it}^{(+j)}x_{it}^{(+j)} - \frac{1}{T}\sum_{u=1}^T u_{iu}x_{iu}   \right) \\
	&  \quad \times  \left(   \left( x_{it}^{(-j)} {w_{it}^{(-j)}}' -  \overline{x_i w_i}\right) \left(\hat{\theta}^w - \theta^w\right) \right)' (\hat \delta_i(h) - \delta_i(h)) \bigg| \\
	\precsim_P &  r_{\theta, N}^2 \iota_N^{-1}(\min_{1 \leq i \leq N}\sigma_i)^{-1},
\end{align*}
\begin{align*}
	& \sup_{i,h,h'} \frac{1}{\sigma_i^2\norm{\delta_i (h)}\norm{\delta_i (h')} } \bigg| \delta_i(h)' \sum_{j=-T+1}^{T-1}K_N^{(j)} \\
	& \quad \times  \frac{1}{T} \sum_{t=|j|+1}^T \left(  \left( x_{it}^{(+j)} {x_{it}^{(+j)}}' - \overline{x_i x_i} \right) \left(\hat{\theta}_{g_i^0} - \theta_{g_i^0}\right) \right) \\
	&  \quad \times  \left( u_{it}^{(-j)}x_{it}^{(-j)} - \frac{1}{T}\sum_{u=1}^T u_{iu}x_{iu}  \right)' (\hat \delta_i(h) - \delta_i(h)) \bigg| \\
	\precsim_P &  r_{\theta, N}^2 \iota_N^{-1}(\min_{1 \leq i \leq N}\sigma_i)^{-1},
\end{align*}
\begin{align*}
	& \sup_{i,h,h'} \frac{1}{\sigma_i^2\norm{\delta_i (h)}\norm{\delta_i (h')} } \bigg| \delta_i(h)' \sum_{j=-T+1}^{T-1}K_N^{(j)} \\
	& \quad \times  \frac{1}{T} \sum_{t=|j|+1}^T \left(  \left( x_{it}^{(+j)} {w_{it}^{(+j)}}' -  \overline{x_i w_i}\right) \left(\hat{\theta}^w - \theta^w\right) \right) \\
	&  \quad \times  \left( u_{it}^{(-j)}x_{it}^{(-j)} - \frac{1}{T}\sum_{u=1}^T u_{iu}x_{iu}  \right)' (\hat \delta_i(h) - \delta_i(h)) \bigg| \\
	\precsim_P &  r_{\theta, N}^2 \iota_N^{-1}(\min_{1 \leq i \leq N}\sigma_i)^{-1},
\end{align*}
\begin{align*}
	& \sup_{i,h,h'} \frac{1}{\sigma_i^2\norm{\delta_i (h)}\norm{\delta_i (h')} } \bigg| \delta_i(h)' \sum_{j=-T+1}^{T-1}K_N^{(j)} \\
	& \quad \times \frac{1}{T} \sum_{t=|j|+1}^T \left( \left( x_{it}^{(+j)} {x_{it}^{(+j)}}' - \overline{x_i x_i} \right) \left(\hat{\theta}_{g_i^0} - \theta_{g_i^0}\right) \right) \\
	&  \quad \times  \left(  \left( x_{it}^{(-j)} {x_{it}^{(-j)}}' - \overline{x_i x_i} \right) \left(\hat{\theta}_{g_i^0} - \theta_{g_i^0}\right) \right)' (\hat \delta_i(h) - \delta_i(h))  \bigg| \\
	\precsim_P &  r_{\theta, N}^3 \iota_N^{-1}(\min_{1 \leq i \leq N}\sigma_i)^{-2},
\end{align*}
\begin{align*}
	& \sup_{i,h,h'} \frac{1}{\sigma_i^2\norm{\delta_i (h)}\norm{\delta_i (h')} } \bigg| \delta_i(h)' \sum_{j=-T+1}^{T-1}K_N^{(j)} \\
	& \quad \times \frac{1}{T} \sum_{t=|j|+1}^T \left( \left( x_{it}^{(+j)} {x_{it}^{(+j)}}' - \overline{x_i x_i} \right) \left(\hat{\theta}_{g_i^0} - \theta_{g_i^0}\right)\right) \\
	&  \quad \times  \left(  \left( x_{it}^{(-j)} {w_{it}^{(-j)}}' -  \overline{x_i w_i}\right) \left(\hat{\theta}^w - \theta^w\right) \right)' (\hat \delta_i(h) - \delta_i(h)) \bigg| \\
	\precsim_P &  r_{\theta, N}^3 \iota_N^{-1}(\min_{1 \leq i \leq N}\sigma_i)^{-2},
\end{align*}
\begin{align*}
	& \sup_{i,h,h'} \frac{1}{\sigma_i^2\norm{\delta_i (h)}\norm{\delta_i (h')} } \bigg| \delta_i(h)' \sum_{j=-T+1}^{T-1}K_N^{(j)} \\
	& \quad \times \frac{1}{T} \sum_{t=|j|+1}^T  \left( \left( x_{it}^{(+j)} {w_{it}^{(+j)}}' -  \overline{x_i w_i}\right) \left(\hat{\theta}^w - \theta^w\right) \right) \\
	&  \quad \times  \left(  \left( x_{it}^{(-j)} {x_{it}^{(-j)}}' - \overline{x_i x_i} \right) \left(\hat{\theta}_{g_i^0} - \theta_{g_i^0}\right) \right)' (\hat \delta_i(h) - \delta_i(h)) \bigg| \\
	\precsim_P &  r_{\theta, N}^3 \iota_N^{-1}(\min_{1 \leq i \leq N}\sigma_i)^{-2},
\end{align*}
\begin{align*}
	& \sup_{i,h,h'} \frac{1}{\sigma_i^2\norm{\delta_i (h)}\norm{\delta_i (h')} } \bigg| \delta_i(h)' \sum_{j=-T+1}^{T-1}K_N^{(j)} \\
	& \quad \times\frac{1}{T} \sum_{t=|j|+1}^T \left( \left( x_{it}^{(+j)} {w_{it}^{(+j)}}' -  \overline{x_i w_i}\right) \left(\hat{\theta}^w - \theta^w\right) \right) \\
	&  \quad \times  \left(   \left( x_{it}^{(-j)} {w_{it}^{(-j)}}' -  \overline{x_i w_i}\right) \left(\hat{\theta}^w - \theta^w\right) \right)' \bigg)  ( \hat \delta_i(h')- \delta_i(h')  ) \bigg| \\
	\precsim_P &  r_{\theta, N}^3 \iota_N^{-1}(\min_{1 \leq i \leq N}\sigma_i)^{-2},
\end{align*}
\begin{align*}
	& \sup_{i,h,h'} \frac{1}{\sigma_i^2\norm{\delta_i (h)}\norm{\delta_i (h')} } \bigg| \delta_i(h)' \sum_{j=-T+1}^{T-1}K_N^{(j)} \\
	& \quad \times\frac{1}{T} \sum_{t=|j|+1}^T \bigg( - \left(  u_{it}^{(+j)}x_{it}^{(+j)} - \frac{1}{T}\sum_{u=1}^T u_{iu}x_{iu}   \right) \\
	&  \quad \times  \left(  \left( x_{it}^{(-j)} {x_{it}^{(-j)}}' - \overline{x_i x_i} \right) \left(\hat{\theta}_{g_i^0} - \theta_{g_i^0}\right) \right)'  \delta_i(h) \bigg| \\
	\precsim_P & r_{\theta, N} (\min_{1 \leq i \leq N}\sigma_i)^{-1} ,
\end{align*}
\begin{align*}
	& \sup_{i,h,h'} \frac{1}{\sigma_i^2\norm{\delta_i (h)}\norm{\delta_i (h')} } \bigg| \delta_i(h)' \sum_{j=-T+1}^{T-1}K_N^{(j)} \\
	& \quad \times\frac{1}{T} \sum_{t=|j|+1}^T \left(  u_{it}^{(+j)}x_{it}^{(+j)} - \frac{1}{T}\sum_{u=1}^T u_{iu}x_{iu}   \right) \\
	&  \quad \times  \left(   \left( x_{it}^{(-j)} {w_{it}^{(-j)}}' -  \overline{x_i w_i}\right) \left(\hat{\theta}^w - \theta^w\right) \right)'  \delta_i(h)  \bigg| \\
	\precsim_P & r_{\theta, N} (\min_{1 \leq i \leq N}\sigma_i)^{-1},
\end{align*}
\begin{align*}
	& \sup_{i,h,h'} \frac{1}{\sigma_i^2\norm{\delta_i (h)}\norm{\delta_i (h')} } \bigg| \delta_i(h)' \sum_{j=-T+1}^{T-1}K_N^{(j)} \\
	& \quad \times\frac{1}{T} \sum_{t=|j|+1}^T \left(  \left( x_{it}^{(+j)} {x_{it}^{(+j)}}' - \overline{x_i x_i} \right) \left(\hat{\theta}_{g_i^0} - \theta_{g_i^0}\right) \right) \\
	&  \quad \times  \left( u_{it}^{(-j)}x_{it}^{(-j)} - \frac{1}{T}\sum_{u=1}^T u_{iu}x_{iu}  \right)'  \delta_i(h)  \bigg| \\
	\precsim_P & r_{\theta, N}(\min_{1 \leq i \leq N}\sigma_i)^{-1}),
\end{align*}
\begin{align*}
	& \sup_{i,h,h'} \frac{1}{\sigma_i^2\norm{\delta_i (h)}\norm{\delta_i (h')} } \bigg| \delta_i(h)' \sum_{j=-T+1}^{T-1}K_N^{(j)} \\
	& \quad \times\frac{1}{T} \sum_{t=|j|+1}^T \left(  \left( x_{it}^{(+j)} {w_{it}^{(+j)}}' -  \overline{x_i w_i}\right) \left(\hat{\theta}^w - \theta^w\right) \right) \\
	&  \quad \times  \left( u_{it}^{(-j)}x_{it}^{(-j)} - \frac{1}{T}\sum_{u=1}^T u_{iu}x_{iu}  \right)' \delta_i(h) \bigg| \\
	\precsim_P & r_{\theta, N} (\min_{1 \leq i \leq N}\sigma_i)^{-1},
\end{align*}
\begin{align*}
	& \sup_{i,h,h'} \frac{1}{\sigma_i^2\norm{\delta_i (h)}\norm{\delta_i (h')} } \bigg| \delta_i(h)' \sum_{j=-T+1}^{T-1}K_N^{(j)} \\
	& \quad \times \frac{1}{T} \sum_{t=|j|+1}^T \left( \left( x_{it}^{(+j)} {x_{it}^{(+j)}}' - \overline{x_i x_i} \right) \left(\hat{\theta}_{g_i^0} - \theta_{g_i^0}\right) \right) \\
	&  \quad \times  \left(  \left( x_{it}^{(-j)} {x_{it}^{(-j)}}' - \overline{x_i x_i} \right) \left(\hat{\theta}_{g_i^0} - \theta_{g_i^0}\right) \right)'  \delta_i(h) \bigg| \\
	\precsim_P & r_{\theta, N}^2(\min_{1 \leq i \leq N}\sigma_i)^{-2},
\end{align*}
\begin{align*}
	& \sup_{i,h,h'} \frac{1}{\sigma_i^2\norm{\delta_i (h)}\norm{\delta_i (h')} } \bigg| \delta_i(h)' \sum_{j=-T+1}^{T-1}K_N^{(j)} \\
	& \quad \times\frac{1}{T} \sum_{t=|j|+1}^T \left( \left( x_{it}^{(+j)} {x_{it}^{(+j)}}' - \overline{x_i x_i} \right) \left(\hat{\theta}_{g_i^0} - \theta_{g_i^0}\right)\right) \\
	&  \quad \times  \left(  \left( x_{it}^{(-j)} {w_{it}^{(-j)}}' -  \overline{x_i w_i}\right) \left(\hat{\theta}^w - \theta^w\right) \right)'  \delta_i(h) \bigg| \\
	\precsim_P & r_{\theta, N}^2 (\min_{1 \leq i \leq N}\sigma_i)^{-2}),
\end{align*}
\begin{align*}
	& \sup_{i,h,h'} \frac{1}{\sigma_i^2\norm{\delta_i (h)}\norm{\delta_i (h')} } \bigg| \delta_i(h)' \sum_{j=-T+1}^{T-1}K_N^{(j)} \\
	& \quad \times \frac{1}{T} \sum_{t=|j|+1}^T \left( \left( x_{it}^{(+j)} {w_{it}^{(+j)}}' -  \overline{x_i w_i}\right) \left(\hat{\theta}^w - \theta^w\right) \right) \\
	&  \quad \times  \left(  \left( x_{it}^{(-j)} {x_{it}^{(-j)}}' - \overline{x_i x_i} \right) \left(\hat{\theta}_{g_i^0} - \theta_{g_i^0}\right) \right)'  \delta_i(h) \bigg| \\
	\precsim_P & r_{\theta, N}^2 (\min_{1 \leq i \leq N}\sigma_i)^{-2}),
\end{align*}
\begin{align*}
	& \sup_{i,h,h'} \frac{1}{\sigma_i^2\norm{\delta_i (h)}\norm{\delta_i (h')} } \bigg| \delta_i(h)' \sum_{j=-T+1}^{T-1}K_N^{(j)} \\
	& \quad \times\frac{1}{T} \sum_{t=|j|+1}^T \left( \left( x_{it}^{(+j)} {w_{it}^{(+j)}}' -  \overline{x_i w_i}\right) \left(\hat{\theta}^w - \theta^w\right) \right) \\
	&  \quad \times  \left(   \left( x_{it}^{(-j)} {w_{it}^{(-j)}}' -  \overline{x_i w_i}\right) \left(\hat{\theta}^w - \theta^w\right) \right)' \bigg)  \delta_i(h')  \bigg| \\
	\precsim_P & r_{\theta, N}^2(\min_{1 \leq i \leq N}\sigma_i)^{-2} ,
\end{align*}
\begin{align*}
	& \sup_{i,h,h'} \frac{1}{\sigma_i^2\norm{\delta_i (h)}\norm{\delta_i (h')} } \bigg|(\hat \delta_i(h) - \delta_i(h))' \sum_{j=-T+1}^{T-1}K_N^{(j)} \\
	& \quad \times\frac{1}{T} \sum_{t=|j|+1}^T\left(  u_{it}^{(+j)}x_{it}^{(+j)} - \overline{u_i x_i}   \right) \\
	&  \quad \times  \left( u_{it}^{(-j)}x_{it}^{(-j)} - \overline{u_i x_i}  \right)' ( \hat \delta_i(h')- \delta_i(h')  ) \bigg| \\
	\precsim_P & r_{\theta, N}^2 \iota_N^{-2} (\min_{1 \leq i \leq N}\sigma_i)^{-1},
\end{align*}
\begin{align*}
	& \sup_{i,h,h'} \frac{1}{\sigma_i^2\norm{\delta_i (h)}\norm{\delta_i (h')} } \bigg|  (\hat \delta_i(h) - \delta_i(h))' \sum_{j=-T+1}^{T-1}K_N^{(j)} \\
	& \quad \times \frac{1}{T} \sum_{t=|j|+1}^T \left(  u_{it}^{(+j)}x_{it}^{(+j)} - \overline{u_i x_i}   \right) \\
	&  \quad \times  \left( u_{it}^{(-j)}x_{it}^{(-j)} - \overline{u_i x_i}  \right)'  \delta_i(h') \bigg| \\
	\precsim_P & r_{\theta, N} \iota_N^{-1}(\min_{1 \leq i \leq N}\sigma_i)^{-1},
\end{align*}
and 
\begin{align*}
	& \sup_{i,h,h'} \frac{1}{\sigma_i^2\norm{\delta_i (h)}\norm{\delta_i (h')} } \bigg| \delta_i(h)' \sum_{j=-T+1}^{T-1}K_N^{(j)} \\
	& \quad \times\frac{1}{T} \sum_{t=|j|+1}^T \left(  u_{it}^{(+j)}x_{it}^{(+j)} - \overline{u_i x_i}   \right) \\
	&  \quad \times  \left( u_{it}^{(-j)}x_{it}^{(-j)} - \overline{u_i x_i}  \right)' ( \hat \delta_i(h')- \delta_i(h') )  \bigg| \\
	\precsim_P & r_{\theta, N} \iota_N^{-1}(\min_{1 \leq i \leq N}\sigma_i)^{-1}).
\end{align*}

To sum up, we have 
\begin{align*}
	&\sup_{i,h,h'}  \frac{1}{\sigma_i^2\norm{\delta_i (h)}\norm{\delta_i (h')} } \left| \widehat{\Xi}_i (h,h') - \widetilde{\Xi}_i (h,h') \right| \\
	\precsim_P &  r_{\theta, N}^3 \iota_N^{-2}(\min_{1 \leq i \leq N}\sigma_i)^{-1} + r_{\theta, N}^4 \iota_N^{-2}(\min_{1 \leq i \leq N}\sigma_i)^{-2} + r_{\theta, N}^2 \iota_N^{-1}(\min_{1 \leq i \leq N}\sigma_i)^{-1} \\
	& + r_{\theta, N}^3 \iota_N^{-1}(\min_{1 \leq i \leq N}\sigma_i)^{-2}  + r_{\theta, N}(\min_{1 \leq i \leq N}\sigma_i)^{-1}) + r_{\theta, N}^2(\min_{1 \leq i \leq N}\sigma_i)^{-2} \\ 
	& + r_{\theta, N}^2 \iota_N^{-2}(\min_{1 \leq i \leq N}\sigma_i)^{-1} +  r_{\theta, N} \iota_N^{-1}(\min_{1 \leq i \leq N}\sigma_i)^{-1} \\
	\precsim_P & r_{\theta, N} \iota_N^{-1}(\min_{1 \leq i \leq N}\sigma_i)^{-1},
\end{align*}
where the last $\precsim_P $  follows by $b_N^{LV} \to 0$. 
We conclude that we have 
\begin{align*}
	& \sup_{i,h,h'}  \frac{1}{\sigma_i^2\norm{\delta_i (h)}\norm{\delta_i (h')} } \left| \widehat{\Xi}_i (h,h') - \Xi_i (h,h') \right| \\
	\precsim_P  & r_{\theta, N} \iota_N^{-1}(\min_{1 \leq i \leq N}\sigma_i)^{-1} +  T^{-c_1} (\log N)^{c_2} + T^{-\rho}.
\end{align*}
\end{proof}

\begin{lemma}
	\label{lem:lv-xi-consistency}
Let $\xi_{it}$ be the random vector consisting of distinct elements of $v_{it} x_{it}$, $x_{it}x_{it}' - \E_P (x_{it}x_{it}')$ and $w_{it}x_{it}' - \E_P (w_{it}x_{it}')$.
Let $\mathbb{P}_N$ be the set of probability measures which satisfy Assumption~\ref*{assumption:basic}.\ref*{assum:tail} and Assumption~\ref*{assumption:basic}.\ref*{assum:mixing} with identical choices of $a$, $b$, $d_1$, and $d_2$. Let $\zeta_N $ be a sequence satisfying $\zeta_N \to \infty $ as $N \to \infty$. 
Assume $T^{\delta_1} \leq N$ for some universal constant $\delta_1 >0$. Let Assumption~\ref*{assum:kernel} hold and $\kappa_N \asymp T^{\rho}$ where $0 <\rho < (\vartheta -1)/(3\vartheta -2)$. Then, there exist two constants $c_1>0$ depending only on $(\rho, \vartheta)$ and $c_2>0$ depending only on $(d_2, \vartheta)$, such that 
\begin{align*}
	\sup_{P\in \mathbb{P}_N} P \bigg( \sup_{1 \leq i \leq N} \bigg| &  \sum_{j=-T+1}^{T-1} K\left( \frac{j}{\kappa_N} \right)  \\
	& \times \frac{1}{T} \sum_{t=|j|+1}^T \left(\xi_{i,t+ \min (0, j)} - \frac{1}{T}\sum_{u=1}^T \xi_{iu} \right)  \left( \xi_{i,t- \max (0, j)} - \frac{1}{T}\sum_{u=1}^T \xi_{iu} \right)' \\ 
	& \quad  -  \left( \frac{1}{T}\sum_{t=1}^{T} \sum_{s=1}^T \E_P [\xi_{it} \xi_{is} ' ] \right) \bigg|_{\infty} \\
	>  & \zeta_N \left( T^{-c_1} (\log N)^{c_2} + T^{-\rho}\right)\bigg) = o(1).
\end{align*}
\end{lemma}

\begin{proof}
	We apply Theorem 11(i) of \textcite{chang2022central}. In their theorem, $B_n$ is the bound for the Orlicz norm. In our case, this bound depends only on $K$, $a$ and $d_1$ under Assumption~\ref*{assumption:basic}.\ref*{assum:tail} by \textcite[][Lemma 8.1]{kosorok2008introduction} and Lemma~\ref*{lem:exp-tail}. Their $\gamma_1$ is 1 in our case by Assumption~\ref*{assumption:basic}.\ref*{assum:tail}. By Lemma~\ref*{lem:mixing-g}, $\gamma_2$ in \textcite{chang2022central} depends only on $d_2$.  The conditions for the kernel and the bandwidth are assumed. Thus, Theorem~11(i) in \textcite{chang2022central} can be applied. Note that their results are stated in terms of stochastic order, but an inspection of their proof reveals that the constant terms hidden in the stochastic order depend only on the constants in the assumptions. 
\end{proof}

\begin{lemma}[Tail bounds for functions]
	\label{lem:exp-tail}
	Suppose that two random variances $X_1$ and $X_2$ satisfy $P(|X_a| >x ) \leq C_a \exp( - b_a x^{d_a}) $ for $a=0,1$, then $P(|X_1 X_2| >x ) \leq C \exp( - b x^{d}) $ for some positive constants $C$, $b$, and $d_2$, and $P(|X_1 + X_2| >x ) \leq C' \exp( - b' x^{d'}) $ for some constants $C'$, $b'$, and $d'$.	
\end{lemma}
\begin{proof}
    The first statement follows because
    \begin{align*}
        P (|X_1 X_2| >x) 
        \leq & P(|X_1 | > \sqrt{x} ) + P ( |X_2|> \sqrt{x}) \\
        \leq & C_1 \exp (-b_1 x^{d_{1}/2} ) + C_2 \exp ( - b_2 x^{d_{2}/2} ) \\
        \leq & 2 \max (C_1, C_2 ) \exp ( - \min ( b_1, b_2) z^{\min (d_{1}, d_{2}) /2} ).
    \end{align*}
    For the second statement, we have 	
    \begin{align*}
        P (|X_1 +  X_2| >x) 
        \leq & P(|X_1 | > x / 2 ) + P ( |X_2|> x / 2) \\
        \leq & C_1 \exp (-b_1/2^{d_{1}} x^{d_{1}} ) + C_2 \exp ( - b_2/ 2^{d_{2}} x^{d_{2}} ) \\
        \leq & 2 \max (C_1, C_2 ) \exp \left( - \min ( b_1/2^{d_{1}}, b_2/ 2^{d_{2}}) x^{\min (d_{1}, d_{2}) } \right).
    \end{align*}
\end{proof}
	
\begin{lemma}[Tail bounds for norms]
    \label{lem:tail bounds norms}
    Let $X_1$ and $X_2$ denote two random vectors such that there are constants $K$, $b$ and $d$ such that for any component $Y$ of $X_1$ and $X_2$, $P (\lvert Y \rvert > x) \leq C \exp \left(-b x^d\right)$. Then, there are constants $C'$, $b'$ and $d'$ such that 
    \begin{align*}
        P \left( \lVert X_1 \rVert > x \right) \leq & C' \exp \left(- b' x^{d'} \right),
    \\
        P \left( \lVert X_1 \rVert^2 > x \right) \leq & C' \exp \left(- b' x^{d'} \right),
    \\
        P \left( \lVert X_1 \rVert \lVert X_2 \rVert > x \right) \leq & C' \exp \left(- b' x^{d'} \right).
    \end{align*}
\end{lemma}

\begin{proof}
The second statement follows from the first statement. The third statement follows from the second statement of this lemma and the first statement of Lemma~\ref{lem:exp-tail}. It remains to prove the first statement. Let $X_1 = (Y_1, \dotsc, Y_p)$ and note that 
$P (\lvert Y_j^2 \rvert > x) \leq C \exp \left(-b x^{d/2}\right)$ for $j = 1, \dotsc, p$. Now, the first statement follows from writing  
\begin{align*}
    \lVert X_1 \rVert^2 = Y_1^2 + \dotsm Y_p^2 
\end{align*}
and applying the second statement of Lemma~\ref{lem:exp-tail} repeatedly.
\end{proof}

\begin{lemma}[Functions of mixing sequences]
	\label{lem:mixing-g}
	Suppose that $(x_{it}, w_{it}, v_{it})$ is a strong mixing sequence over $t$ with mixing coefficients $\sup_i a_i [t] \leq C \exp (-at^{d}) $ for constants $C$, $a$ and $d$, then so is $ g( x_{it}, w_{it}, v_{it})$ where $g$ is a measurable function. 
\end{lemma}
	
\begin{proof}
    The proof follows the argument in the proof of Theorem 14.1 in \textcite{davidson1994stochastic}.
\end{proof}

\begin{lemma}[Large quantiles of the normal distribution]
	\label{lem:bounds:large:quantiles}
	Let $X$ denote a standard normal vector with $p \times p$ correlation matrix $\Omega$ and let $0 \leq d < 2$. Let $c_{\alpha, N}$ denote the $1 - \alpha/N$ quantile of $X$. 
	Then there is a constant $N_0$ that depends only on $\underline{\alpha}$ and $d$ such that for $\underline{\alpha} \leq \alpha \leq 1$ and $N \geq N_0$
	\begin{align*}
		\sqrt{d \log (N/\alpha)} \leq c_{\alpha, N} \leq \sqrt{2 \log p} + \sqrt{2 \log (N/\alpha)}.
	\end{align*}
\end{lemma}
\begin{proof}
    The upper bound is given in Lemma D.4 in \textcite{chernozhukov2013testing}. To prove the lower bound put $a_N = \sqrt{d \log (N/\alpha)}$. Let $\Phi$ denote the cumulative distribution function of a standard normal random variable, and let $\phi$ denote its probability density function. Gordon's lower bound (see, e.g., \textcite{duembgen2010bounding}) states that 
    \begin{align*}
        1 - \Phi(x) > \frac{\phi(x)}{x(1 - 1/x^2)}
    \end{align*}
    for $x > 0$ and thus $1 - \Phi(x) > \frac{1}{2} \phi(x)/x$ for $x > \sqrt{2}$. Therefore, 
    \begin{align*}
        P\left(\max_{j = 1, \dotsc, p} X_j > a_N\right) 
        \geq & P\left(X_1 > a_N \right)
    \\
        = & 1 - \Phi(a_N) 
    \\
        > &  \frac{\phi(a_N)}{2 a_N} 
        = \frac{\exp\left(- \frac{a_N^2}{2} \right)}{a_N \sqrt{8 \pi}}
        = \frac{\left(\alpha/N\right)^{d/2}}{a_N \sqrt{8 \pi}}
        =  \alpha / N 
        \left( 
            \frac{\left(N/\alpha\right)^{1 - d/2}}{a_N \sqrt{8 \pi}}
        \right)
    \\
        \geq & \alpha / N 
        \left( 
            \frac{N^{1 - d/2}}{ \sqrt{8 d \pi \log (N/\underline{\alpha})}}
        \right) \geq \alpha / N,
    \end{align*}
    where the last inequality holds for $N \geq N_0$ and $N_0$ is chosen such that $N \geq N_0$ implies that $N^{1-d/2}/\sqrt{8 d \pi \log (N/\underline{\alpha})} \geq 1$. Such an $N_0$ can be found since  $N^{1-d/2}/\sqrt{8 d \pi \log (N/\underline{\alpha})} \to \infty$.
    The inequality $P(\max_{j=1, \dotsc, p} X_j > a_N) > \alpha/N$ implies $c_{\alpha, N} \geq a_N$.
\end{proof}
\begin{lemma}[Long-run variance is finite]
	\label{lem:lv-exists}
    Let $\xi_{it}$ denote any element of the vectors
    $v_{it} x_{it}$, 
    $\vecop (x_{it}x_{it}') - \E_P \vecop (x_{it}x_{it}')$, 
    $\vecop (w_{it}x_{it}') - \E_P \vecop (w_{it}x_{it}')$, 
    and $\vecop (v_{it}^2 x_{it}x_{it}') - \E_P \vecop (v_{it}^2 x_{it}x_{it}')$,
    or any of the random variables $\lVert x_{it} \rVert^2$, $\lVert x_{it} \rVert \lVert w_{it} \rVert$, $\lvert v_{it} \rvert \lVert x_{it} \rVert$,  $\lVert x_{it} \rVert^4$, $\lVert x_{it}^2 \rVert \lVert w_{it}^2 \rVert$ and $\lvert v_{it}^2 \rvert \lVert x_{it}^2 \rVert$. 
    Let $\mathbb{P}_N$ be a set of probability measures which satisfy Assumption~\ref*{assumption:basic}.\ref*{assum:tail}-\ref*{assumption:basic}.\ref*{assum:stationarity} with identical choices of $a$, $b$, $d_1$, and $d_2$. 
    Let \begin{align*}
        s_{i, T}^2 (P)   = \max_{1 \leq t \leq T} \left(\E_P (\xi_{it}^2) + 2 \sum_{s>t} \left| \E( \xi_{it} \xi_{is})\right| \right).
    \end{align*} 
    Then, there exists a constant $C_{\xi} < \infty$ such that 
    \begin{align*}
        \limsup_{N,T \to \infty} \sup_{P \in \mathbb{P}_N} \max_{1 \leq i \leq N} s_{i, T}^2 (P) < C_\xi.       
    \end{align*}
    In particular,  
    \begin{align*}
    \limsup_{N,T \to \infty} \sup_{P\in \mathbb{P}_N} \max_{1 \leq i \leq N}
    \Bigg\lvert \var_P \left( \frac{1}{\sqrt{T}} \sum_{t=1}^T \xi_{it} \right) \Bigg\rvert
    < C_\xi.
    \end{align*}
\end{lemma}

\begin{proof}
	Lemma \ref{lem:exp-tail} and Lemma~\ref{lem:tail bounds norms} imply $P (|\xi_{it}| > z ) < \exp (- (z/a_{\xi})^{d_{1, \xi}})$ for some $a_{\xi}$ and $d_{1, \xi}$, which in turns implies that $\E_P (\xi_{it}^m) < M_{\xi_p}$ for some universal constant $M_{\xi} < \infty $ for any integer $m$ by Lemma \ref{lem:moment}. Moreover, Lemma \ref{lem:mixing-g} implies that $\xi_{it}$ is an $\alpha$-mixing sequence with mixing coefficient $\sup_i \alpha_{i, \xi} [k] \leq \exp (1- b_{\xi} k^{d_{2, \xi}})$. Thus, by the argument in  \textcite[][Section C.1]{GalvaoKato14}, which is an application of \textcite{davidov68}, it holds that, for any $s > t$ and any integer $m$, 
    \begin{align*}
        \lvert \E_P (\xi_{it} \xi_{is}) \rvert \leq  12 (\E_P ( |\xi_{it} |^m ))^{2/m}  \left(\alpha_{i, \xi}[s - t]\right)^{1-2/m}.
    \end{align*}
    In particular, 
    \begin{align*}
        2 \sum_{s > T}\lvert \E_P (\xi_{it} \xi_{is}) \rvert \leq  24 (\E_P ( |\xi_{it} |^m ))^{2/m}  \sum_{s > T} \left(\alpha_{i, \xi}[s - t]\right)^{1-2/m}.
    \end{align*}
    The right-hand side is bounded by a constant $C_\xi'$ that depends only on $a$, $b$, $d_1$, and $d_2$. This follows from the existence of moments and the mixing property of $\xi_{it}$. Note that the stationarity assumption is used to apply the result of \textcite{davidov68}.
\end{proof}    

\begin{lemma}[Exponential tail bound implies existence of moments]
	\label{lem:moment}
	Suppose that a random variable $X$ satisfies that $P (|X| > x) < C \exp (-(x/a)^{d})$ for some $C, a >0 $ and $d>1$. Then, for any integer $p$, $\E | X|^p  <M$ for $M$ depending only on $C,a, d$ and $p$. 
\end{lemma}
\begin{proof}
		By the argument given in \textcite[][page 129]{kosorok2008introduction}, which is based on the series expansion of the exponential function, we have $( \E(|X|^p ))^{1/p} \leq p! || X||_{\psi_1}$ where $|| \cdot ||_{\psi_a}$ is the Orlicz norm with $\psi_a (x) = \exp (x^a)-1$ as defined in the proof of Lemma \ref{lem:clt-dependent}. By \textcite[][Lemma 8.1]{kosorok2008introduction}, $\lVert X \rVert_{\psi_1}$ is bounded by a constant which depends on $C$, $a$ and $d$. 
	\end{proof}

\begin{lemma}[Large CLT for mixing sequences]
	\label{lem:clt-dependent}
		Suppose that $\{\{ X_{jt}\}_{j=1}^J\}_{t=1}^T$ is an $\alpha$-mixing sequence (as a sequence indexed by $t$) with mixing coefficients $\alpha (k)$. Suppose that $T^{\delta_1} \leq J$ for some $\delta_1 >0$. Let $S_J = T^{-1/2} \sum_{t=1}^T(X_{1t}, \dots, X_{Jt})'$. Let $G \sim N(0,\Xi)$, where $\Xi$ is the long-run covariance matrix of $(X_{1t}, \dots, X_{Jt})$. Assume the following three conditions:
		\begin{enumerate}
			\item There exist some universal constants $C_1>0$, $a>0$ and $d_1>0$ such that $P( |X_{jt} | >x ) < C_1 \exp (- (1/a)^{d_1} x^{d_1})$ for all $t \in \{1,\dots, T\}$ and $j \in \{ 1, \dots, J\}$.
			\item There exist some universal constants $C_2>1$, $b>0 $ and $d_2 >0 $ such that $\alpha (k) \leq C_2 \exp (-b k^{d_2})$ for any $k\geq 1$. 
			\item There exists a universal constant $C_3 >0 $ such that $V_{T,j} \geq C_3$ for any $j \in \{1 , \dots, J\} $, where $V_{T,j} = \var ( \sum_{t=1}^T X_{jt} / \sqrt{T}) $.  
		\end{enumerate}
		For $a\in \mathbb{R}^N$, define $A (a) = \{x \in \mathbb{R}^N : x_j \leq a_j \; \text{for $j =1, \dotsc, J$.}\}$. Let $\mathcal{A}= \bigcup_{a\in  \mathbb{R}^J} A(a)$. Then, it holds that
		\begin{align*}
		\sup_{P\in \mathbb{P}} \sup_{A \in \mathcal{A}} \left| P(S_J \in A ) - P (G \in A) \right| 
		 \lesssim    \frac{ ( \log J)^{(1 + 2 d_2 )/(3 d_2 )}}{T^{1/9}} + \frac{( \log J )^{7/6}}{T^{1/9}}
		\end{align*}
		provided that $(\log J)^{3-d_2} = o ( T^{d_2/3} )$ and $\mathbb{P}$ is a collection of probabilities measures under which the above three conditions are satisfied with identical choices of $C_1$, $C_2$. $C_3$, $a$, $b$, $d_1$ and $d_2$.
	\end{lemma}

	\begin{proof}
		The lemma follows by Theorem 1 of \textcite{chang2022central}, noting the remark at the beginning of Section 2.1 of \textcite{chang2022central}. Theorem 1 of \textcite{chang2022central} has three conditions, and the second and third conditions are given in the statement of the lemma. The first condition is ``There exist a sequence of constants $B_J \geq 1$ and a universal constant $d_1 \geq 1$ such that $|| X_{jt} ||_{\psi_{d_1}} \leq B_J$ for all $t \in \{1,\dots, T\}$ and $j \in \{ 1, \dots, J\}$'', where $|| \xi ||_{\psi_{\alpha}} = \inf [ \lambda >0 : E( \psi_{\alpha} ( | \xi | / \lambda )) \leq 1 ] $ for $\psi_{\alpha} (x) = \exp (x^{\alpha} )-1$ (the Orlicz norm with $\psi_{\alpha}$). By Lemma 8.1 of \textcite{kosorok2008introduction}, $P( |X_{jt} | >x ) < C_1 \exp (- (1/a)^{d_1} x^{d_1})$ implies this condition by taking  $B_J = ((1+ C_1 / (1/a)^{d_1}))^{1/d_1}$, which is constant if $C_1$, $a$ and $d_1$ are constant.
	\end{proof}

    \begin{lemma}
        \label{lem:dhat-d-lv}
        Let $\mathbb{P}_N$ be the set of probability measures which satisfy Assumption~\ref*{assumption:basic} with identical choices of $a$, $b$, $d_1$ and $d_2$. 
        Assume that there are finite constants $0 < \delta_1 < \delta_2$ such that $T^{\delta_1} \leq N \leq o(1)T^{\delta_2}$. Let Assumption~\ref*{assum:kernel} hold with $\kappa_N \asymp T^{\rho}$ where $0 <\rho < (\vartheta -1)/(3\vartheta -2)$.
        Let 
        \begin{align*}
            b_N^{LV*} =  \frac{ (\sqrt{T}  \vee \log N) r_{\theta, N}}{ \iota_N \min_{1 \leq i \leq N}\sigma_i} +  T^{-c_1} (\log N)^{c_2} + T^{-\rho},
        \end{align*}
        where $c_1>0$ and $c_2>0$ are two constants defined in Lemma \ref{lem:lv-consistency} with $c_1$ depending only on $(\rho, \vartheta)$ and $c_2$ depending only on $(d_2, \vartheta)$.
        Assume that $b_N^{LV*} \to 0$. 
        Then for any sequence $\zeta_N$ such that $\zeta_N \to \infty$ as $N,T \to \infty$, 
        \begin{align*}
        \sup_{P\in \mathbb{P}_N} P \left( \max_{1 \leq i \leq N} \max_{(h,h^*) \in \mathbb{G}^2} \left| \widehat{D}_{i}(g_i^0, h) - D_{i}(g_i^0, h) \right| > \zeta_N b_N^{LV*} \right) = o(1).
    \end{align*}	
    \end{lemma}

        \begin{proof}
        Throughout the proof, let $C$ denote a generic constant that does not depend on $P \in \mathbb{P}$ and whose value may change between different equations.
        Let $\delta_i(h) = \theta_{g_i^0} - \theta_h$ and $\hat{\delta}_i(h) = \hat{\theta}_{g_i^0} - \hat{\theta}_h$. Let $\widehat{\xi}_i(h) = \sqrt{\widehat{\Xi}_i (h,h)}$ and $\xi(h) = \sqrt{\Xi_i(h,h)}$. Let 
        \begin{align*}
            b_N^{LV} =  \frac{r_{\theta, N}}{ \iota_N \min_{1 \leq i \leq N}\sigma_i} +  T^{-c_1} (\log N)^{c_2} + T^{-\rho}. 
        \end{align*}
        By the inequality $\lvert 1 - \sqrt{a} \rvert \leq \lvert 1 - a \rvert$, 
        \begin{align*}
            \left\lvert 1 - \frac{\hat{\xi}_i (h)}{\xi_i (h)} \right\rvert 
            \leq & \frac{\left\lvert \hat{\xi}_i (h, h') - \xi_i (h, h')\right\rvert}{\xi_i (h, h')} 
        \\ 
            \leq & \frac{\sigma_i^2 \lVert \delta_i(h) \rVert \lVert \delta_i (h') \rVert}{\xi_i (h, h')} \frac{\left\lvert \hat{\xi}_i (h, h') - \xi_i (h, h')\right\rvert}{\sigma_i^2 \lVert \delta_i(h) \rVert \lVert \delta_i (h') \rVert}.
        \end{align*}
        By Assumption~\ref*{assumption:basic}.\ref*{assumption:max:min_eigenvalue}, $\xi_i (h, h') / (\sigma_i^2 \lVert \delta_i(h) \rVert \lVert \delta_i (h') \rVert)$ is bounded away from zero. Moreover, with probability approaching one, 
        \begin{align*}
            \frac{\left\lvert \hat{\xi}_i (h, h') - \xi_i (h, h')\right\rvert}{\sigma_i^2 \lVert \delta_i(h) \rVert \lVert \delta_i (h') \rVert} \leq \zeta_N b_N^{LV}
        \end{align*}
        uniformly over $i = 1, \dotsc, N$ and $h, h' \in \mathbb{G} \setminus \{g_i^0\}$. 
        Therefore, we can take 
        \begin{align}
            \label{eq:conv little xi hat}
            \lvert 1 - \hat{\xi}_i (h) \big/ \xi_i (h) \rvert \leq \zeta_N b_N^{LR}. 
        \end{align}
        Next we consider $\norm{T^{-1/2} \sum_{t = 1}^T v_{it} x_{it}}$. Consider any component $x_{it, p}$ of $x_{it}$. Set $\xi_{it} = v_{it} x_{it, p}$ in Lemma~\ref{lem:lv-exists} and conclude that 
        \begin{align*}
            s_T^2 = \max_{1 \leq i \leq N} \max_{1 \leq t \leq T} \left(
                \E (\xi_{it}^2) + 2 \sum_{s > t} \left\lvert \E (\xi_{it} \xi_{is}) \right\rvert 
            \right)
        \end{align*}
        is bounded and fulfills the condition in Lemma~\ref{lem:fuk-nagaev-dependent}. By Lemma~\ref{lem:exp-tail} and Lemma~\ref{lem:mixing-g}, $\xi_{it}$ satisfies the tail and mixing conditions for $X_{it}$ in Lemma~\ref{lem:fuk-nagaev-dependent}. Now, applying Lemma~\ref{lem:fuk-nagaev-dependent}
        \begin{align}
            \label{eq:vit xit bound}
            \max_{1 \leq i \leq N}   \left\lVert
                T^{-1/2} \sum_{t = 1}^T v_{it} x_{it}
            \right\rVert 
            = O_p \left(\log N\right).
        \end{align}
        Similarly, it can be argued that 
        \begin{align}
            \label{eq:conv norm to expected norm}
            \begin{aligned}
                \max_{1 \leq i \leq N}    \left|  T^{-1/2} \sum_{t = 1}^T \left( \lVert x_{it} \rVert^2 - \E \lVert x_{it} \rVert \right) \right|
            =& O_p \left(\log N\right),            
        \\
        \max_{1 \leq i \leq N}   \left|  T^{-1/2} \sum_{t = 1}^T \left( \lVert x_{it} \rVert \lVert w_{it} \rVert - \E \lVert x_{it} \lVert w_{it} \rVert \rVert \right) \right|
                =& O_p \left(\log N\right), 
        \\
        \max_{1 \leq i \leq N}    \left|   T^{-1/2} \sum_{t = 1}^T \left( \lvert v_{it} \rvert \lVert x_{it} \rVert - \E \lvert v_{it} \rvert \lVert x_{it} \rVert \right) \right|
            =& O_p \left(\log N\right).   
            \end{aligned}         
        \end{align}
        We write
        \begin{align*}
            \widehat{D}_i(h) - D_i(h) = & \left( \frac{\xi_i(h)}{\hat{\xi}_i(h)} - 1 \right) D_i(h)
            + 
            \frac{\frac{1}{\sqrt{T}} \sum_{t = 1}^T \left(\hat{d}_{it} (h) - d_{it} (h) \right)/(\sigma_i \norm{\delta_i(h)})}{\hat{\xi}_{i}(h)/(\sigma_i \norm{\delta_i(h)})} 
        \\
            \equiv & J_1 + J_2.
        \end{align*}
        We bound $J_1$ by writing 
        \begin{align*}
            \abs{ \left(\frac{\xi_i(h)}{\hat{\xi}_i(h)} - 1 \right) D_i (h)} 
            = & 
            \abs{
                \frac{\xi_i(h)}{\hat{\xi}_i(h)} \left(1 - \frac{\hat{\xi}_i (h)}{\xi_i(h)}\right)
                \frac{\frac{1}{T} \sum_{t = 1}^T v_i x_i'\delta_i(h) / \norm{\delta_i(h)}}{\xi_i(h) / (\sigma_i \norm{\delta_i(h)})}
            }
            \\ 
            \leq & 
                \frac{\xi_i(h)}{\hat{\xi}_i(h)} \abs{1 - \frac{\hat{\xi}_i (h)}{\xi_i(h)}}
                \left\lVert\frac{1}{\sqrt{T}} \sum_{t = 1}^T v_{it} x_i\right\rVert
                \left(
                \frac{\xi_i(h)}{\sigma_i \norm{\delta_i (h)}}
                \right)^{-1}.
        \end{align*}
        The right-hand side is bounded by \eqref{eq:conv little xi hat}, \eqref{eq:vit xit bound}, noting that $\xi_i(h)/ ( \sigma_i \norm{\delta_i(h)})$ is bounded away from zero by Assumption~\ref*{assumption:basic}.\ref*{assumption:max:min_eigenvalue}, and observing that 
        \begin{align*}
            \frac{\hat{\xi}_i(h)}{\xi_i(h)}
            \geq 1 - \abs{\frac{\hat{\xi}_i(h)}{\xi_i(h)} - 1}
        \end{align*}
        in conjunction with \eqref{eq:conv little xi hat} implies a lower bound on $\hat{\xi}_i(h)/\xi_i(h)$. Hence, $J_1$ is bounded by $C\zeta_N b_N^{LV} \log N$ with probability approaching one. 
        
        To bound $J_2$, we derive a lower bound on its denominator from 
        \begin{align*}
            \frac{\hat{\xi}_i (h)}{\sigma_i \norm{\delta_i(h)}} = 
            \frac{\xi_i (h)}{\sigma_i \norm{\delta_i(h)}} \left \{ 
            \left( \frac{\hat{\xi}_i(h)}{\xi_i(h)} - 1\right) + 1
            \right\}
        \end{align*}
        in conjunction with \eqref{eq:conv little xi hat} and noting that $\xi_i(h)/ ( \sigma_i \norm{\delta_i(h)})$ is bounded away from zero by Assumption~\ref*{assumption:basic}.\ref*{assumption:max:min_eigenvalue}.
        For the numerator in $J_2$, we observe the following decomposition:
        \begin{align*}
            \frac{\hat{d}_{it}(h) - d_{it}(h)}{\sigma_i \norm{\delta_i(h)}} 
            = & \frac{1}{2} \frac{x_{it}' \left(\hat{\theta}_{g_i^0} - \theta_{g_i^0}\right) + w_{it}' \left(\hat{\theta}^w - \theta^w \right)}{\sigma_i} x_{it}' \left( \frac{\hat{\delta}_i(h)}{\norm{\delta_i(h)}} \right) 
        \\
            & - \frac{1}{2} v_{it} x_{it}' \frac{\left(\hat{\delta}_i(h) - \delta_i (h)\right)}{\norm{\delta_i(h)}}.
        \end{align*}
        In the following arguments, we use that 
        \begin{align*}
            \norm{ \hat{\delta}_i(h)}/\norm{\delta_i(h)} 
            \leq 1 + \frac{\norm{\hat{\delta}_i(h) - \delta_i(h)}}{\norm{\delta_i(h)}}
        \end{align*}
        is bounded by the fact that $r_{\theta, N} = o (1 \wedge \iota_N)$. 
        With probability at least $1-a_{\theta, N}$, we bound 
        \begin{align*}
            \left\lvert \frac{\hat{d}_{it}(h) - d_{it}(h)}{\sigma_i \norm{\delta_i(h)}} \right\rvert 
            \leq & C \left( \frac{\norm{x_{it}}^2 + \norm{x_{it}} \norm{w_{it}}}{\min_{1 \leq i \leq N} \sigma_i} + \frac{\abs{v_{it}}\norm{x_{it}}}{\iota_N} \right) r_{\theta, N}
        \\
            \leq &  r_{\theta, N} C \left(
                \frac{\E \norm{x_{it}}^2 + \E \norm{x_{it}} \norm{w_{it}}}{\min_{1 \leq i \leq N} \sigma_i} 
                + 
                \frac{\E \abs{v_{it}}\norm{x_{it}}}{\iota_N}
            \right)
        \\  \leq & r_{\theta, N} C \bigg(
            \begin{aligned}[t]
                & \frac{\norm{x_{it}}^2 +  \norm{x_{it}} \norm{w_{it}} - (\E \norm{x_{it}} \norm{w_{it}} + \E \norm{x_{it}} \norm{w_{it}})}{\min_{1 \leq i \leq N} \sigma_i} 
        \\    
                & + \frac{\abs{v_{it}} \norm{x_{it}} - \E \abs{v_{it}}\norm{x_{it}}}{\iota_N} \bigg).
            \end{aligned}
        \end{align*}	
        Noting that $\E \norm{x_{it}}^2$, $\E \norm{x_{it}} \norm{w_{it}}$ and $\E \abs{v_{it}} \norm{x_{it}}$ are bounded uniformly over $i$ and $t$ by Assumption~\ref*{assumption:basic}.\ref*{assum:tail}, \eqref{eq:conv norm to expected norm} implies 
        \begin{align*}
        \max_{1 \leq i \leq T}
        \left| \frac{1}{\sqrt{T}} \sum_{t=1}^T \frac{\hat{d}_{it}(h) - d_{it}(h)}{\sigma_i \norm{\delta_i(h)}} \right|
        \leq & r_{\theta, N} C (\min_{1 \leq i \leq N} \sigma_i \wedge \iota_N)^{-1} \left(\sqrt{T} + \log N \right) 
        \\
        \leq & r_{\theta, N} C (\min_{1 \leq i \leq N} \sigma_i \wedge \iota_N)^{-1} \sqrt{T},
        \end{align*}
        where the last inequality follows since $N \leq o (1) T^{\delta_2}$.
        The bounds on $J_1$ and $J_2$ yield the desired result.
    \end{proof}

    \begin{lemma}[Fuk-Nagaev-type inequality for mixing sequences]
        \label{lem:fuk-nagaev-dependent}
             Suppose that $X_{it}$ is a strongly mixing process with zero mean for each $i=1, \dots, N$  with tail probabilities $\sup_{i = 1, \dotsc, N} P (|X_{it}| > x) \leq \exp (1-(x/a)^{d_1})$ and with strong mixing coefficients $\sup_{i = 1, \dotsc, N} a_i [t] \leq \exp ( -bt^{d_2} ) $, where $a$, $b$, $d_1$, and $d_2$ are positive constants. Let $\mathbb{P}_N$ denote a sequence of sets of probability measures that satisfy the above conditions with given values of $a$, $b$, $d_1$, and $d_2$. 
            Let \begin{align*}
            s_T^2   = \max_{1 \leq i \leq N} \max_{1 \leq t \leq T} \left(\E (X_{it}^2) + 2 \sum_{s>t} \left| \E( X_{it} X_{is})\right| \right) .
            \end{align*}
        Assume that $s_T^2 < C_s \log^{a_s} N$ for constants $C_s$ and $0 \leq a_s \leq 1$ which do not depend on $N, T$ nor $P$.
             Then, it holds that for any constant $C>0$, as $N,T\to \infty$ with $NT^{-\delta_2} \to 0$ for some $\delta_2>0$,
             \begin{align*}
        \sup_{P \in \mathbb{P}_N} P \left( \max_{1 \leq i \leq N}  \left| \frac{1}{T}\sum_{t=1}^T X_{it} \right| \geq  C T^{-1/2} \log N  \right) \to 0 .
        \end{align*}
        
        \end{lemma}
        
        \begin{proof}
            By the Bonferroni inequality and inequality (1.7) in \textcite{merlevede2011bernstein} which is an application of \textcite[][Theorem 6.2]{rio2017asymptotic} (the original French version was published in 2000), we have 
        \begin{align*}
            &\sup_{P \in \mathbb{P}} P  \left( \max_{1 \leq i \leq N} \left| \frac{1}{T}\sum_{t=1}^T X_{it} \right| \geq  x  \right) 
            \leq  \sup_{P \in \mathbb{P}}  \sum_{i=1}^N P \left( \left| \frac{1}{T}\sum_{t=1}^T X_{it} \right| \geq x \right)
             \\
            \leq & 4 N \left( 1+ \frac{T(x/4)^2}{r s_T^2 } \right)^{-r/2}  + 4 C N (x/4)^{-1} \exp \left(  - a \frac{(Tx/4)^{d}}{b^{d} r^{d}} \right),
        \end{align*}
        where $r\geq 1$, $d = (d_1^{-1} + d_2^{-1})^{-1}$ and $C'$ is a positive constant. 
        Thus, for $x= C T^{-1/2} \log N$, it holds that 
        \begin{align*}
        &	\sup_{P \in \mathbb{P}} P  \left( \max_{i = 1, \dotsc, N} \left| \frac{1}{T}\sum_{t=1}^T X_{it} \right| \geq  C T^{-1/2} \log N  \right) \\
            \leq & 4 N \left( 1+ \frac{ C^2 \log^2 N }{16 r s_T^2 } \right)^{-r/2} 
              + 16  (C'/C) N  T^{1/2} \log^{-1} N \exp \left(  - a \frac{(C T^{1/2} \log N )^{d}}{4^d b^{d} r^{d}} \right) \\
              =& 4 N \exp \left( -\frac{r}{2} \log \left( 1 +  \frac{ C^2 \log^2 N }{16 r s_T^2 }\right)  \right) 
        \\
            &  + 16  (C'/C) \frac{N  T^{1/2}}{\log N} \exp \left(  - a \frac{c^d}{4^db^d} \left(\frac{T^{1/2}\log N}{r} \right)^d  \right) 
        \end{align*}
        We take $r= T^{1/2 -c}$ for $0 < c < 1/2 $. The second term on the last line in the above display converges to zero because $T^{1/2} \log N /r = T^{c} \log N$ and $N T^{-\delta_2} \to 0$. 
        We now argue that the first term vanishes as well. 
        For $a$ close to zero, a second-order Taylor expansion of the natural logarithm function yields 
        \begin{align*}
            \log (1 + a) = a - \frac{1}{(1 + a^*)^2} a^2, 
        \end{align*}
        where $a^*$ is an intermediate value between zero and $a$. For $a$ close to zero, $1/(1 + a^*)$ is bounded and therefore 
        \begin{align*}
            \log (1 + a) \geq a + O (a^2).
        \end{align*}
        In particular, $\log (1 - a) \geq a + O (a^2)$. We set $a = C^2 \log^2 N / (16 T^{1/2 - c} s_T^2)$. Under the assumption of the lemma, $a \to 0$. 
        Now, the term in the exponential function can be bounded by 
        \begin{align*}
            -\frac{T^{1/2-c}}{2} \log \left( 1 +  \frac{ C^2 \log^2 N }{16 T^{1/2-c}  s_T^2 }\right) 
            \leq & -\frac{T^{1/2-c}}{2} \left\{ 
                \frac{ C^2 \log^2 N }{16 T^{1/2-c}  s_T^2 }
                + O \left(\left[\frac{ C^2 \log^2 N }{16 T^{1/2-c}  s_T^2 }\right]^2\right)
            \right\}
        \\
            \leq & -\frac{T^{1/2-c} C^2 \log^2 N }{32 T^{1/2-c}  s_T^2 } (1 + o (1)) 
        \\
            \leq & - \frac{ C^2 \log^2 N }{64 s_T^2 }.
        \end{align*}
        Thus, under $s_T^2 \leq  K \log^{a_s} N$ with $0 <a_s < 1$,
        \begin{align*}
            4 N \exp \left( -\frac{r}{2} \log \left( 1 +  \frac{ C^2 \log^2 N }{16 r s_T^2 }\right)  \right) \leq 4 N \exp \left(  - \frac{ C^2 \log^2 N }{64 s_T^2 }\right) \to 0.
        \end{align*}
    \end{proof}

    \begin{lemma}   
        \label{lem:no serial correlation}
        Let $\mathbb{P}_N$ be the set of probability measures which satisfy Assumption~\ref*{assumption:basic}.\ref*{assum:exogeneity}--\ref*{assumption:basic}.\ref*{assum:mixing} and Assumption~\ref*{assumption:basic}.\ref{assum:Omega negative correlations} with identical choices of $a$, $b$, $d_1$ and $d_2$. 
        Assume that there are finite constants $0 < \delta_1 < \delta_2$ such that $T^{\delta_1} \leq N \leq o(1)T^{\delta_2}$ and that 
        \begin{align*}
            \frac{r_{\theta, N}}{\iota_N \wedge \min_{1 \leq i \leq N} \sigma_i} \to 0.
       \end{align*}
       Suppose that Assumption~\ref*{assumption:no serial correlation} holds and $\kappa_N = 0$.
        Then, there is a constant $C$ such that  
        \begin{align*}
        \sup_{P\in \mathbb{P}_N} P \left( \max_{1 \leq i \leq N} \max_{(h,h^*) \in \mathbb{G}^2} \left| \widehat{D}_{i}(g_i^0, h) - D_{i}(g_i^0, h) \right| > C \frac{r_{\theta, N}\sqrt{T} }{\iota_N \wedge \min_{1 \leq i \leq N} \sigma_i} \right) = o(1)
        \end{align*}
        and for all $c > 0$
        \begin{align*}
            \sup_{P\in \mathbb{P}_N} P \left( \max_{1 \leq i \leq N} \max_{(h,h^*) \in \mathbb{G}^2} \left\lVert \widehat{\Omega}_{i}(g_i^0) - \Omega_{i}(g_i^0) \right\rVert > c \right) = o(1).
        \end{align*} 
    \end{lemma}
    
    \begin{proof}
    Following the arguments in Lemma~\ref{lem:dhat-d-lv}, we bound, with probability at least $1-a_{\theta, N}$,
    \begin{align*}
        \left\lvert \frac{\hat{d}_{it}(h) - d_{it}(h)}{\sigma_i \norm{\delta_i(h)}} \right\rvert 
        \leq & C \left( \frac{\norm{x_{it}}^2 + \norm{x_{it}} \norm{w_{it}}}{\min_{1 \leq i \leq N} \sigma_i} + \frac{\abs{v_{it}}\norm{x_{it}}}{\iota_N} \right) r_{\theta, N}
    \end{align*}
    and hence 
    \begin{align}
        \label{eq:no sc dhat minus d}
        \begin{aligned}
        \frac{1}{T} \sum_{t = 1}^T \frac{\hat{d}_{it}(h) - d_{it}(h)}{\sigma_i \norm{\delta_i (h)}} = O_p \left( \frac{r_{\theta, N}}{\iota_N \wedge \min_{1 \leq i \leq N} \sigma_i}\right)
        \\
        \frac{1}{T} \sum_{t = 1}^T \frac{(\hat{d}_{it}(h) - d_{it}(h))(\hat{d}_{it}(h') - d_{it}(h'))}{\sigma_i^2 \norm{\delta_i (h)}\norm{\delta_i (h')}} = O_p \left( \frac{r_{\theta, N}^2}{\iota_N^2 \wedge \min_{1 \leq i \leq N} \sigma_i^2}\right)
        \end{aligned}
    \end{align}
    uniformly in unit $i$ and probability measure $P \in \mathbb{P}_N$. 
    Following similar arguments as in the proof of Lemma~\ref{lem:dhat-d-lv}, Lemma~\ref{lem:fuk-nagaev-dependent} yields 
    \begin{align}
        \label{eq:no sc dh dhprime convergence}
        \begin{aligned}
        & \bigg\lvert \frac{1}{T} \sum_{t = 1}^T 
         \frac{d_{it}(h) d_{it}(h')}{\sigma_i^2 \norm{\delta_i(h)}\norm{\delta_i(h')}} 
         - \frac{\frac{1}{T} \sum_{t = 1}^T \E_P \left[d_{it} (h) d_{it}(h')\right]
        }{\sigma_i^2 \norm{\delta_i(h)}\norm{\delta_i(h')}} \bigg\rvert
    \\
        \leq & C 
        \frac{1}{T} \sum_{t = 1}^T \left( \frac{\norm{x_{it}}^2 + \norm{x_{it}} \norm{w_{it}}}{\min_{1 \leq i \leq N} \sigma_i} + \frac{\abs{v_{it}}\norm{x_{it}}}{\iota_N} \right)^2 r_{\theta, N}^2
    \\
        \leq & \frac{C \, r_{\theta, N}^2}{\iota_N^2 \wedge \min_{1 \leq i \leq N} \sigma_i^2} + o_p(1),
        \end{aligned}
    \end{align}
    where the $o_p(1)$ term is uniform over units $i$ and probability measures $P \in \mathbb{P}_N$. Noting that 
    \begin{align*}
        \bigg\lvert \frac{1}{\sqrt{T}}\sum_{t = 1}^T \frac{d_{it}(h)}{\sigma_i \lVert \delta_i (h) \rVert} \bigg\rvert 
        = \bigg\lVert \frac{1}{\sqrt{T}}\sum_{t = 1}^T v_{it} x_{it} \bigg\rVert, 
    \end{align*}
    Lemma~\ref{lem:fuk-nagaev-dependent} implies 
    \begin{align}
        \label{eq:no ser D bound}
        \bigg\lvert \frac{1}{\sqrt{T}}\sum_{t = 1}^T \frac{d_{it}(h)}{\sigma_i \lVert \delta_i (h) \rVert} \bigg\rvert = O_p \left(\log N\right)
    \end{align}
    uniformly in $i$ and $P$. Since Assumption~\ref*{assumption:basic}.\ref*{assum:tail} bounds the expectation $\E [v_{it}^2 x_{it} x_{it}']$ by a finite constant, we have 
    \begin{align}
        \label{eq:no sc dh square convergence}
        \begin{aligned}
        \frac{1}{T} \sum_{t = 1}^T \frac{d_{it}(h) d_{it}(h')}{\sigma_i^2 \norm{\delta_i(h)} \norm{\delta_i(h')}} 
        = & \frac{\delta_i(h)' \frac{1}{T} \sum_{t = 1}^T v_{it}^2 x_{it} x_{it}' \delta_i(h')}{\norm{\delta_i(h)} \norm{\delta_i(h')}} .
        \\
        \leq & \left\lVert \frac{1}{T} \sum_{t = 1}^T  \left( v_{it}^2 x_{it} x_{it}' - \E [v_{it}^2 x_{it} x_{it}'] \right)\right\rVert 
        \\
        & + \left\lVert \frac{1}{T} \sum_{t = 1}^T  \E [v_{it}^2 x_{it} x_{it}']\right\rVert 
        \\
        = & O_p \left(\log N / \sqrt{T}\right) = o_p(1)
        \end{aligned}
    \end{align}
    uniformly in $i$ and $P$.

    Now, combining the decomposition
    \begin{align*}
        & \frac{
        \frac{1}{T} \sum_{t = 1}^T \left(\hat{d}_{it} (h) - \bar{\hat{d}}_{it} (h) \right)\left(\hat{d}_{it} (h') - \bar{\hat{d}}_{it} (h') \right)
        }{
        \sigma_i^2 \norm{\delta_i(h)}\norm{\delta_i(h')}
        } - \frac{\frac{1}{T} \sum_{t= 1}^T \E_P \left[d_{it} (h) d_{it}(h')\right]
        }{\sigma_i^2 \norm{\delta_i(h)}\norm{\delta_i(h')}}
    \\
         =& \frac{1}{T} \sum_{t = 1}^T \left(\frac{\hat{d}_{it}(h) - d_{it}(h)}{\sigma_i \norm{\delta_i(h)}}\right)\left(\frac{\hat{d}_{it}(h') - d_{it}(h')}{\sigma_i \norm{\delta_i(h')}}\right)
    \\
        & + \frac{1}{T} \sum_{t = 1}^T \frac{d_{it}(h)}{\sigma_i \norm{\delta_i(h)}} \left(\frac{\hat{d}_{it} (h') - d_{it}(h')}{\sigma_i \norm{\delta_i(h')}}\right)
         + \frac{1}{T} \sum_{t = 1}^T \frac{d_{it}(h')}{\sigma_i \norm{\delta_i(h')}} \left(\frac{\hat{d}_{it} (h) - d_{it}(h)}{\sigma_i \norm{\delta_i(h)}}\right). 
    \\
        & - \left(\frac{1}{T} \sum_{t = 1}^T \frac{\hat{d}_{it} (h)}{\sigma_i \norm{\delta_i(h)}} \right)
        \left(\frac{1}{T} \sum_{t = 1}^T \frac{\hat{d}_{it} (h')}{\sigma_i \norm{\delta_i(h')}} \right) 
    \\
        & + \frac{1}{T} \sum_{t = 1}^T 
         \frac{d_{it}(h) d_{it}(h')}{\sigma_i^2 \norm{\delta_i(h)}\norm{\delta_i(h')}} 
         - \frac{\frac{1}{T} \sum_{t = 1}^T \E_P \left[d_{it} (h) d_{it}(h')\right]
        }{\sigma_i^2 \norm{\delta_i(h)}\norm{\delta_i(h')}}
    \end{align*}
    with \eqref{eq:no sc dhat minus d}, \eqref{eq:no sc dh dhprime convergence}, \eqref{eq:no ser D bound} and \eqref{eq:no sc dh square convergence} yields a constant $C$ such that 
    \begin{align*}
    \begin{aligned}
        \sup_{P \in \mathbb{P}} P \Bigg( 
        \max_{1 \leq i \leq N} \Bigg\lvert &
        \frac{
        \frac{1}{T} \sum_{t = 1}^T \left(\hat{d}_{it} (h) - \bar{\hat{d}}_{it} (h) \right)\left(\hat{d}_{it} (h') - \bar{\hat{d}}_{it} (h') \right)
        }{
        \sigma_i^2 \norm{\delta_i(h)}\norm{\delta_i(h')}
        }
        - \frac{1}{T} \sum_{t = 1}^T 
         \frac{d_{it}(h) d_{it}(h')}{\sigma_i^2 \norm{\delta_i(h)}\norm{\delta_i(h')}}
        \Bigg\rvert 
    \\
        & \geq C 
            \frac{r_{\theta, N}}{\iota_N \wedge \min_{1 \leq i \leq N} \sigma_i} 
        \Bigg) =o(1).
    \end{aligned}
    \end{align*}
    This implies the first statement of the lemma. 
    The proof of the second statement of the lemma is similar to the proof of Lemma~\ref{lem:dhat-d-lv}, but replacing all references to Lemma~\ref{lem:lv-consistency} by a reference to the result in the previous display.
    \end{proof}

    \begin{lemma}
        \label{lem:tdist:tail:upperbound}
        Let $\nu(N) \geq 1$ denote a sequence that converge to infinity and let $c_{N}(\alpha)$ denote the $(1 - \alpha/N)$-quantile of the $t$-distribution with $\nu(N)$ degrees of freedom. Suppose that $(\log N)/\nu(N) \to 0$.
        For each $\epsilon > 0$ and $0 < \underline{\alpha} < 1$ there is a threshold $N_0$ such that for $N \geq N_0$
        \begin{align*}
            \sup_{\underline{\alpha} \leq \alpha < 1} c_{N} (\alpha) \leq \sqrt{2 (1 + \epsilon) \log (N/\alpha)}.
        \end{align*}
        \end{lemma}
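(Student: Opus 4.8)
The plan is to bound the tail quantile of a $t$-distribution with a growing number of degrees of freedom by the corresponding Gaussian tail quantile, up to a negligible correction. First I would recall the elementary fact that a $t$-distributed random variable $X_{\nu}$ with $\nu$ degrees of freedom can be written as $Z / \sqrt{Q_{\nu}/\nu}$ where $Z \sim \mathcal{N}(0,1)$ and $Q_{\nu} \sim \chi^2_{\nu}$ independently. The event $\{X_{\nu} > t\}$ is therefore contained in $\{Z > t(1-\delta)\} \cup \{Q_{\nu}/\nu < (1-\delta)^2\}$ for any $\delta \in (0,1)$, so that
\begin{align*}
	1 - t_{\nu}(t) = P(X_{\nu} > t) \leq P\big(Z > t(1-\delta)\big) + P\big(Q_{\nu}/\nu < (1-\delta)^2\big).
\end{align*}
The second term is a large-deviations probability for a $\chi^2_{\nu}$ variable below its mean, which decays like $e^{-c(\delta)\nu}$ for a constant $c(\delta)>0$ depending only on $\delta$; since $\log N / \nu(N) \to 0$, this term is $o(N^{-K})$ for any fixed $K$, uniformly in $\alpha \in [\underline{\alpha},1)$, once $N$ is large.

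Next I would choose $t = t_N(\epsilon,\alpha) := \sqrt{2(1+\epsilon)\log(N/\alpha)}$ and a fixed $\delta = \delta(\epsilon) > 0$ small enough that $(1-\delta)^2 (1+\epsilon) > 1 + \epsilon/2$, say. For this $t$, the Gaussian tail bound $P(Z > x) \leq \tfrac{1}{2} e^{-x^2/2}$ gives
\begin{align*}
	P\big(Z > t(1-\delta)\big) \leq \tfrac{1}{2} \exp\!\Big( - (1-\delta)^2 (1+\epsilon) \log(N/\alpha) \Big) \leq \tfrac{1}{2} (\alpha/N)^{1 + \epsilon/2}.
\end{align*}
Since $\alpha < 1$, this is at most $\tfrac{1}{2}(\alpha/N)(1/N)^{\epsilon/2}$, which is strictly less than $\alpha/N$ for all $N$ large enough, uniformly over $\alpha \in [\underline{\alpha},1)$ (the worst case $\alpha \to 1$ only makes the bound larger, but it is still $o(N^{-1-\epsilon/4})$). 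Combining with the $\chi^2$ estimate, we get $1 - t_{\nu(N)}(t_N(\epsilon,\alpha)) < \alpha/N$ for $N \geq N_0$ with $N_0$ depending only on $\epsilon$ and $\underline{\alpha}$. By the definition of the quantile and monotonicity of $t_{\nu}$, this is exactly $c_N(\alpha) \leq t_N(\epsilon,\alpha) = \sqrt{2(1+\epsilon)\log(N/\alpha)}$, and taking the supremum over $\alpha \in [\underline{\alpha},1)$ finishes the argument.

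The main obstacle is making the estimate \emph{uniform} over $\alpha \in [\underline{\alpha},1)$ and choosing $N_0$ independent of $\alpha$: one must check that the $\chi^2$ large-deviation term, which does not involve $\alpha$ at all, dominates correctly and that the Gaussian term's dependence on $\alpha$ through $(\alpha/N)^{1+\epsilon/2}$ is controlled at the endpoint $\alpha \to 1$. A clean way to handle this is to absorb everything into the crude bound $P(Z > t(1-\delta)) \leq \tfrac12 (1/N)^{1+\epsilon/2} \leq \underline{\alpha}^{-1}\tfrac12 (1/N)^{\epsilon/2} \cdot (\alpha/N)$ and require $N^{\epsilon/2} > \underline{\alpha}^{-1}$, which gives an explicit $N_0 = N_0(\epsilon,\underline{\alpha})$; the $\chi^2$ term is handled separately by the condition $(\log N)/\nu(N) \to 0$, which guarantees $e^{-c(\delta)\nu(N)} = o(N^{-1})$ and hence can be added to the bound without affecting the strict inequality for $N$ large. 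A minor technical point worth stating explicitly is that Mills-ratio refinements are unnecessary here because the factor $(1+\epsilon)$ in the exponent provides all the slack needed; the argument is robust precisely because we are not chasing the optimal constant.
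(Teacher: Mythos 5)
Your proof is correct, and it takes a genuinely different route from the one in the paper. The paper works directly with the explicit $t$-density and the Mills-ratio-type inequality of Soms (1976), $1 - t_{\nu}(c_N) \leq f^t_{\nu}(c_N)\,c_N^{-1}(1 + c_N^2/\nu)$; it then needs a two-stage bootstrap — first a contradiction argument to establish $c_N^2/\nu = O(1)$, then a refinement using $\big(1 + c_N^2/\nu\big)^{\nu/c_N^2} \to e$ to extract the constant $2(1+\epsilon)$. You instead use the stochastic representation $X_{\nu} = Z/\sqrt{Q_{\nu}/\nu}$ and the inclusion $\{X_{\nu} > t\} \subseteq \{Z > t(1-\delta)\} \cup \{Q_{\nu}/\nu < (1-\delta)^2\}$, reducing everything to a one-line Gaussian tail bound plus a $\chi^2$ lower-deviation bound of order $e^{-c(\delta)\nu}$. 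Your version is shorter and more self-contained (no appeal to Soms), avoids any bootstrapping of a preliminary bound on $c_N$, and makes the role of the hypothesis $(\log N)/\nu(N) \to 0$ completely transparent: it is exactly what forces the $\chi^2$ term below $\underline{\alpha}/(2N)$. The price is that the truncation loses a factor $(1-\delta)^{-2}$ in the exponent, so the argument only works because the target constant carries the slack $(1+\epsilon)$ — as you correctly observe. The paper's density-based machinery is not wasted there, however: the same Soms-type bounds feed into the companion Mills-ratio estimate (Lemma~\ref{lem:tdist:bound-millsratio}) used in the proof of Theorem~\ref{thm:MAXSNS}, where a two-sided control of the tail is needed and your one-sided decomposition would not suffice. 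Your handling of uniformity in $\alpha$ is also fine: $\delta$ depends only on $\epsilon$, the $\chi^2$ term is independent of $\alpha$, and the endpoint $\alpha \uparrow 1$ is absorbed by the crude bound $(\alpha/N)^{1+\epsilon/2} \leq (\alpha/N)N^{-\epsilon/2}$, giving an explicit $N_0(\epsilon,\underline{\alpha})$.
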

\begin{proof}
        For notational convenience, write $\nu = \nu(N)$. We prove the bound for $\alpha = \underline{\alpha}$ and write $c_{N} = c_N(\alpha)$. The uniformity then follows from the monotonicity of the distribution function. 
        Clearly, $c_{N} \to \infty$ so we can take $c_N \geq 1$, provided that $N$ is large enough. 
        The density function of the $t$-distribution with $\nu$ degrees of freedom is given by 
        $f_\nu(x) = c(\nu) \left(1 + x^2/\nu\right)^{-\frac{\nu + 1}{2}}$, where
        \begin{align*}
            c(\nu) = \frac{\Gamma\left(\frac{\nu + 1}{2}\right)}{\sqrt{\nu \pi} \Gamma\left(\frac{\nu}{2}\right)} \to \frac{1}{\sqrt{2 \pi}}
        \end{align*}
        as $\nu \to \infty$. It follows that there is a universal constant $C$ such that $c(\nu) \leq C$.
        We first show that $c_N^2/\nu = O(1)$. The proof is by contradiction. Suppose that $\limsup_{N \to \infty} c_N^2/\nu = \infty$.
        Applying Theorem~1 in \textcite{soms1976asymptotic} with $n = 1$ yields 
        \begin{align}
            \label{eq:tdist:soms-upperbound}
            1 - F_{\nu}(c_N) \leq f_{\nu} (c_N) \frac{1}{c_N} \left(1 + \frac{c_N^2}{\nu}\right).
        \end{align}
        This implies that 
        \begin{align*}
            \frac{\alpha}{N} \leq c(\nu) \left(1 + \frac{c_N^2}{\nu}\right)^{-\frac{\nu + 1}{2}}\left(1 + \frac{c_N^2}{\nu}\right)
            \leq C \left(1 + \frac{c_N^2}{\nu}\right)^{-\frac{\nu - 1}{2}}.
        \end{align*}
        Taking logs and re-arranging gives 
        \begin{align*}
            \frac{\log(N/\alpha)}{\nu} \geq \frac{1}{2} \frac{\nu - 1}{\nu} \left(\log \left(1 + \frac{c_N^2}{\nu} \right) - C \right).
        \end{align*}
        The left-hand side of the inequality vanishes under the assumptions of the lemma, whereas a sub-sequence of the right-hand side diverges to infinity. This establishes that the inequality is impossible and therefore $c_N^2/\nu = O(1)$.
        This implies that there exists a constant $b$ such that 
        \begin{align*}
            1 < b \leq \left(  1 + \frac{c_N^2}{\nu} \right)^{\frac{\nu}{c_N^2}} \leq e, 
        \end{align*}
        so that we can take 
        \begin{align*}
            \left(\left(  1 + \frac{c_N^2}{\nu} \right)^{\frac{\nu}{c_N^2}} \right)^{-1}
            \leq e^{- \frac{\nu}{\nu + 1}(1 + \epsilon^*/2)^{-1}}
        \end{align*}
        for a positive $\epsilon^*$.
        Then
        \begin{align*}
            f_\nu(c_N) \leq C \left[ \left(  1 + \frac{c_N^2}{\nu} \right)^{\frac{\nu}{c_N^2}}\right]^{-\frac{c_N^2}{2} \left[\frac{\nu + 1}{\nu}\right]}
            \leq 
            C \exp \left( -\frac{c_N^2}{2} (1 + \epsilon^*/2)^{-1} \right).
        \end{align*}
        Take $N$ large enough that 
        \begin{align*}
        \frac{1}{1 + \epsilon^*/2} - \frac{4 \log c_N}{c_N^2} > \frac{1}{1 + \epsilon^*}.
        \end{align*}
        Then, the right-hand side of \eqref{eq:tdist:soms-upperbound} can be bounded by 
        \begin{align*}
            C \exp \left( - \frac{c_N^2}{2} (1 - \epsilon^*/2)^{-1}\right)
            \left(1 + \frac{c_N^2}{\nu}\right)
            \leq & 2 C \exp \left( - \frac{c_N^2}{2} \left((1 + \epsilon^*/2)^{-1} - \frac{4 \log c_N}{c_N^2} \right)\right)
            \\
            \leq & 2 C \exp \left( - \frac{c_N^2}{2} \left(1 + \epsilon^*\right)^{-1} \right)
            .
        \end{align*}
        Plugging in $1 - F_{\nu}(c_N) = \alpha/N$ and taking logs gives 
        \begin{align*}
            c_N^2 \leq & (1 + \epsilon^*) \log \left(N/\alpha\right) + \log (2C)
            \\
            \leq & 2 (1 + \epsilon^*) \log \left(N/\alpha\right) 
            \left(1 + \frac{1}{2(1 + \epsilon^*)} \frac{\log (2C)}{\log(N/\alpha)}\right).
        \end{align*}
        Hence, there is a constant $C$ such that $c_N^2 \leq C \log (N/\alpha)$. Using this inequality, we can now verify that 
        $c_N^2/\nu \to 0$ so that 
        \begin{align*}
            \left(  1 + \frac{c_N^2}{\nu} \right)^{\frac{\nu}{c_N^2}} \to e,
        \end{align*}
        allowing us to take $\epsilon^* = \epsilon/2$ for sufficiently large $N$. Taking $N$ large enough that 
        \begin{align*}
        (1 + \epsilon/2) \left(1 + \frac{1}{2(1 + \epsilon/2)} \frac{\log (2C)}{\log(N)}\right)
        \leq 1 + \epsilon
        \end{align*}
        yields 
        $c_N^2 \leq 2 (1 + \epsilon) \log \left(N/\alpha\right)$.  \end{proof}
        \begin{lemma}
        \label{lem:tdist:bound-millsratio}
        For $\nu \geq 1$, let $F_\nu$ and $f_\nu$ denote the distribution and density function of a $t$-distributed random variable with $\nu$ degrees of freedom.
        For $x^2 > 2$
        \begin{align*}
            f_{\nu}(x) < 2 x \left(1 - F_{\nu}(x) \right).
        \end{align*}
        \end{lemma}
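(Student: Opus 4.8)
The plan is to exhibit an explicit antiderivative comparison. Recall that the $t$-density with $\nu$ degrees of freedom is $f_\nu^t(x) = c(\nu)\,(1 + x^2/\nu)^{-(\nu+1)/2}$ for the constant $c(\nu) = \Gamma((\nu+1)/2)/(\sqrt{\nu\pi}\,\Gamma(\nu/2))$, so that $1 - t_\nu(x) = c(\nu)\int_x^\infty (1+u^2/\nu)^{-(\nu+1)/2}\,du$. First I would introduce the auxiliary function $\psi(u) = u^{-1}(1 + u^2/\nu)^{-(\nu-1)/2}$ and compute its derivative. A direct differentiation, using $(1+u^2/\nu)^{-(\nu-1)/2} = (1+u^2/\nu)(1+u^2/\nu)^{-(\nu+1)/2}$, gives
\begin{align*}
	-\psi'(u) = \left(1 + \frac{1}{u^2}\right)(1 + u^2/\nu)^{-(\nu+1)/2}.
\end{align*}

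Next, for $u \geq x > 0$ we have $1 + u^{-2} \leq 1 + x^{-2}$, hence $(1 + u^2/\nu)^{-(\nu+1)/2} \geq (1+x^{-2})^{-1}(-\psi'(u))$. Integrating this inequality from $x$ to $\infty$ and using $\psi(u) \to 0$ as $u \to \infty$ (valid for all $\nu \geq 1$) yields
\begin{align*}
	1 - t_\nu(x) \;\geq\; \frac{c(\nu)}{1 + x^{-2}}\,\psi(x) \;=\; \frac{x^2}{x^2+1}\cdot\frac{c(\nu)}{x}\,(1 + x^2/\nu)^{-(\nu-1)/2}.
\end{align*}
Rewriting $(1+x^2/\nu)^{-(\nu-1)/2} = (1+x^2/\nu)\,(1+x^2/\nu)^{-(\nu+1)/2} = (1+x^2/\nu)\,f_\nu^t(x)/c(\nu)$, this becomes $1 - t_\nu(x) \geq \frac{x(1+x^2/\nu)}{x^2+1}\,f_\nu^t(x)$.

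It then remains to check that the prefactor beats $1/(2x)$, i.e.\ that $\frac{x(1+x^2/\nu)}{x^2+1} > \frac{1}{2x}$, which rearranges to $2x^2(1 + x^2/\nu) > x^2 + 1$. Since $x^2 > 2$ forces $2x^2 > x^2 + 1$ already (and the extra term $2x^4/\nu$ is positive), the inequality holds strictly, completing the argument. There is no real obstacle here beyond guessing the correct comparison function $\psi$; once $\psi$ is in hand everything is a one-line monotonicity estimate plus an elementary algebraic inequality. (One small point to state carefully: the passage from the pointwise bound $(1+u^2/\nu)^{-(\nu+1)/2} \geq (1+x^{-2})^{-1}(-\psi'(u))$ to the integrated form is legitimate because both sides are integrable on $[x,\infty)$ and $\psi$ vanishes at infinity; the final strict inequality comes from the algebraic step, so no extra care about strictness in the integration is needed.)
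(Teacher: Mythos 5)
Your proof is correct. I checked the key computation: with $\psi(u)=u^{-1}(1+u^2/\nu)^{-(\nu-1)/2}$ one indeed gets $-\psi'(u)=(1+u^{-2})(1+u^2/\nu)^{-(\nu+1)/2}$, the monotonicity step and the limit $\psi(u)\to 0$ are valid for all $\nu\geq 1$, and the closing algebra $2x^2(1+x^2/\nu)>x^2+1$ follows strictly from $x^2>2$, which delivers the strict inequality in the statement.

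The strategy is the same as the paper's — lower-bound the tail $1-t_\nu(x)$ by a multiple of $f_\nu^t(x)/x$ and then use $x^2>2$ to show that multiple exceeds $1/(2x)$ — but the route to the tail bound differs. The paper simply invokes Theorem~1 of Soms (1976) with $n=2$, which gives $1-t_\nu(x)\geq (1+x^2/\nu)\bigl(1-\tfrac{\nu}{(\nu+2)x^2}\bigr)f_\nu^t(x)/x$, and then bounds the correction factor below by $1/2$. You instead derive a self-contained bound $1-t_\nu(x)\geq \tfrac{x(1+x^2/\nu)}{x^2+1}f_\nu^t(x)$ from an explicit antiderivative comparison; your correction factor $(1+x^{-2})^{-1}$ is not identical to Soms' $\bigl(1-\tfrac{\nu}{(\nu+2)x^2}\bigr)$ (and is in fact slightly sharper for large $\nu$), but both exceed $1/2$ under $x^2>2$. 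What your version buys is the removal of the external citation at the cost of a short calculus computation; what the paper's version buys is brevity. Either is acceptable.
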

        \begin{proof}
        Applying Theorem~1 in \textcite{soms1976asymptotic} with $n = 2$ yields the inequality
        \begin{align*}
            1 - F_\nu(x) \geq (1 + x^2/\nu)\left(1 - \frac{\nu}{(\nu + 2) x^2}\right) f_\nu(x) /x.
        \end{align*}
        Now, $x^2 > 2$ implies
        \begin{align*}
            1 - F_\nu(x)
            > \left(1 - \frac{1}{2}\right) f_\nu(x)/x.
        \end{align*}
        \end{proof}

    \begin{lemma}
        \label{lem:conv:Du}

        Let $\mathbb{P}_N$ denote a family of probability measures satisfying Assumptions~\ref*{assumption:basic} with parameters satisfying  $\norm{\theta_h} < M$ for some finite $M$ for any $h$.
        Assume $r_{\theta, N} = o (1 \wedge \iota_N)$ and $ r_{\theta, N}(\sqrt{T} + \sqrt{\log N})(\iota_N + \min_{1 \leq i \leq N} \sigma_i) =o(1)$.
        There are constants $C$ and $C'$ such that
        \begin{align}
        \label{eq:duhat_diff_bound}
         \sup_{P \in \mathbb{P}_N} P \left(
                \max_{1 \leq i \leq N} \bigg\lvert \frac{1}{T}\sum_{t= 1}^T 
                    \frac{ \hat{d}_{it}^U(g_i^0,h) - d_{it}^U (g_i^0,h)}{s_i^U(h)} 
                    \bigg\rvert 
                    \geq C \frac{r_{\theta, N}}{\iota_N + \min_{1 \leq i \leq N} \sigma_i} \right) =& o(1).
        \\
            \label{eq:DhatU_conv}
                \sup_{P \in \mathbb{P}_N} P 
                \bigg(
                    \max_{1 \leq i \leq N} \left\lvert 
                        \widehat{D}_{i}^U(g_i^0, h) - \tilde{D}_{i}^U(g_i^0, h)
                    \right\rvert 
                    \geq C' \frac{r_{\theta, N}(\sqrt{T} + \sqrt{\log N})}{\iota_N + \min_{1 \leq i \leq N} \sigma_i} 
                \bigg)
            = & o(1).
        \end{align}
        \end{lemma}

        \begin{proof}
        
        Throughout the proof, let $C$ denote a generic constant that does not depend on $P \in \mathbb{P}$.
        Let $\delta_i(h) = \theta_{g_i^0} - \theta_h$ and $\hat{\delta}_i(h) = \hat{\theta}_{g_i^0} - \hat{\theta}_h$. Note that \begin{align*}
            \norm{ \hat{\delta}_i(h)}/\norm{\delta_i(h)} 
            \leq 1 + \frac{\norm{\hat{\delta}_i(h) - \delta_i(h)}}{\norm{\delta_i(h)}}
        \end{align*}
        is bounded by the fact that $r_{\theta, N} = o (1 \wedge \iota_N)$.

        We observe
        \begin{align*}
            \hat d_{it}^U (h) - d_{it}^U (h) =& \frac{1}{2} \left((y_{it} - w_{it}'\hat \theta^w - x_{it}'\hat\theta_{g_i^0})^2 - (y_{it} - w_{it}'\hat \theta^w - x_{it}'\hat\theta_{h})^2  \right) \\
            & - \frac{1}{2}\left((y_{it} - w_{it}' \theta^w - x_{it}'\theta_{g_i^0})^2 - (y_{it} - w_{it}' \theta^w - x_{it}'\theta_{h})^2  \right)\\
             =& -u_{it} x_{it}'(\hat \delta_i (h) - \delta_i(h)) + w_{it}'(\hat \theta^w -\theta^w)x_{it}' \hat \delta_i (h) +  x_{it}' \hat \delta_i (h) x_{it}'(\hat \theta_{g_i^0} - \theta_{g_i^0}) \\
             & - \frac{1}{2} (x_{it}' \hat \delta_i (h) )^2 + \frac{1}{2} (x_{it}'  \delta_i (h) )^2.
        \end{align*} 
        Thus, we have, with probability at least $1- a_{\theta, N}$, 
        \begin{align}
            \left\lvert \frac{\hat{d}_{it}^U(h) - d_{it}^U(h)}{\sigma_i \norm{\delta_i(h)}} \right\rvert 
            \leq & C \left( \frac{\norm{x_{it}}^2 + \norm{x_{it}} \norm{w_{it}}}{\min_{1 \leq i \leq N} \sigma_i} + \frac{\abs{v_{it}}\norm{x_{it}}}{\iota_N} \right) r_{\theta, N}, \label{eq:dhatdu}\\
            \left(
                \frac{\hat{d}_{it}^U(h) - d_{it}^U(h)}{\sigma_i \norm{\delta_i(h)}}
            \right)^2 
            \leq & C \left( 
                \frac{\norm{x_{it}}^4 + \norm{x_{it}}^2 \norm{w_{it}}^2}{\min_{1 \leq i \leq N} \sigma_i^2}
                + \frac{\abs{v_{it}}^2 \norm{x_{it}}^2}{\iota_{N}^2}
            \right)r_{\theta, N}^2 \label{eq:dhatdu2}
        \end{align}
        and 
        \begin{align}
            \label{eq:ddhatdu}
        \begin{aligned}
            \left\lvert \frac{d_{it}^U(h)}{\sigma_i \norm{\delta_i(h)}}
            \left(
                \frac{\hat{d}_{it}^U(h) - d_{it}^U(h)}{\sigma_i \norm{\delta_i(h)}}
            \right) 
            \right\rvert 
            \leq 
            C \bigg(
                \frac{\norm{x_{it}}^4 + \abs{v_i} \norm{x_{it}}^3 + \abs{v_i} \norm{x_{it}}^2 \norm{w_{it}} + \norm{x_{it}}^3 \norm{w_{it}}}{\min_{1 \leq i \leq N} \sigma_i} \\ 
                + \frac{\abs{v_i}^2 \norm{x_{it}}^2 + \abs{v_i} \norm{x_{it}}^3}{\iota_{N}} 
            \bigg) r_{\theta, N},
            \end{aligned}
        \end{align}
        where \eqref{eq:ddhatdu} also rely on the compactness assumption in the theorem.

        Combining \eqref{eq:dhatdu2}, Lemmas \ref{lem:moment} and \ref{lem:fuk-nagaev-dependent}, the fact that $s_i^U(h) = \sigma_i  \norm{\delta_i(h)} \E \norm{x_{it}} + \E (\norm{x_{it}}^2) \norm{\delta_i(h)}^2 \geq  C\sigma_i \norm{\delta_i(h)}$ yields \eqref{eq:duhat_diff_bound}.
        
        Let 
        \begin{align*}
            \hat s_i^U (h)^2 =\frac{1}{T} \sum_{t=1}^T \left(\hat d_{it}^U (h) - \bar{\hat d }_{it}^U (h)\right)^2 , \quad 	\tilde s_i^U (h)^2 =\frac{1}{T} \sum_{t=1}^T \left(d_{it}^U (h) - \bar{ d }_{it}^U (h)\right)^2 
        \end{align*}
        We observe that 
        \begin{align*}
         \frac{	\hat s_i^U (h)^2 - \tilde s_i^U (h)^2}{\sigma_i^2 \norm{\delta_i(h) }^2} 
        = & \frac{1}{T} \sum_{t=1}^T \left( \frac{\hat d_{it}^U (h) - d_{it}^U (h)}{\sigma_i || \delta_i(h)||} \right)^2 
        + 2 \frac{1}{T} \sum_{t=1}^T \frac{d_{it}^U (h)}{\sigma_i \norm{ \delta_i(h)}} \frac{(\hat d_{it}^U (h) - d_{it}^U(h) )}{\sigma_i \norm{ \delta_i(h) }} \\
        & - T \left(\frac{1}{T}\sum_{t=1}^T  \frac{\hat d_{it}^U (h) - d_{it}^U (h)}{\sigma_i || \delta_i(h)||}  \right) \left( \frac{1}{T} \sum_{t=1}^T \frac{\hat d_{it}^U (h) - d_{it}^U (h)}{\sigma_i || \delta_i(h)||}  +\frac{2}{T} \sum_{t=1}^T \frac{d_{it}^U (h)}{\sigma_i || \delta_i(h)||} \right).
        \end{align*}
        Note that 
        \begin{align*}
        \left| \frac{1}{T} \sum_{t=1}^T \frac{d_{it}^U (h)}{\sigma_i || \delta_i(h)||} \right| \leq \left\Vert \frac{1}{T} \sum_{t=1}^T v_{it} x_{it} \right\Vert + \norm{\delta_i (h)} \frac{1}{T}\sum_{t=1}^T \norm{x_{it}}^2
        \end{align*}
        By Lemmas \ref{lem:moment} and \ref{lem:fuk-nagaev-dependent}, and the compactness condition, we have 
        \begin{align}
        \label{eq:du_rate}
             \sup_{P \in \mathbb{P}} P \left(
                \max_{1 \leq i \leq N} \bigg\lvert \frac{1}{T} \sum_{t= 1}^T
                    \frac{ d_{it}^U(h)}{\sigma_i \norm{\delta_i(h)}} 
                    \bigg\rvert 
                    \geq C (1+ T^{-1/2} \sqrt{\log N} )	\right) =o(1),
        \end{align}
        Combining Lemma \ref{lem:fuk-nagaev-dependent},  \eqref{eq:dhatdu}, \eqref{eq:dhatdu2}, \eqref{eq:ddhatdu}, and \eqref{eq:du_rate} yields
        \begin{align}
            \sup_{P \in \mathbb{P}} \left( 
                \max_{1 \leq i \leq N} \left|
             \frac{	\hat s_i^U (h)^2 - \tilde s_i^U (h)^2}{\sigma_i^2 \norm{\delta_i(h) }^2} \right|\geq C\frac{r_{\theta, N}}{\iota_N \wedge \min_{1 \leq i \leq N} \sigma_i} \right) =o(1).\label{eq:shatu-stildeu}
        \end{align}
    Observing that $s_i^U(h) > s_i (h)$ which holds under the $E(u_{it} x_{itk_1} x_{itk_2} x_{itk_3})=0$, $\sigma_i \norm{\delta_i(h)}/ s_i^U (h)$ is bounded away from infinity by Assumption~\ref*{assumption:basic}.\ref*{assumption:max:min_eigenvalue}. This in turn implies that $\sigma_i \norm{\delta_i(h)}/ \tilde s_i^U (h)$ is bounded away from infinity. We thus have 
        \begin{align}
            \label{eq:shatstilde}
            \sup_{P \in \mathbb{P}} \left( 
                \max_{1 \leq i \leq N} \bigg\lvert 
                    \frac{\hat {s}_i^U (h)}{\tilde s_i^U(h)}
                    - 1 
                \bigg\rvert  \geq C 
            \left(
                \frac{r_{\theta, N}}{\iota_N \wedge \min_{1 \leq i \leq N} \sigma_i}  
                    \right)
            \right) =o(1).
        \end{align}

        Lastly, we consider 
        \begin{align*}
            \widehat{D}_i^U(h) - \tilde D_i^U(h) = & \left( \frac{ \tilde s_i^U(h)}{\hat{s}_i^U(h)} - 1 \right) D_i(h)
            + 
            \frac{\frac{1}{\sqrt{T}} \sum_{t = 1}^T \left(\hat{d}_{it}^U (h) - d_{it}^U (h) \right)/(\sigma_i \norm{\delta_i(h)})}{\hat{s}_{i}^U(h)/(\sigma_i \norm{\delta_i(h)})} = J_1^U + J_2^U.
        \end{align*}
        We bound $J_1$ by writing 
        \begin{align*}
            & \abs{ \left(\frac{ \tilde s_i^U(h)}{\hat{s}_i^U(h)} - 1 \right)\tilde  D_i (h)}  \\
            = & 
            \abs{
                \frac{ \tilde s_i^U(h)}{\hat{s}_i^U(h)} \left(1 - \frac{\hat{s}_i^U (h)}{\tilde s_i^U(h)}\right)
                \frac{\frac{1}{T} \sum_{t = 1}^T d_{it}^U(h) / (\sigma_i\norm{\delta_i(h)})}{ \tilde s_i^U(h) / (\sigma_i \norm{\delta_i(h)})}
            }
            \\ 
            \leq & 
                \frac{ \tilde s_i^U(h)}{\hat{s}_i^U(h)} \abs{1 - \frac{\hat{s}_i^U (h)}{\tilde  s_i^U(h)}}
                \left(  \left\lVert\frac{1}{\sqrt{T}} \sum_{t = 1}^T v_{it} x_i\right\rVert + \frac{1}{\sigma_i \sqrt{T}} \sum_{t=1}^T \norm {x_{it}}^2  \norm{\delta_i(h)} \right)  
                \left(
                \frac{\tilde s_i^U(h)}{\sigma_i \norm{\delta_i (h)}}
                \right)^{-1}
        \end{align*}
        and applying Lemma \ref{lem:fuk-nagaev-dependent} to bound $\norm{T^{-1/2} \sum_{t = 1}^T v_{it} x_{it}}$, \eqref{eq:shatstilde} to bound $\abs{1 - \hat{s}_i (h)/ \tilde s_i(h)}$, the discussion above \eqref{eq:shatstilde} for a lower bound on $\tilde s_i^U(h)/(\sigma_i \norm{\delta_i (h)})$ and 
        \begin{align*}
            \frac{\hat{s}_i^U(h)}{\tilde s_i^U(h)}
            \geq 1 - \abs{\frac{\hat{s}_i^U(h)}{\tilde s_i^U(h)} - 1}
        \end{align*}
        in conjunction with \eqref{eq:shatstilde} to derive a lower bound on $\hat{s}_i^U(h)/\tilde s_i^U(h)$.
        To bound $J^U_2$, we derive a lower bound on its denominator from 
        \begin{align*}
            \frac{\hat{s}_i^U (h)}{\sigma_i \norm{\delta_i(h)}} = 
            \frac{\tilde s_i^U (h)}{\sigma_i \norm{\delta_i(h)}} \left \{ 
            \left( \frac{\hat{s}_i^U(h)}{\tilde s_i^U(h)} - 1\right) + 1
            \right\}
        \end{align*}
        in conjunction with \eqref{eq:shatstilde} and the lower bound on $\tilde s_i^U(h)/(\sigma_i \norm{\delta_i (h)})$ discussed above.
        The numerator in $J_2^U$ is bounded by the argument used to prove \eqref{eq:duhat_diff_bound}.
        The bounds on $J_1^U$ and $J_2^U$ yield \eqref{eq:DhatU_conv}
    \end{proof}

\section{\label{appendix:additional application results}Additional results for empirical application}

\subsection{One-step confidence set}
\begin{center}
    \scriptsize 
    
\begin{tabular}{lrrrlrrl}
\toprule
\multicolumn{2}{c}{ } & \multicolumn{3}{c}{baseline} & \multicolumn{3}{c}{SNS} \\
\cmidrule(l{3pt}r{3pt}){3-5} \cmidrule(l{3pt}r{3pt}){6-8}
State & $\hat{g}_i$ & p-val $\hat{g}_i$ & card & CS & p-val $\hat{g}_i$ & card & CS\\
\midrule
Alabama & 1 & 0.000 & 1 & 1 & 0.000 & 1 & 1\\
Alaska & 3 & 1.000 & 3 & 2, 3, 4 & 1.000 & 3 & 2, 3, 4\\
Arizona & 3 & 1.000 & 3 & 2, 3, 4 & 1.000 & 3 & 2, 3, 4\\
Arkansas & 2 & 0.027 & 1 & 2 & 0.041 & 1 & 2\\
California & 3 & 0.000 & 1 & 3 & 0.000 & 1 & 3\\
\addlinespace
Colorado & 4 & 0.893 & 2 & 3, 4 & 1.000 & 2 & 3, 4\\
Connecticut & 3 & 0.631 & 2 & 2, 3 & 0.715 & 2 & 2, 3\\
Delaware & 2 & 1.000 & 2 & 2, 3 & 1.000 & 2 & 2, 3\\
D.C. & 2 & 1.000 & 4 & 1, 2, 3, 4 & 1.000 & 4 & 1, 2, 3, 4\\
Florida & 4 & 0.049 & 1 & 4 & 0.074 & 2 & 3, 4\\
\addlinespace
Georgia & 1 & 0.000 & 1 & 1 & 0.000 & 1 & 1\\
Hawaii & 2 & 1.000 & 4 & 1, 2, 3, 4 & 1.000 & 4 & 1, 2, 3, 4\\
Idaho & 3 & 0.664 & 2 & 3, 4 & 0.852 & 2 & 3, 4\\
Illinois & 3 & 0.003 & 1 & 3 & 0.004 & 1 & 3\\
Indiana & 3 & 0.024 & 1 & 3 & 0.038 & 1 & 3\\
\addlinespace
Iowa & 4 & 0.000 & 1 & 4 & 0.000 & 1 & 4\\
Kansas & 4 & 0.010 & 1 & 4 & 0.015 & 1 & 4\\
Kentucky & 3 & 0.093 & 2 & 3, 4 & 0.151 & 3 & 2, 3, 4\\
Louisiana & 1 & 0.000 & 1 & 1 & 0.000 & 1 & 1\\
Maine & 2 & 0.015 & 1 & 2 & 0.023 & 1 & 2\\
\addlinespace
Maryland & 2 & 0.000 & 1 & 2 & 0.000 & 1 & 2\\
Massachusetts & 2 & 1.000 & 2 & 2, 3 & 1.000 & 2 & 2, 3\\
Michigan & 2 & 0.001 & 1 & 2 & 0.001 & 1 & 2\\
Minnesota & 2 & 0.000 & 1 & 2 & 0.000 & 1 & 2\\
Mississippi & 1 & 0.001 & 1 & 1 & 0.001 & 1 & 1\\
\addlinespace
Missouri & 2 & 0.002 & 1 & 2 & 0.003 & 1 & 2\\
Montana & 3 & 1.000 & 3 & 2, 3, 4 & 1.000 & 3 & 2, 3, 4\\
Nebraska & 4 & 1.000 & 3 & 2, 3, 4 & 1.000 & 3 & 2, 3, 4\\
Nevada & 2 & 1.000 & 4 & 1, 2, 3, 4 & 1.000 & 4 & 1, 2, 3, 4\\
New Hampshire & 3 & 0.080 & 2 & 3, 4 & 0.130 & 2 & 3, 4\\
\addlinespace
New Jersey & 2 & 1.000 & 2 & 2, 3 & 1.000 & 2 & 2, 3\\
New Mexico & 3 & 0.094 & 3 & 2, 3, 4 & 0.121 & 3 & 2, 3, 4\\
New York & 2 & 0.000 & 1 & 2 & 0.000 & 1 & 2\\
North Carolina & 2 & 0.000 & 1 & 2 & 0.000 & 1 & 2\\
North Dakota & 4 & 0.010 & 1 & 4 & 0.015 & 1 & 4\\
\addlinespace
Ohio & 1 & 0.000 & 1 & 1 & 0.000 & 1 & 1\\
Oklahoma & 3 & 0.168 & 2 & 2, 3 & 0.207 & 2 & 2, 3\\
Oregon & 4 & 0.000 & 1 & 4 & 0.000 & 1 & 4\\
Pennsylvania & 3 & 0.021 & 1 & 3 & 0.030 & 1 & 3\\
Rhode Island & 2 & 0.000 & 1 & 2 & 0.001 & 1 & 2\\
\addlinespace
South Carolina & 1 & 0.000 & 1 & 1 & 0.000 & 1 & 1\\
South Dakota & 4 & 1.000 & 3 & 2, 3, 4 & 1.000 & 3 & 2, 3, 4\\
Tennessee & 2 & 0.000 & 1 & 2 & 0.000 & 1 & 2\\
Texas & 1 & 0.000 & 1 & 1 & 0.000 & 1 & 1\\
Utah & 4 & 1.000 & 2 & 3, 4 & 1.000 & 2 & 3, 4\\
\addlinespace
Vermont & 4 & 0.000 & 1 & 4 & 0.000 & 1 & 4\\
Virginia & 2 & 1.000 & 2 & 2, 3 & 1.000 & 2 & 2, 3\\
Washington & 4 & 0.000 & 1 & 4 & 0.000 & 1 & 4\\
West Virginia & 3 & 0.014 & 1 & 3 & 0.041 & 1 & 3\\
Wisconsin & 3 & 0.392 & 2 & 2, 3 & 0.455 & 2 & 2, 3\\
\addlinespace
Wyoming & 3 & 1.000 & 3 & 2, 3, 4 & 1.000 & 3 & 2, 3, 4\\
\bottomrule
\end{tabular}
    \captionof{table}{Marginal confidence set at level $1 - \alpha = 0.95$. ``p-val $\hat{g}_i$'' is the p-value for the significance of the estimated group membership. ``CS cardinality'' is the cardinality of the marginal confidence set for the state.  ``CS'' is the marginal confidence set. ``Baseline'' refers to the procedure with critical values defined in Section~\ref*{sec:crit val}. ``SNS'' refers to the procedure with critical values defined in Section~\ref*{sec:sns crit val}.}
    \label{tab:supp:cs_1step}
\end{center}

\subsection{\label{app:application 2step}Two-step confidence set assuming no serial correlation}

Under the assumption of no serial correlation, we can use the non-HAC variance estimator from Section~\ref*{sec:no serial correlation} in the main paper and eliminate nine units in the first step.  

\begin{center}
    \scriptsize 
    \input{\Routputpath/figures/figureB2.tex}
    \captionof{table}{Two-step procedure assuming no serial correlation. Second-step $p$-values for the significance of the estimated group memberships with and without unit selection ($\alpha = 0.05$, $\beta = 0.01$). The dashed horizontal line indicates the threshold for significance without unit selection. The solid horizontal line indicates the threshold for significance with unit selection.}
    \label{fig:supp:2step}
\end{center}
As illustrated in Figure~\ref{fig:supp:2step}, the elimination at the first stage decreases the second-step $p$-values since the Bonferroni adjustment is over a smaller number of simultaneous tests. On the other hand, turning on unit selection lowers the threshold $p$-value at which we can conclude significance from $\alpha = 0.05$ to $\alpha - 2 \beta = 0.03$. In this example, the two-step procedure reduces $p$-values in the second step but does not produce a smaller confidence set.

\section{\label{appendix:additional simulation results}Additional simulation results}

\subsection{\label{appendix:epsilon_N}Choice of the regularization sequence $\epsilon_N$}

\subsubsection{Benchmark simulation designs from Section~\ref*{sec:simulations}}

For our simulation results in Section~\ref*{sec:simulations} in the main text, we set the regularization sequence $\epsilon_N$ equal to the constant sequence $\epsilon_N = 0.01$. In this appendix, we investigate the robustness of our simulation results to this choice of regularization sequence. 

The simulation design is identical to the specification simulated in Section~\ref*{sec:simulations}. For all simulation results presented in this section, we estimate the group-specific model parameters by the \emph{kmeans} estimator and use the HAC-type estimator of the long-run variance with data-driven bandwidth. Simulation results are based on 500 replications.

\label{QE:online epsilon sim}
We first consider constant sequences $\epsilon_N = 0, 0.01, 0.05$. Here, $\epsilon_N = 0.01$ is the value used in the simulations in the main text, and $\epsilon_N = 0$ turns off the regularization of the variance matrix.
Table~\ref{tab:epsilon_N constant} reports the simulation results.

\begin{center}
  \scriptsize 
  
\begin{tabular}{rrrrrrrrr}
\toprule
\multicolumn{3}{c}{ } & \multicolumn{3}{c}{coverage} & \multicolumn{3}{c}{average cardinality} \\
\cmidrule(l{3pt}r{3pt}){4-6} \cmidrule(l{3pt}r{3pt}){7-9}
$\rho$ & $N$ & $T$ & $\epsilon = 0$ & $\epsilon = 0.01$ & $\epsilon = 0.05$ & $\epsilon = 0$ & $\epsilon = 0.01$ & $\epsilon = 0.05$\\
\midrule
 &  & 60 & 0.92 & 0.92 & 0.93 & 1.94 & 1.97 & 2.00\\

 & \multirow[t]{-2}{*}{\raggedleft\arraybackslash 50} & 120 & 1.00 & 1.00 & 1.00 & 1.14 & 1.16 & 1.18\\

 &  & 60 & 0.95 & 0.95 & 0.95 & 2.10 & 2.13 & 2.19\\

 & \multirow[t]{-2}{*}{\raggedleft\arraybackslash 100} & 120 & 0.99 & 1.00 & 1.00 & 1.19 & 1.21 & 1.26\\

 &  & 60 & 0.97 & 0.96 & 0.95 & 2.28 & 2.30 & 2.37\\

\multirow[t]{-6}{*}{\raggedleft\arraybackslash 0.0} & \multirow[t]{-2}{*}{\raggedleft\arraybackslash 200} & 120 & 1.00 & 0.98 & 0.99 & 1.27 & 1.28 & 1.34\\
\cmidrule{1-9}
 &  & 60 & 0.69 & 0.69 & 0.66 & 3.39 & 3.42 & 3.49\\

 & \multirow[t]{-2}{*}{\raggedleft\arraybackslash 50} & 120 & 0.98 & 0.98 & 0.98 & 3.67 & 3.72 & 3.78\\

 &  & 60 & 0.67 & 0.69 & 0.73 & 3.60 & 3.62 & 3.66\\

 & \multirow[t]{-2}{*}{\raggedleft\arraybackslash 100} & 120 & 0.97 & 0.98 & 0.97 & 3.83 & 3.84 & 3.85\\

 &  & 60 & 0.68 & 0.70 & 0.68 & 3.73 & 3.74 & 3.76\\

\multirow[t]{-6}{*}{\raggedleft\arraybackslash 0.5} & \multirow[t]{-2}{*}{\raggedleft\arraybackslash 200} & 120 & 0.97 & 0.97 & 0.98 & 3.86 & 3.87 & 3.87\\
\bottomrule
\end{tabular}
  \captionof{table}{Simulation results for $\epsilon_N = 0, 0.01, 0.05$. Nominal level $1 - \alpha = 0.95$.  ``coverage'' is the empirical coverage probability of the joint confidence set. ``cardinality'' is the expected average (over all units) cardinality of the marginal unit-wise confidence sets.}
  \label{tab:epsilon_N constant}
\end{center}

The results are not very sensitive to the choice of $\epsilon_N$. Notably, the performance of our method is not affected substantially by turning off regularization completely (i.e., choosing $\epsilon_N = 0$). In the next section, we show that regularization plays a greater, though still limited, role in an alternative design that is tailored to make regularization relevant.

We also simulate vanishing sequences $\epsilon_N = \log^{-2} N, \log^{-3/2} N$. These sequences satisfy the rate condition imposed in Theorem~\ref*{thm:MAX-dep-lv}. Table~\ref{tab:epsilon_N log_N} reports the simulation results.

\begin{center}
  \scriptsize 
  
\begin{tabular}{rrrrrrrrr}
\toprule
\multicolumn{3}{c}{ } & \multicolumn{2}{c}{$\epsilon$} & \multicolumn{2}{c}{coverage} & \multicolumn{2}{c}{average cardinality} \\
\cmidrule(l{3pt}r{3pt}){4-5} \cmidrule(l{3pt}r{3pt}){6-7} \cmidrule(l{3pt}r{3pt}){8-9}
$\rho$ & $N$ & $T$ & $\epsilon = \log^{-2} N$ & $\epsilon = \log^{-3/2} N$ & $\epsilon = \log^{-2} N$ & $\epsilon = \log^{-3/2} N$ & $\epsilon = \log^{-2} N$ & $\epsilon = \log^{-3/2} N$\\
\midrule
 &  & 60 & 0.07 & 0.13 & 0.92 & 0.93 & 2.03 & 2.04\\

 & \multirow[t]{-2}{*}{\raggedleft\arraybackslash 50} & 120 & 0.07 & 0.13 & 1.00 & 0.99 & 1.19 & 1.18\\

 &  & 60 & 0.05 & 0.10 & 0.96 & 0.96 & 2.19 & 2.25\\

 & \multirow[t]{-2}{*}{\raggedleft\arraybackslash 100} & 120 & 0.05 & 0.10 & 1.00 & 1.00 & 1.24 & 1.26\\

 &  & 60 & 0.04 & 0.08 & 0.95 & 0.97 & 2.37 & 2.44\\

\multirow[t]{-6}{*}{\raggedleft\arraybackslash 0.0} & \multirow[t]{-2}{*}{\raggedleft\arraybackslash 200} & 120 & 0.04 & 0.08 & 1.00 & 1.00 & 1.34 & 1.35\\
\cmidrule{1-9}
 &  & 60 & 0.07 & 0.13 & 0.72 & 0.74 & 3.49 & 3.53\\

 & \multirow[t]{-2}{*}{\raggedleft\arraybackslash 50} & 120 & 0.07 & 0.13 & 0.98 & 0.97 & 3.79 & 3.80\\

 &  & 60 & 0.05 & 0.10 & 0.75 & 0.71 & 3.67 & 3.68\\

 & \multirow[t]{-2}{*}{\raggedleft\arraybackslash 100} & 120 & 0.05 & 0.10 & 0.99 & 0.98 & 3.85 & 3.86\\

 &  & 60 & 0.04 & 0.08 & 0.67 & 0.68 & 3.75 & 3.77\\

\multirow[t]{-6}{*}{\raggedleft\arraybackslash 0.5} & \multirow[t]{-2}{*}{\raggedleft\arraybackslash 200} & 120 & 0.04 & 0.08 & 0.98 & 0.98 & 3.87 & 3.87\\
\bottomrule
\end{tabular}
  \captionof{table}{Simulation results for $\epsilon_N = \log^{-2} N, \log^{-3/2} N$. Nominal level $1 - \alpha = 0.95$.  ``coverage'' is the empirical coverage probability of the joint confidence set. ``cardinality'' is the expected average (over all units) cardinality of the marginal unit-wise confidence sets.}
  \label{tab:epsilon_N log_N}
\end{center}

Again, we find that the simulation results are not sensitive to the choice of regularization sequence.

\subsubsection{\label{sec appendix:regularization matters}A design where regularization matters} 

In our benchmark designs, the choice of regularization parameter hardly affects the performance of our procedure, raising the question whether regularization is indeed necessary. It seems possible that regularization may be a purely technical device to facilitate the mathematical proof of the validity of our procedure, but that it may not have any practical relevance. 

We address this concern by presenting an alternative simulation design where regularization affects the finite-sample performance of our procedure.

The design is very stylized and exhibits close-to-perfect correlations among the moment inequalities. For such correlations, our comparison bound relies on regularization to bound the estimation error in the critical values (see proof of Lemma~\ref{lem:comparison_bound}).

Similar to the simulation designs in Section~\ref*{sec:simulations}, the data generating process is given by 
\begin{align*}
	\widetilde{\texttt{lemp}}_{it} = \theta_{g^0_i, 1} \widetilde{\texttt{lmw}}_{it} + \theta_{g^0_i, 2} \widetilde{\texttt{lpop}}_{it} + \theta_{g^0_i, 3}\widetilde{\texttt{lemp}}_{it}^{\text{TOT}} + \sigma_i v_{it}
\end{align*}
for $i = 1, \dotsc, N$ and $t = 1, \dotsc, T$. 
We simplify the generating process of the covariates and obtain $x_{it} = (\widetilde{\texttt{lmw}}_{it}, \widetilde{\texttt{lpop}}_{it}, \widetilde{\texttt{lemp}}_{it}^{\text{TOT}})$ by sampling independently three times from the empirical distribution of $\widetilde{\texttt{lmw}}_{it}$ observed in the data for our application. This guarantees that the components of $x_{it}$ have identical and independent marginal distributions which makes it easier to parameterize the correlation structure of the moment inequalities with the parameter $\kappa$ below.

For group $g = 1$, we set the group-specific coefficient $\theta_1 = (\theta_{1, 1}, \theta_{1, 2}, \theta_{1, 3})$ equal to $(0.5, 0.5, 0.5)$. For the remaining groups, the coefficients are a convex combination of a design with parallel groups and a design with orthogonal groups. For $g=2, 3, 4$, the coefficients with parallel groups are $\theta_g^{\text{parallel}} = c_g \theta_1$, with $c_2 = 0.7$, $c_3 = 0.4$ and $c_4 = 0.1$. The coefficients with orthogonal groups are $\theta_2^{\text{orthogonal}} = (0.5, 0, 0)$, $\theta_3^{\text{orthogonal}} = (0, 0.5, 0)$, $\theta_4^{\text{orthogonal}} = (0, 0, 0.5)$.  
For $g=2, 3, 4$, the group-specific coefficients are given by the convex combination $\theta_g = (1 - \kappa) \theta_g^{\text{parallel}} + \kappa \theta_g^{\text{orthogonal}}$, where $\kappa = 0, 0.05, 0.1, 0.2$.
For $\kappa = 0$, groups are parallel, and all off-diagonal entries of the population correlation matrix $\Omega_i(g)$ are perfect correlations. For $\kappa = 1$, groups are orthogonal, and the matrix $\Omega_i(g)$ is diagonal.

As in the designs in Section~\ref*{sec:simulations}, each unit $i$ is assigned to one of the four groups with equal probability and exhibits a random heteroscedasticity parameter $\sigma_i = 0.1 \times \chi^2(4)/4$, where $\chi^2 (df)$ is a random draw from a $\chi^2$-distribution with $df$ degrees of freedom. 

To establish a conjecture about the role of regularization in this design, we briefly review where regularization enters our theoretical arguments. 
Regularization is part of our strategy to control estimation errors in the critical values. This estimation error comes from the fact that the group-specific critical values are estimated from data. It is, therefore, a greater concern in small panels ($T$ and $N$ small) than in large panels ($T$ or $N$ large).
Consider a positive off-diagonal entry in $\Omega_i(g_i^0)$. 
From the proof of Lemma~\ref{lem:comparison_bound}, it is apparent that estimation error is easily controlled if the entry is bounded away from unity. We apply an argument that relies on our regularization scheme to control estimation error if the entry is close to unity. 
In summary, regularization is expected to be relevant if $T$ and/or $N$ are small and $\kappa$ is small.

\begin{center}
  \scriptsize 
  
\begin{tabular}{rrrrrrrrrrr}
\toprule
\multicolumn{3}{c}{ } & \multicolumn{4}{c}{coverage} & \multicolumn{4}{c}{average cardinality} \\
\cmidrule(l{3pt}r{3pt}){4-7} \cmidrule(l{3pt}r{3pt}){8-11}
\multicolumn{3}{c}{ } & \multicolumn{3}{c}{MVT} & \multicolumn{1}{c}{SNS} & \multicolumn{3}{c}{MVT} & \multicolumn{1}{c}{SNS} \\
\cmidrule(l{3pt}r{3pt}){4-6} \cmidrule(l{3pt}r{3pt}){7-7} \cmidrule(l{3pt}r{3pt}){8-10} \cmidrule(l{3pt}r{3pt}){11-11}
$\kappa$ & $N$ & $T$ & $\epsilon = 0$ & $\epsilon = 0.01$ & $\epsilon = 0.05$ &  & $\epsilon = 0$ & $\epsilon = 0.01$ & $\epsilon = 0.05$ & \\
\midrule
 &  & 60 & 0.82 & 0.80 & 0.84 & 0.86 & 2.19 & 2.18 & 2.22 & 2.30\\

 & \multirow[t]{-2}{*}{\raggedleft\arraybackslash 50} & 120 & 0.91 & 0.94 & 0.94 & 0.95 & 1.66 & 1.65 & 1.69 & 1.74\\

 &  & 60 & 0.92 & 0.91 & 0.93 & 0.95 & 2.32 & 2.32 & 2.36 & 2.47\\

\multirow[t]{-4}{*}{\raggedleft\arraybackslash 0.0} & \multirow[t]{-2}{*}{\raggedleft\arraybackslash 100} & 120 & 0.93 & 0.96 & 0.95 & 0.96 & 1.72 & 1.73 & 1.77 & 1.83\\
\cmidrule{1-11}
 &  & 60 & 0.69 & 0.74 & 0.73 & 0.75 & 2.32 & 2.32 & 2.35 & 2.42\\

 & \multirow[t]{-2}{*}{\raggedleft\arraybackslash 50} & 120 & 0.86 & 0.87 & 0.89 & 0.91 & 1.80 & 1.79 & 1.80 & 1.85\\

 &  & 60 & 0.84 & 0.82 & 0.85 & 0.87 & 2.46 & 2.46 & 2.50 & 2.58\\

\multirow[t]{-4}{*}{\raggedleft\arraybackslash 0.1} & \multirow[t]{-2}{*}{\raggedleft\arraybackslash 100} & 120 & 0.93 & 0.92 & 0.92 & 0.95 & 1.87 & 1.87 & 1.89 & 1.95\\
\cmidrule{1-11}
 &  & 60 & 0.65 & 0.62 & 0.66 & 0.68 & 2.42 & 2.42 & 2.43 & 2.50\\

 & \multirow[t]{-2}{*}{\raggedleft\arraybackslash 50} & 120 & 0.87 & 0.86 & 0.85 & 0.88 & 1.88 & 1.90 & 1.89 & 1.93\\

 &  & 60 & 0.78 & 0.78 & 0.76 & 0.81 & 2.59 & 2.60 & 2.59 & 2.67\\

\multirow[t]{-4}{*}{\raggedleft\arraybackslash 0.2} & \multirow[t]{-2}{*}{\raggedleft\arraybackslash 100} & 120 & 0.91 & 0.90 & 0.91 & 0.94 & 1.99 & 1.98 & 1.99 & 2.03\\
\bottomrule
\end{tabular}
  \captionof{table}{\label{tab:simulations EPS} Simulation results for a stylized design with strongly correlated moment inequalities. Nominal level $1 - \alpha = 0.95$. ``coverage'' is the empirical coverage probability of the joint confidence set. ``cardinality'' is the expected average (over all units) cardinality of the marginal unit-wise confidence sets. MVT = use MVT critical values. SNS = use SNS critical values.}
  \label{tab:simulations regularization}
\end{center}

This conjecture is confirmed by the simulation results in Table~\ref{tab:simulations regularization} For $\kappa = 0, 0.1$, regularization improves size control in the designs with small samples. In particular, we see improvements if $N = 50$. For $\kappa = 0.2$, regularization leads to slightly worse size control. We interpret this as a sign that for $\kappa = 0.2$, the cost of regularization in terms of a biased variance estimator is not outweighed by the benefit of guarding against underestimating close-to-perfect positive correlations. 

Simulation designs that investigate the role of regularization are by necessity designs with substantial sampling error in the the group-specific coefficients. Without sampling error, there is no uncertainty about the critical values and regularization is not needed. The overall noisiness that makes the designs presented here informative about regularization also affects the performance of our procedure directly, leading to a confidence set that is underpowered independently of imprecisely estimated critical values. This can be seen by comparing the performance of the regularized procedure with MVT critical values to the procedure with data-independent SNS critical values. The coverage under SNS critical values provides an upper bound on the coverage that can be achieved by eliminating estimation error in the critical values, i.e., an upper bound on what better regularization can achieve. This has to be considered when interpreting the improvements in size control from regularization. For example, for $\kappa = 0.1$, $N = 50$, and $T = 60$, regularization improves the size by about five percentage points, bringing the size within a percentage point of the size under SNS critical values.

\label{QE:regularization matters summary}
The simulation results offer some evidence that the theoretical considerations that motivate our regularization approach have practical relevance. This suggests that it may not be possible to rigorously justify a version of our procedure that does not use regularization. On the other hand, even in this highly stylized design, gains from regularization are limited. From a practical perspective, correct regularization may not be a key concern.

\subsection{\label{sec:testing ghat}Testing the estimated group membership $\hat{g}_i$}

In the definition of $\widehat{C}_{\alpha, N, i}$, we explicitly add $\hat{g}_i$ to the confidence set. Not doing this changes the marginal confidence set of unit $i$ only if $\hat{g}_i$ is not already included anyway, i.e., if 
\begin{align}
  \label{eq:ghat not rejected}
  \widehat{T}_i \left(\hat{g}_i\right) > \hat{c}_{\alpha, N, i} \left(\hat{g}_i\right). 
\end{align}
We simulate the finite sample probability of this happening in our simulation designs from Section~\ref*{sec:simulations} in the main text. The simulation results are summarized in the following table.

\begin{center}
  \scriptsize
  
\begin{tabular}{rrrrrrr}
\toprule
$\rho$ & $\sigma$ & $N$ & $T$ & coverage & cardinality & $\hat{g}_i$ not rej\\
\midrule
 &  &  & 60 & 0.99 & 1.60 & 0.99\\

 &  & \multirow[t]{-2}{*}{\raggedleft\arraybackslash 50} & 120 & 1.00 & 1.07 & 0.99\\

 &  &  & 60 & 0.99 & 1.76 & 1.00\\

 &  & \multirow[t]{-2}{*}{\raggedleft\arraybackslash 100} & 120 & 1.00 & 1.12 & 1.00\\

 &  &  & 60 & 0.99 & 1.95 & 1.00\\

 & \multirow[t]{-6}{*}{\raggedleft\arraybackslash 0.1} & \multirow[t]{-2}{*}{\raggedleft\arraybackslash 200} & 120 & 1.00 & 1.17 & 1.00\\
\cmidrule{2-7}
 &  &  & 60 & 0.92 & 1.97 & 0.99\\

 &  & \multirow[t]{-2}{*}{\raggedleft\arraybackslash 50} & 120 & 1.00 & 1.16 & 0.99\\

 &  &  & 60 & 0.95 & 2.13 & 0.99\\

 &  & \multirow[t]{-2}{*}{\raggedleft\arraybackslash 100} & 120 & 1.00 & 1.21 & 1.00\\

 &  &  & 60 & 0.96 & 2.30 & 1.00\\

\multirow[t]{-12}{*}{\raggedleft\arraybackslash 0.0} & \multirow[t]{-6}{*}{\raggedleft\arraybackslash 0.2} & \multirow[t]{-2}{*}{\raggedleft\arraybackslash 200} & 120 & 0.98 & 1.28 & 1.00\\
\cmidrule{1-7}
 &  &  & 60 & 0.87 & 3.35 & 0.99\\

 &  & \multirow[t]{-2}{*}{\raggedleft\arraybackslash 50} & 120 & 0.99 & 3.71 & 1.00\\

 &  &  & 60 & 0.86 & 3.52 & 1.00\\

 &  & \multirow[t]{-2}{*}{\raggedleft\arraybackslash 100} & 120 & 0.99 & 3.80 & 1.00\\

 &  &  & 60 & 0.82 & 3.67 & 1.00\\

 & \multirow[t]{-6}{*}{\raggedleft\arraybackslash 0.1} & \multirow[t]{-2}{*}{\raggedleft\arraybackslash 200} & 120 & 1.00 & 3.84 & 1.00\\
\cmidrule{2-7}
 &  &  & 60 & 0.69 & 3.42 & 0.99\\

 &  & \multirow[t]{-2}{*}{\raggedleft\arraybackslash 50} & 120 & 0.98 & 3.72 & 1.00\\

 &  &  & 60 & 0.69 & 3.62 & 0.99\\

 &  & \multirow[t]{-2}{*}{\raggedleft\arraybackslash 100} & 120 & 0.98 & 3.84 & 1.00\\

 &  &  & 60 & 0.70 & 3.74 & 1.00\\

\multirow[t]{-12}{*}{\raggedleft\arraybackslash 0.5} & \multirow[t]{-6}{*}{\raggedleft\arraybackslash 0.2} & \multirow[t]{-2}{*}{\raggedleft\arraybackslash 200} & 120 & 0.97 & 3.87 & 1.00\\
\bottomrule
\end{tabular}
  \captionof{table}{Simulated probability of the event \eqref{eq:ghat not rejected}.}
  \label{tab:simulations include ghat}
\end{center}

In Table~\ref{tab:simulations include ghat}, the column ``$\hat{g}_i$ not rej'' gives the simulated probability of our group membership not rejecting the estimated group membership (i.e., one minus the probability of the event defined in equation \eqref{eq:ghat not rejected}). We find that our test for group membership does not reject the estimated group membership with probability close to, but not equal to, one. 

\subsection{\label{appendix:simulations two-step}Two-step procedure}

In this appendix, we report simulation results regarding the finite-sample performance of our two-step procedure. 

We simulate a design with independent time periods. Like our main design in Section~\ref*{sec:simulations} in the main text, the design studied here builds on the model estimated in Section~\ref*{sec:application} in the main text. A unit $i$ corresponds to a US state and the ``time periods'' are given by observations of different counties in different quarters. The panel model is specified as in equation (\ref*{eq:application FE model}) in the main text, with coefficients equal to the estimate coefficients in Table~\ref*{tab:app:slope_coef} in the main text. 
The joint distribution of the regressors $\widetilde{\texttt{lmw}}_{it}$, $\widetilde{\texttt{lpop}}_{it}$ and $\widetilde{\texttt{lemp}}_{it}^{\text{TOT}}$ is defined from the data used in our empirical application. 
In particular, $\widetilde{\texttt{lmw}}_{it}$, $\widetilde{\texttt{lpop}}_{it}$ and $\widetilde{\texttt{lemp}}_{it}^{\text{TOT}}$ are sampled from the pooled empirical distribution of the respective fixed-effect transformations of $\log(\texttt{mw}_{it})$, $\log(\texttt{pop}_{it})$ and $\log(\texttt{emp}_{it}^{\text{TOT}})$. 
The error component $v_{it}$ is a standard normal noise term.

We set the distribution of heteroscedasticity and group membership, i.e., to the joint distribution of $(\sigma_i, T_i, g_i^0)$ so that the simulation results reveal different aspects of the performances of the two-step procedure. We note that the two-step procedure is sensitive to this distribution. We determine it from the data by mapping each simulated unit $i$ to one of the $N = 51$ units from our empirical application. We set $\sigma_i$ equal to $m_{\sigma}$ times the standard deviation of the empirical residuals for unit $i$, $g_i^0$ equal to the estimated group membership of $i$ and $T_i$ equal to the number of observed ``time periods'' for unit $i$ (i.e. counties times quarters). The parameter $m_{\sigma} = 1/4, 1, 4$ shifts the global level of uncertainty. 

The other parameters for the simulations are set as follows. The nominal level of the simulated joint confidence set is $1- \alpha = 0.95$. We simulate different values of the first-step parameter $\beta = \alpha/ 5, \alpha / 10 = 0.01, 0.005$. The regularization sequence is specified as $\epsilon_N = 0.01$. 
We simulate the confidence set using our benchmark critical values defined in Section~\ref*{sec:crit val} of the main text (labelled MVT = multi-variate $t$-distribution), as well as the SNS critical values defined in Section~\ref*{sec:sns crit val} of the main text (labelled SNS).

\begin{table}[h]
\scriptsize
\centering 

\begin{tabular}{lrrrrrrrrrrrr}
\toprule
\multicolumn{3}{c}{ } & \multicolumn{2}{c}{success} & \multicolumn{2}{c}{failure} & \multicolumn{2}{c}{card with sel} & \multicolumn{2}{c}{card without sel} & \multicolumn{2}{c}{ } \\
\cmidrule(l{3pt}r{3pt}){4-5} \cmidrule(l{3pt}r{3pt}){6-7} \cmidrule(l{3pt}r{3pt}){8-9} \cmidrule(l{3pt}r{3pt}){10-11}
 & $m_{\sigma}$ & $\alpha / \beta$ & insignif & signif & insignif & signif & insignif & signif & insignif & signif & $\hat{N}$ & coverage\\
\midrule
\addlinespace[0.3em]
\multicolumn{13}{l}{\textbf{MVT}}\\
\hspace{1em} & 0.25 & 10 & 0.55 & 0 & 0.00 & 0.00 & 1.48 & 1.00 & 2.01 & 1 & 10.06 & 1.00\\

\hspace{1em} &  & 5 & 0.48 & 0 & 0.00 & 0.00 & 1.55 & 1.00 & 2.01 & 1 & 9.29 & 1.00\\

\hspace{1em} & 1.00 & 10 & 0.13 & 0 & 0.00 & 0.00 & 2.15 & 1.00 & 2.21 & 1 & 33.73 & 0.99\\

\hspace{1em} &  & 5 & 0.00 & 0 & 0.01 & 0.03 & 2.22 & 1.00 & 2.21 & 1 & 32.35 & 1.00\\

\hspace{1em} & 4.00 & 10 & 0.00 & 0 & 0.50 & 0.45 & 2.75 & 1.02 & 2.72 & 1 & 50.95 & 0.96\\

\hspace{1em} &  & 5 & 0.00 & 0 & 0.80 & 0.77 & 2.79 & 1.05 & 2.72 & 1 & 50.90 & 0.97\\

\addlinespace[0.3em]
\multicolumn{13}{l}{\textbf{SNS}}\\
\hspace{1em} & 0.25 & 10 & 0.53 & 0 & 0.00 & 0.00 & 1.51 & 1.00 & 2.00 & 1 & 10.04 & 1.00\\

\hspace{1em} &  & 5 & 0.46 & 0 & 0.00 & 0.00 & 1.58 & 1.00 & 2.00 & 1 & 9.32 & 1.00\\

\hspace{1em} & 1.00 & 10 & 0.13 & 0 & 0.00 & 0.00 & 2.21 & 1.00 & 2.27 & 1 & 33.73 & 1.00\\

\hspace{1em} &  & 5 & 0.00 & 0 & 0.01 & 0.04 & 2.27 & 1.00 & 2.27 & 1 & 32.42 & 1.00\\

\hspace{1em} & 4.00 & 10 & 0.00 & 0 & 0.52 & 0.44 & 2.78 & 1.02 & 2.75 & 1 & 50.95 & 0.98\\

\hspace{1em} &  & 5 & 0.00 & 0 & 0.81 & 0.73 & 2.82 & 1.05 & 2.75 & 1 & 50.92 & 0.98\\
\bottomrule
\end{tabular}
\caption{Simulation results for the two-step procedures (unit selection).}
\label{tab:sim:step2}
\end{table}

The simulation results are based on 1000 replications and reported in Table~\ref{tab:sim:step2}. 
The columns labeled ``insignif'' give averages over units that are insignificant under no unit-selection. Columns labeled ``signif'' give averages over units that are significant under no unit-selection. A unit is labeled as a ``success'' (``failure'') if its marginal confidence set is strictly smaller (strictly larger) under unit-selection than under no unit-selection. 
The columns labelled ``card with sel'' (``card without sel'') give the cardinality of unit-wise marginal confidence sets if unit-selection is turned on (turned off). The column labeled $\hat{N}$ gives the simulated expected number of units that survive unit selection ($N = 51$). ``Coverage'' gives the simulated joint coverage probability of the two-step joint confidence set (nominal level $1 - \alpha = 0.95$).

In all designs, unit selection produces a valid joint confidence set that covers the true group structure at the prescribed nominal level.

Unit selection aims to tighten the marginal confidence sets for units for which estimated group memberships are insignificant under a one-step procedure. Among such units, the expected proportion of units for which a two-step procedure tightens the marginal confidence set varies across the different designs. In the design with low error variances ($m_{\sigma} = 0.25$), this proportion ranges between 46\% and 55\%. This means that the two-step procedure improves the marginal confidence sets for roughly half of the units for which they can be improved. In the design with medium error variances ($m_{\sigma} = 1$), this proportion lies between 0\% and 13\%. In the design with high error variances ($m_{\sigma} = 4$), there are no improvements. This illustrates that the two-step procedures can only be successful if the overall uncertainty is low but unequally distributed across units. If overall uncertainty is high, then the first step cannot deselect units, and hence the second-step confidence sets cannot be tightened.

The two-step procedures can cause the confidence set to become wider if insufficiently many units are eliminated in the first step. This happens in the designs with high error variance ($m_{\sigma} = 4$): hardly any units are eliminated in the first step and the size of the marginal confidence sets increases both for units with significant and units with insignificant group membership estimates under the one-step procedure.

Using MVT instead of SNS critical values increases the power of our two-step procedure. In our designs, both choices of critical values select a similar number of units for the second step. Therefore, the power gain from using MVT critical values is almost entirely due to more efficient testing in the second step. 

\section{\label{appendix:weak separation}Weak group separation}

\subsection{Introduction}

In this appendix, we consider grouped panel models in which groups are only weakly separated. By weak separation, we mean that groups are distinct but very similar to each other. We formalize this notion using an approach inspired by the local alternatives in asymptotic testing theory. In particular, we let the distance between groups shrink to zero at a fixed rate.

We offer new theoretical results on the rate of consistency of the \emph{kmeans} estimator under weak separation. In particular, we give conditions under which the estimated group-specific coefficients converge at the parametric $\sqrt{NT}$-rate if the distance between groups shrinks at a rate slower than $T^{-1/2}$. We then use this result to derive conditions under which our confidence set is valid under weak group separation. 

In addition to the theoretical analysis, we provide simulation studies to investigate the finite sample behavior of the \emph{kmeans} estimator and our joint confidence set under weak separation and to verify our theoretical predictions. 

This appendix is structured as follows. In Section~\ref{sec:weak-sep_no separation} we discuss existing results on \emph{kmeans} estimation in a setting where groups are not separated at all. We then turn to our analysis of the \emph{kmeans} estimation under weak separation. In Section~\ref{sec:weak-sep_theory}, we present asymptotic results. In Section~\ref{sec:weak-sep_simulations}, we present simulation evidence. Proofs are given in Section~\ref{sec:weak-sep_proofs}.

\subsection{No group separation\label{sec:weak-sep_no separation}}
\label{QE:online no group separation}
We first discuss \emph{kmeans} estimation under no group separation. By ``no group separation'' we mean that there are at least two groups with identical coefficients. This corresponds to over-specification of the number of groups. \textcite{bonhomme2015grouped} study this setting in their supplemental appendix. In this setting, the estimators of the group-specific coefficient converge at most at the rate of $T^{-1/2}$. As discussed in the main text of this paper, this rate is too slow to satisfy our conditions for the validity of the joint confidence set.

We consider a simple mean shift model where we observe $y_{it}$, for $i=1,\dots, N$ and $t=1 ,\dots T$. The parameter of interest is the mean of $y_{it}$. We assume that there is a latent group structure with $G$ groups and that the mean of $y_{it}$ may depend on unit $i$'s group membership. Suppose that there is only one distinct group, i.e., all units have the same mean, but we incorrectly set the number of groups to two. Specifically, the estimated model is 
\begin{align*}
	y_{it} = \alpha_{g_i} + v_{it},
\end{align*}
where $g_i=1,2$ and $v_{it}$ is assumed to be $i.i.d.N(0,\sigma^2)$. Let $\hat \alpha_1$ and $\hat \alpha_2$ be the estimators of $\alpha_1$ and $\alpha_2$, respectively, by the \textit{kmeans} method. By relabelling if necessary, we impose $\hat \alpha_1 \geq \hat \alpha_2$. 
The true model is homogeneous such that $\alpha = \alpha_1=\alpha_2$. 

Proposition S.2 of \textcite{bonhomme2015grouped} (supplemental appendix) states that, as $N\to \infty$ with $T$ fixed, it holds that 
\begin{align*}
	\hat \alpha_1 \to \alpha + \sqrt{\frac{2}{\pi T}},  \quad 	\hat \alpha_2 \to \alpha - \sqrt{\frac{2}{\pi T}}.
\end{align*}
We note that the model considered in Proposition S.2 of \textcite{bonhomme2015grouped} includes regressors with common coefficients, but its presence does not affect the probability limits of $\hat \alpha_1$ and $\hat \alpha_2$. 

The above result indicates that, even we take $T\to \infty$ in addition to $N \to \infty$, the convergence rates of $\hat \alpha_1 $ and $\hat \alpha_2$ are at most of order $T^{-1/2}$. In particular, the probability that $P( |\hat\alpha_g - \alpha | > C T^{-1/2})$ for fixed $C$ does not converge to 0.

\subsection{Asymptotic analysis\label{sec:weak-sep_theory}}

We now turn to the setting of weak group separation, where groups are distinct but very similar. The discussion given here is a simplified version of \textcite[][Supplemental Appendix C]{LumsdaineOkuiWang2022}. 

We observe $(y_{it}, x_{it})$ for $i=1,\dots, N$ and $t=1,\dots, T$. Units are divided into $G$ groups, and all members of a group share the same value of the regression coefficient.
The model is
\begin{align*}
 y_{it} = x_{it}'\theta_{g_i^0}^0 + u_{it},
 \end{align*}
where $\theta_{g}^0$, $g=1, \dots, G$, are group-specific coefficients, $g_i^0 \in \{1, \dots, G\}$ is unit $i$'s true group membership, and $u_{it}$ is an error term.

The parameters are estimated by the \textit{kmeans} method \parencite{bonhomme2015grouped}. 
Let $\mathbb{G} =  \{1, \dots, G\}$ be the set of groups. Then, $\mathbb{G}^N$ is the parameter space for the group membership structure. A typical element of $\mathbb{G}$ is $\gamma =(g_1, \dots, g_N)$. The true group membership structure is $\gamma^0 = (g_1^0, \dots, g_N^0) \in \mathbb{G}$. The parameter space for the coefficients is denoted as $\mathcal{B}\subset \mathbb{R}^{Gp}$. The estimator is
\begin{align*}
 (\hat \gamma, \hat \theta) = \arg \min_{\gamma \in \mathbb{G}^N, \theta \in \mathcal{B}} = \frac{1}{NT} \sum_{t=1}^{T} \sum_{i=1}^N (y_{it} - x_{it}'\theta_{g_i})^2 .
\end{align*}

We prove a rate of consistency of the \emph{kmeans} estimator and the following assumptions that are weaker than the standard set of assumptions imposed in the literature \parencite[see, e.g.,][]{bonhomme2015grouped}.
Most importantly, the group separation condition is relaxed, allowing the difference between the slope coefficients associated with two groups to vanish asymptotically. In addition, we relax the conditions on the existence of moments and the mixing properties of the data.
 
 \begin{app_assumption}
  \label{assumption:weak-sep}
  \begin{enumerate}
    \item
    \label{assumption:weak-sep_asym-tail-w}
    Let $z_{it}$ be $x_{it}'x_{it}$, or $\left\Vert u_{it} x_{it} \right \Vert$. Assume the following holds for any choice of $z_{it}$: $z_{it} $ is a strictly stationary and strong mixing sequence over $t$ whose mixing coefficients $a_i [t]$ are  bounded by $a [ t]$ such that $\max_{1\leq i \leq N} a_i [t] \leq a[t]$ and $\sum_{t=0}^{\infty} (t+1)^{r/2 -1} a [t]^{b / r + b} <\infty $ for some $b>0$, and $\max_{1 \leq i \leq N} E(|z_{it}|^{r+b}) < \infty$ for some $b >0$.
   \item
   \label{assumption:weak-sep_beta-compact}
   $\mathcal{B}$ is compact. 
   \item \label{assumption:weak-sep_x-min-eigen-gp} Let $\rho_{N}(\gamma, g,\tilde g)$ be the minimum eigenvalue of 
   \begin{align*}
    \sum_{i=1}^N \sum_{t=1}^T \mathbf{1} \{g_{i}^0 = g \} \mathbf{1} \{g_{i} = \tilde g \} x_{it}x_{it}'/(NT),  
   \end{align*}
   where $\gamma = (g_{1}, \dots, g_{N})$.
   For any $g\in \mathbb{G}$,
   \begin{align*}
    \min_{\gamma \in (\mathbb{G})^N} \max_{\tilde g \in \mathbb{G}} \rho_{N} (\gamma, g, \tilde g) >\hat \rho,
   \end{align*}
   where $\hat \rho \to_p \rho $ as $N,T \to \infty$ and $\rho >0$. 
   \item 
   \label{assumption:weak-sep_x-min-eigen}
   There exists $\hat \rho^*$ such that for any $i$,
   \begin{align*}
   \lambda_{\min} \left( \frac{1}{T} \sum_{t=1}^T x_{it} x_{it}'\right) \geq \hat \rho^* 
   \end{align*}
   and $\hat \rho^* \to_p \rho^*>0$ as $N,T \to \infty$, where $\lambda_{\min}$ gives the minimum eigenvalue of its argument.
   \item 
   \label{assumption:weak-sep_group-separation-w}
   There exists a nonrandom sequence $c_T > T^{-1/2 + e}$ for some $e>0$ such that for any $g \neq \tilde h $ where $ g, h \in \mathbb{G}$, it holds that
   $\lVert \theta_{g}^0 -  \theta_{h}^0 \rVert > c_T$.
  \end{enumerate} 
 \end{app_assumption}

 Assumption \ref{assumption:weak-sep}.\ref{assumption:weak-sep_group-separation-w} is the key assumption that replaces the usual group separation assumption by weak group separation. Similar to the standard assumptions, any pair of groups must have distinct coefficients. In particular, the distance between their coefficients has to be bounded away from zero in any finite sample. We generalize the standard assumptions and allow the distance to vanish asymptotically. In the limit, groups are not separated. We assume that the rate at which group differences vanish is slower than $T^{-1/2}$. 
 
 The mixing and moment conditions in Assumption \ref{assumption:weak-sep}.\ref{assumption:weak-sep_asym-tail-w} are weaker than the standard assumptions imposed in the literature (see, e.g., \cite{bonhomme2015grouped}). However, we impose the additional assumption of strict stationarity. Under this assumption, we can relate the degree of weak group separation to a condition on the relative magnitudes of $N$ and $T$.
 
 All other assumptions are standard in the literature.

 The following theorem derives an asymptotic equivalence between the \emph{kmeans} estimator $\hat{\theta}$ and the oracle estimator $\tilde{\theta}$ under known group membership structure. The oracle estimator is trivially $\sqrt{NT}$-consistent. 
 
Unlike most existing results on the consistency of the \emph{kmeans} estimator in grouped panels, the theorem holds under weak group separation. However, the degree of group separation affects the required condition on the relative magnitudes of $N$ and $T$. The faster group separation converges to the limit of no group separation, the stronger the conditions on $N$ and $T$. In particular, when group separation is weak, $T$ must be large relative to $N$.

 \begin{app_theorem}\label{thm:weak-sep_gamma-beta-w}
  Suppose that Assumptions \ref{assumption:weak-sep}.\ref{assumption:weak-sep_asym-tail-w}-\ref{assumption:weak-sep}.\ref{assumption:weak-sep_group-separation-w} hold. 
  As $N,T \to \infty $ with $NT^{-er} \to 0$, where $e$ and $r$ are defined in Assumptions \ref{assumption:weak-sep}.\ref{assumption:weak-sep_asym-tail-w} and \ref{assumption:weak-sep}.\ref{assumption:weak-sep_group-separation-w}, respectively, it holds that 
  \begin{align*}
    \hat \theta = \tilde \theta + o_p (1/\sqrt{NT}).
  \end{align*}
 \end{app_theorem}
 
 Since the oracle estimator $\tilde \theta$ is $\sqrt{NT}$-consistent, Theorem~\ref{thm:weak-sep_gamma-beta-w} implies that $\hat \theta$ is $\sqrt{NT}$-consistent.
 
 \label{QE:discuss weak separation thm}
 Under the conditions of Theorem~\ref{thm:weak-sep_gamma-beta-w}, group consistency still holds (see Lemma~\ref{lem-group-consistency-w}). This is restrictive since the relevance of our testing problem relies on uncertainty about the group memberships even in the asymptotic limit. We leave a formal analysis that extends our results to settings with asymptotic misclassification to future research.

 Our results indicate that such an extension is feasible. 
 In our previous work in \textcite{dzemskiokui2021convergence}, we have shown that consistent estimation of group memberships is not a necessary condition for $\sqrt{NT}$-estimation of the group-specific coefficients under weak separation. We proved this result for the mean-shift model estimated from i.i.d. data. Theorem~\ref{thm:weak-sep_gamma-beta-w} extends some of our previous analysis to a grouped panel regression model with weakly dependent time series. To simplify the derivations and make our main point (robustness to weak separation) in a clear and transparent manner, we impose a uniform bound on the variance of the error term (see our Assumption~\ref{assumption:weak-sep}.\ref{assumption:weak-sep_asym-tail-w}). This bound implies group consistency. We conjecture that the uniform variance bound can be replaced by a set of more convoluted conditions (see condition (4) in \cite{dzemskiokui2021convergence}) that do not imply group consistency. We leave the details of this argument to future research.

Putting $\iota_N = c_T$, Theorem~\ref*{thm:MAX-dep-lv} in the main text yields
\begin{align}
    \label{eq:weak separation condition r_NT}
    T^{1 - e} r_{N, T} \to 0
\end{align}
as $N,T \to \infty$ as a necessary condition (ignoring a log term) for the validity of our confidence set. 
Here, $r_{N, T}$ is the rate of convergence of $(\hat{\theta}_1, \dotsc, \hat{\theta}_G)$. It indicates that the validity of our confidence set holds even when groups are only weakly separated as long as the cross-sectional sample is sufficiently large so that the group-specific coefficients converge sufficiently fast.

\subsection{Simulations\label{sec:weak-sep_simulations}}

We now report simulation evidence to study the finite sample effect of weak group separation on the rate of convergence of the \emph{kmeans} estimator and the validity of our joint confidence set for group membership. 

The simulation design is a simplified version of the design in Section~\ref*{sec:simulations} in the main text. The data generating process is given by 
\begin{align*}
	\widetilde{\texttt{lemp}}_{it} = \theta_{g^0_i, 1} \widetilde{\texttt{lmw}}_{it} + \theta_{g^0_i, 2} \widetilde{\texttt{lpop}}_{it} + \theta_{g^0_i, 3}\widetilde{\texttt{lemp}}_{it}^{\text{TOT}} + \sigma_i v_{it}
\end{align*}
for $i = 1, \dotsc, N$ and $t = 1, \dotsc, T$,  where $g^0_i$ is the group membership of unit $i$ and takes either the value one or two with equal probability. 
For $g = 1, 2$, the group specific-coefficient is equal to 
\begin{align*}
    \theta_g = (1 - 2 T^{-1/2 + e}) \frac{\bar{\theta}_1 + \bar{\theta}_2}{2} + 2 T^{-1/2 + e} \bar{\theta}_g,
\end{align*} 
where $(\bar{\theta}_1, \bar{\theta}_2)$ are estimated by fitting the model to the data from the empirical application using the \emph{kmeans} algorithm and setting the number of groups to $G = 2$. The data generating process for the covariates and error is the same as in Section~\ref*{sec:simulations} in the main text, setting $\rho = 0$ and $\sigma = 0.2$. We simulate designs with $N = 50, 100, 200$, $T = 30, 60, 120$ and $e = -0.25, 0, 0.25$. 
\begin{table}
    \centering 
    
\begin{tabular}{rrrrrrr}
\toprule
\multicolumn{1}{c}{ } & \multicolumn{3}{c}{$\lVert \theta_1^0 - \theta_2^0 \rVert$} & \multicolumn{3}{c}{$\lVert \theta^0 \rVert$} \\
\cmidrule(l{3pt}r{3pt}){2-4} \cmidrule(l{3pt}r{3pt}){5-7}
T & e=0.25 & e=0.00 & e=-0.25 & e=0.25 & e=0.00 & e=-0.25\\
\midrule
30 & 0.50 & 0.22 & 0.09 & 1.01 & 0.99 & 0.98\\
60 & 0.43 & 0.16 & 0.06 & 1.00 & 0.98 & 0.98\\
120 & 0.36 & 0.11 & 0.03 & 1.00 & 0.98 & 0.98\\
\bottomrule
\end{tabular}
    \caption{Group separation and $e$.}
    \label{tab:weak-sep_sim-local-coef-distance}
\end{table}
The parameter $e$ controls the rate at which the two group-specific coefficients converge to a common value as $T \to \infty$. Under $e = -0.25$, the two groups converge to a common group the fastest, and under $e = 0.25$, they converge the slowest. The case $e = 0.25$ is covered by the theoretical result in Theorem~\ref{thm:weak-sep_gamma-beta-w}. The case $e = 0$ is the infimum of the $e$ considered in Theorem~\ref{thm:weak-sep_gamma-beta-w}. Under $e = -0.25$, group separation vanishes at a rate too fast to be covered by Theorem~\ref{thm:weak-sep_gamma-beta-w}. Table~\ref{tab:weak-sep_sim-local-coef-distance} reports group separation between the two groups for the different choices of $e$. 

To measure the distance between two sets of group-specific slope coefficients $\theta = (\theta_1, \theta_2)$ and $\theta' = (\theta_1', \theta_2')$ we define
\begin{align*}
    \lVert \theta - \theta' \rVert = \sqrt{\sum_{g = 1}^2 \E \lVert \theta_g - \theta_g' \rVert_2^2}
\end{align*}
and 
\begin{align*}
    \lVert \theta \rVert = \sqrt{\sum_{g = 1}^2 \E \lVert \theta_g \rVert_2^2}
\end{align*}
and $\lVert \cdot \rVert_2$ is the $L_2$-norm.
Table~\ref{tab:weak-sep_sim-local-coef-distance} shows that, for $\theta^0 = (\theta_1^0, \theta_2^0)$, $\lVert \theta^0 \rVert$ is almost independent of $e$. 

We simulate the joint confidence for group membership using the variance estimator for the case of no serial correlation (i.e., setting the bandwidth equal to zero). For all simulations, the nominal level for the joint confidence set is set to $1 - \alpha = 0.95$, and the number of replications is $500$. 
The simulation results are summarized in Table~\ref{tab:weak-sep_sim-local}. 

The column ``coverage'' gives the simulated coverage probability of the joint confidence set for group membership. The coverage is always conservative for the designs with slowly vanishing group separation ($e = 0.25$). For the designs with group separation that vanishes at a moderate or fast rate ($e = 0$ and $e = -0.25$, respectively), the confidence set has appropriate or conservative coverage provided $N$ and $T$ are large enough. 

The columns labelled ``$\hat{\theta} - \theta$'' simulate the expected total error of \emph{kmeans} estimation $\lVert \hat{\theta} - \theta^0 \rVert / \lVert \theta^0 \rVert$, where $\hat{\theta} = (\hat{\theta}_1, \hat{\theta}_2)$ and the norm $\lVert \cdot \rVert$ is defined above. This error is relevant for assessing the finite sample validity of the assumptions we impose on coefficient estimation in Theorem~\ref*{thm:MAX-dep-lv} in the main text. When scaled by $T^{1/2}$, the error is approximately constant when increasing $T$ and leaving $N$ constant. This indicates that this is the rate at which time-series variation reveals information about the panel model. A smaller error can be achieved by using both time-series and cross-sectional variation, i.e., by increasing both $T$ and $N$. As discussed in Section~\ref{sec:weak-sep_theory}, a necessary condition for the asymptotic validity of our confidence set is 
\begin{align*}
    T^{1 - e} \lVert \hat{\theta} - \theta^0 \rVert \to 0
\end{align*}
as $N,T \to \infty$. Clearly, this condition cannot be met by increasing $T$ alone while keeping $N$ constant. However, the estimation error scaled by $T^{1-e}$ vanishes if $N$ increases, suggesting that estimation error from \emph{kmeans} estimation is negligible in panels with a large cross-sectional dimension. 

The columns labelled ``$\hat{\theta} - \tilde{\theta}$'' simulate $\lVert \tilde{\theta} - \hat{\theta} \rVert / \lVert \theta^0 \rVert$. They are the expected differences between the \emph{kmeans} estimator $\hat{\theta}$ and the oracle estimator $\tilde{\theta}$. Based on these columns, we assess the finite-sample relevance of our asymptotic result on \emph{kmeans} estimation under weak separation. Only the designs with $e = 0.25$ are covered by Theorem~\ref{thm:weak-sep_gamma-beta-w}. As predicted by the theorem, the difference between the \emph{kmeans} estimator and the oracle estimation vanishes at a faster rate than $T^{-1/2}$. We find the same result for $e = 0, -0.25$, two cases not covered by our asymptotic results. 

The columns labelled ``$\tilde{\theta} - \theta$'' report the simulated value of $\lVert \hat{\theta} - \theta^0 \rVert / \lVert \theta^0 \rVert$, i.e., the expected error of the oracle estimator.

The columns labeled ``$\hat{\theta} - \theta$'' are equivalent to differently scaled versions of the convergence rate of $\hat{\theta}$, i.e., of $r_{N,T}$ defined in Assumption~\ref*{assumption:basic}.\ref*{assum:est}. The column scaled by $T^{1 - e}$ checks the validity of condition \eqref{eq:weak separation condition r_NT} in finite samples. This condition is necessary for the validity of our joint confidence set under weak separation. For $e = 0.25$, Theorem~\ref{thm:weak-sep_gamma-beta-w} predicts that $T^{1-e} r_{N,T} \to \infty$ (i.e., condition \eqref{eq:weak separation condition r_NT} above) if $N$ is sufficiently large compared to $T$. The simulation evidence confirms this prediction. The settings with $e = 0, -0.25$ are not covered by the asymptotic analysis in Theorem~\ref{thm:weak-sep_gamma-beta-w}. Our simulation evidence suggests that the conditions for $T^{1-e} r_{N,T} \to 0$ are possibly even weaker under these settings.

\begin{table}
\scriptsize
\centering 

\begin{tabular}{rrrrrrrrrrrrr}
\toprule
\multicolumn{4}{c}{ } & \multicolumn{3}{c}{error} & \multicolumn{3}{c}{error scaled by $T^{1/2}$} & \multicolumn{3}{c}{error scaled by $T^{1-e}$} \\
\cmidrule(l{3pt}r{3pt}){5-7} \cmidrule(l{3pt}r{3pt}){8-10} \cmidrule(l{3pt}r{3pt}){11-13}
e & N & T & coverage & $\hat{\theta} - \theta$ & $\hat{\theta} - \tilde{\theta}$ & $\tilde{\theta} - \theta$ & $\hat{\theta} - \theta$ & $\hat{\theta} - \tilde{\theta}$ & $\tilde{\theta} - \theta$ & $\hat{\theta} - \theta$ & $\hat{\theta} - \tilde{\theta}$ & $\tilde{\theta} - \theta$\\
\midrule
 &  & 30 & 0.97 & 0.28 & 0.12 & 0.25 & 1.52 & 0.66 & 1.37 & 3.55 & 1.54 & 3.22\\

 &  & 60 & 0.99 & 0.20 & 0.05 & 0.19 & 1.52 & 0.35 & 1.48 & 4.23 & 0.99 & 4.11\\

 & \multirow[t]{-3}{*}{\raggedleft\arraybackslash 50} & 120 & 1.00 & 0.14 & 0.00 & 0.14 & 1.56 & 0.04 & 1.56 & 5.16 & 0.15 & 5.16\\
\cmidrule{2-13}
 &  & 30 & 0.97 & 0.20 & 0.10 & 0.17 & 1.09 & 0.53 & 0.95 & 2.54 & 1.23 & 2.23\\

 &  & 60 & 0.99 & 0.14 & 0.04 & 0.14 & 1.10 & 0.30 & 1.06 & 3.07 & 0.85 & 2.96\\

 & \multirow[t]{-3}{*}{\raggedleft\arraybackslash 100} & 120 & 1.00 & 0.10 & 0.01 & 0.10 & 1.09 & 0.06 & 1.09 & 3.61 & 0.21 & 3.61\\
\cmidrule{2-13}
 &  & 30 & 0.98 & 0.15 & 0.08 & 0.12 & 0.80 & 0.43 & 0.68 & 1.87 & 1.00 & 1.59\\

 &  & 60 & 0.99 & 0.10 & 0.03 & 0.10 & 0.76 & 0.23 & 0.76 & 2.12 & 0.64 & 2.11\\

\multirow[t]{-9}{*}{\raggedleft\arraybackslash 0.25} & \multirow[t]{-3}{*}{\raggedleft\arraybackslash 200} & 120 & 1.00 & 0.07 & 0.00 & 0.07 & 0.79 & 0.05 & 0.78 & 2.61 & 0.16 & 2.58\\
\cmidrule{1-13}
 &  & 30 & 0.84 & 0.39 & 0.24 & 0.26 & 2.16 & 1.33 & 1.41 & 2.16 & 1.33 & 1.41\\

 &  & 60 & 0.94 & 0.23 & 0.12 & 0.19 & 1.81 & 0.90 & 1.51 & 1.81 & 0.90 & 1.51\\

 & \multirow[t]{-3}{*}{\raggedleft\arraybackslash 50} & 120 & 0.99 & 0.15 & 0.03 & 0.14 & 1.61 & 0.33 & 1.58 & 1.61 & 0.33 & 1.58\\
\cmidrule{2-13}
 &  & 30 & 0.90 & 0.28 & 0.19 & 0.18 & 1.55 & 1.04 & 0.98 & 1.55 & 1.04 & 0.98\\

 &  & 60 & 0.96 & 0.17 & 0.09 & 0.14 & 1.35 & 0.71 & 1.09 & 1.35 & 0.71 & 1.09\\

 & \multirow[t]{-3}{*}{\raggedleft\arraybackslash 100} & 120 & 0.98 & 0.10 & 0.03 & 0.10 & 1.15 & 0.29 & 1.11 & 1.15 & 0.29 & 1.11\\
\cmidrule{2-13}
 &  & 30 & 0.91 & 0.23 & 0.17 & 0.13 & 1.26 & 0.94 & 0.70 & 1.26 & 0.94 & 0.70\\

 &  & 60 & 0.97 & 0.12 & 0.07 & 0.10 & 0.96 & 0.55 & 0.77 & 0.96 & 0.55 & 0.77\\

\multirow[t]{-9}{*}{\raggedleft\arraybackslash 0.00} & \multirow[t]{-3}{*}{\raggedleft\arraybackslash 200} & 120 & 0.99 & 0.07 & 0.02 & 0.07 & 0.82 & 0.21 & 0.79 & 0.82 & 0.21 & 0.79\\
\cmidrule{1-13}
 &  & 30 & 0.42 & 0.76 & 0.60 & 0.26 & 4.17 & 3.28 & 1.42 & 1.78 & 1.40 & 0.61\\

 &  & 60 & 0.58 & 0.42 & 0.30 & 0.19 & 3.23 & 2.31 & 1.51 & 1.16 & 0.83 & 0.54\\

 & \multirow[t]{-3}{*}{\raggedleft\arraybackslash 50} & 120 & 0.89 & 0.18 & 0.09 & 0.14 & 1.95 & 0.97 & 1.58 & 0.59 & 0.29 & 0.48\\
\cmidrule{2-13}
 &  & 30 & 0.46 & 0.55 & 0.45 & 0.18 & 3.02 & 2.44 & 0.98 & 1.29 & 1.04 & 0.42\\

 &  & 60 & 0.65 & 0.31 & 0.23 & 0.14 & 2.40 & 1.79 & 1.09 & 0.86 & 0.64 & 0.39\\

 & \multirow[t]{-3}{*}{\raggedleft\arraybackslash 100} & 120 & 0.93 & 0.13 & 0.07 & 0.10 & 1.40 & 0.77 & 1.11 & 0.42 & 0.23 & 0.33\\
\cmidrule{2-13}
 &  & 30 & 0.52 & 0.48 & 0.41 & 0.13 & 2.60 & 2.23 & 0.70 & 1.11 & 0.95 & 0.30\\

 &  & 60 & 0.68 & 0.24 & 0.20 & 0.10 & 1.88 & 1.52 & 0.77 & 0.67 & 0.55 & 0.28\\

\multirow[t]{-9}{*}{\raggedleft\arraybackslash -0.25} & \multirow[t]{-3}{*}{\raggedleft\arraybackslash 200} & 120 & 0.95 & 0.09 & 0.05 & 0.07 & 1.04 & 0.58 & 0.79 & 0.31 & 0.18 & 0.24\\
\bottomrule
\end{tabular}
\caption{Estimation error and confidence set coverage under shrinking group separation.}
\label{tab:weak-sep_sim-local}
\end{table}

\subsection{Proof of Theorem~\ref{thm:weak-sep_gamma-beta-w}\label{sec:weak-sep_proofs}}

Let 
\begin{align*}
Q(\gamma, \theta ) = \frac{1}{NT} \sum_{t=1}^{T} \sum_{i=1}^N (y_{it} - x_{it}'\theta_{g_i})^2 .
\end{align*}
Note that $Q(\gamma, \theta )$ is the objective function for the estimation.
We also define 
\begin{align*}
\tilde Q (\gamma, \theta) 
= 	\frac{1}{NT}	\sum_{t=1}^T \sum_{i=1}^N (x_{it}'(\theta_{g_i^0}^0 - \theta_{g_i}))^2 + \frac{1}{NT}	 \sum_{t=1}^{T} \sum_{i=1}^N  u_{it}^2 .
\end{align*}
The theorem follows from the following sequence of lemmas. 

\begin{lemma}
    \label{lem-qq-diff}
    Suppose that Assumptions~\ref{assumption:weak-sep}.\ref{assumption:weak-sep_asym-tail-w} and \ref{assumption:weak-sep}.\ref{assumption:weak-sep_beta-compact} hold. Then, 
    \begin{align*}
    \sup_{\gamma \in \mathbb{G}^N, \theta \in \mathcal{B}} \left|		\tilde Q (\gamma, \theta) -  Q (\gamma, \theta) \right| = O_p \left( \frac{1}{\sqrt{T}} \right). 
    \end{align*}
\end{lemma}

\begin{proof}
    The proof is almost identical to the proof of Lemma S.3 of \textcite{bonhomme2015grouped}.
    We have 
    \begin{align*}
    \tilde Q (\gamma, \theta) -  Q (\gamma, \theta)
    = & 
    -2 \frac{1}{NT}	 \sum_{t=1}^{T} \sum_{i=1}^N x_{it}'(\theta_{g_i^0}^0 - \theta_{g_i}) u_{it} .
    \end{align*}	
    We rewrite a part of the right-hand side as
    \begin{align*}
    \frac{1}{NT}	 \sum_{t=1}^{T} \sum_{i=1}^N x_{it}'\theta_{g_i^0}^0  u_{it}
    = 		\frac{1}{NT}	\sum_{g\in\mathbb{G}^B}  \sum_{t=1}^{T} \sum_{i=1}^N \mathbf{1} ( g_i(B) = g ) x_{it}'\theta_{g_i^0}^0  u_{it}.
    \end{align*}
    For each $g\in\mathbb{G}^B$, it holds that 
    \begin{align*}
    & E  \left( \frac{1}{NT}	 \sum_{t=1}^{T} \sum_{i=1}^N \mathbf{1} ( g_i(B) = g ) x_{it}'\theta_{g_i^0}^0  u_{it} \right)^2 
    \le   C 		E \left\Vert  \frac{1}{NT}	 \sum_{t=1}^{T} \sum_{ g_i(B) = g}  x_{it}  u_{it} \right\Vert^2 
    = O \left( \frac{1}{NT}\right),
    \end{align*}
    where the inequality is the Cauchy-Schwarz inequality with $C$ satisfying $\left\Vert \theta_{g_i} \right\Vert^2 < C $ for any $\theta \in \mathcal{B}$ (by Assumption~\ref{assumption:weak-sep}.\ref{assumption:weak-sep_beta-compact}), and the equality follows 
    since Theorem 1 of \textcite{Yokoyama1980} implies that under Assumption \ref{assumption:weak-sep}.\ref{assumption:weak-sep_asym-tail-w}, for any $L \subseteq \{1, \dots,N\} $, there exists $M$ which does not depend on $L$ such that 
   \begin{align}
    \label{eq:weak-sep_xu-square-order}
   E \left( \left\Vert  \frac{1}{NT}  \sum_{t=1}^{T} \sum_{ i \in L } x_{it}  u_{it} \right\Vert^2 \right) \leq M \frac{|L|}{N^2T} ,
   \end{align}
   where $|L|$ denotes the cardinality of $L$.
 We then examine the other part of $\tilde Q (\gamma, \theta) -  Q (\gamma, \theta)$. It follows that 
    \begin{align*}
    \left( \frac{1}{NT}	 \sum_{t=1}^{T} \sum_{i=1}^N x_{it}'\theta_{g_i}  u_{it} \right)^2
    \leq & \left( \frac{1}{NT}	\sum_{i=1}^N \theta_{g_i} \sum_{t=1}^{T}  x_{it}  u_{it} \right)^2 \\
    \leq & \left( \frac{1}{N}	\sum_{i=1}^N || \theta_{g_i} ||^2 \right) \left( \frac{1}{NT^2}\sum_{i=1}^N \left\Vert \sum_{t=1}^{T}  x_{it}  u_{it} \right\Vert^2 \right) \\
    =& O_p \left( \frac{1}{T}\right),
    \end{align*}
    where the first inequality follows by the Cauchy-Schwarz inequality and the second inequality follows by Assumption \ref{assumption:weak-sep}.\ref{assumption:weak-sep_beta-compact} and the Markov inequality by \eqref{eq:weak-sep_xu-square-order}.
    We thus have 
    \begin{align*}
    \tilde Q ( \gamma, \theta) -  Q ( \gamma, \theta) 
    = &  O \left( \frac{1}{\sqrt{NT}}\right) +  O \left( \frac{1}{\sqrt{T}}\right) 
    \end{align*}
    uniformly over $\theta$ and $\gamma$, and consequently
    \begin{align*}
    \sup_{\gamma \in \mathbb{G}, \theta \in \mathcal{B}} \left|		\tilde Q ( \gamma, \theta) -  Q (\gamma, \theta) \right| = O_p \left(\frac{1}{\sqrt{T}} \right). 
    \end{align*}
    
\end{proof}

\begin{lemma} \label{lem-rate-conv-w}
    Suppose that Assumptions~\ref{assumption:weak-sep_asym-tail-w}-\ref{assumption:weak-sep_x-min-eigen-gp} hold. 
    Then, 
    \begin{align*}
        \max_{g \in \mathbb{G} } \min_{\tilde g \in \mathbb{G}} \left\Vert \theta_{g}^0 - \hat \theta_{\tilde g} \right\Vert^2  = O_p(1/\sqrt{T}).        
    \end{align*}
\end{lemma}

\begin{proof}
The proof is almost identical to the proofs of Lemmas A.2 and B.3 of \textcite{bonhomme2015grouped}. 
    Lemma \ref{lem-qq-diff} implies
    \begin{align*}
    \tilde{Q} ( \hat \gamma, \hat \theta) =& Q ( \hat \gamma, \hat \theta) + O_p \left(\frac{1}{\sqrt{T}} \right) \\
    \le & Q ( \gamma^0, \theta^0) +O_p \left(\frac{1}{\sqrt{T}} \right) =  \tilde{Q} ( \gamma^0, \theta^0) + O_p \left(\frac{1}{\sqrt{T}} \right).
    \end{align*}
    The fact that $\tilde{Q} ( \gamma, \theta) $ is minimized at $( \gamma^0, \theta^0)$ implies 
    \begin{align*}
    \tilde{Q} ( \hat \gamma, \hat \theta) -\tilde{Q} ( \gamma^0, \theta^0) = O_p \left(\frac{1}{\sqrt{T}} \right).
    \end{align*}
    
    We now establish a lower bound of $\tilde Q (\gamma, \theta) -\tilde{Q} ( \gamma^0, \theta^0)$ such that 
    \begin{align*}
    &\tilde Q (\gamma, \theta) -\tilde{Q} ( \gamma^0, \theta^0) \\
    = & \frac{1}{NT}	 \sum_{t=1}^{T} \sum_{i=1}^N (x_{it}'(\theta_{g_i^0}^0 - \theta_{g_i}))^2 \\
    = & \frac{1}{NT}	 \sum_{t=1}^{T} \sum_{g=1}^{G} \sum_{\tilde g=1}^{G} \sum_{i=1}^N \mathbf{1} \{g_i^0 = g \} \{g_i = \tilde g \} (x_{it}'(\theta_{g}^0 - \theta_{ \tilde g}))^2 \\
    \ge &  \frac{1}{T}	 \sum_{t=1}^{T} \sum_{g=1}^{G} \sum_{\tilde g=1}^{G} \rho_{N}(\gamma, g,\tilde g) \left\Vert \theta_{g}^0 - \theta_{ \tilde g} \right\Vert^2 	\\
    \ge &  	 \hat \rho  G^2 \max_{g\in \mathbb{G}} \min_{\tilde g \in \mathbb{G}^B} \left\Vert \theta_{g}^0 - \theta_{ \tilde g } \right\Vert^2,
    \end{align*}
    where the first inequality follows by the definition of $\rho_{N}(\gamma, g,\tilde g) $ and the second inequality is from the definition of $\hat \rho$. Now, Assumption \ref{assumption:weak-sep}.\ref{assumption:weak-sep_x-min-eigen} implies 
    \begin{align*}
        \max_{g \in \mathbb{G} } \min_{\tilde g \in \mathbb{G}} \left\Vert \theta_{g}^0 - \hat \theta_{\tilde g} \right\Vert^2 = O_p\left( \frac{1}{\sqrt{T}}\right).
    \end{align*}
\end{proof}

\begin{lemma}\label{lem-coef-rate-w}
    Suppose that Assumptions \ref{assumption:weak-sep}.\ref{assumption:weak-sep_asym-tail-w}-\ref{assumption:weak-sep}.\ref{assumption:weak-sep_x-min-eigen-gp}, and \ref{assumption:weak-sep}.\ref{assumption:weak-sep_group-separation-w} are satisfied.
    Then there exist a permutation $\sigma: \mathbb{G} \mapsto \mathbb{G}$ such that $  \left\Vert \theta_{g}^0 - \hat \theta_{\sigma (g)} \right\Vert^2 = O_p (1/\sqrt{T})$ for any $g \in \mathbb{G}$.
\end{lemma}

\begin{proof}

    We construct a permutation with the property stated in the lemma. Indeed, we show that 
    \begin{align*}
        \sigma (g) =  \arg \min_{\tilde g \in \mathbb{G}} \left\Vert \theta_{g}^0 - \hat \theta_{\tilde g} \right\Vert^2
    \end{align*}
    is such a permutation. We first show that $\sigma$ satisfies $  \left\Vert \theta_{g}^0 - \hat \theta_{\sigma (g)} \right\Vert^2 = O_p (1/\sqrt{T})$ for any $g \in \mathbb{G}$, and that it is a permutation.
    
    Lemma \ref{lem-rate-conv-w} states that 
    \begin{align*}
    \max_{g \in \mathbb{G} } \min_{\tilde g \in \mathbb{G}} \left\Vert \theta_{g}^0 - \hat \theta_{\tilde g} \right\Vert^2  = O_p(1/\sqrt{T}).
    \end{align*}
    The map $\sigma$, by construction, satisfies $ \left\Vert \theta_{g}^0 - \hat \theta_{\sigma (g)} \right\Vert^2 = O_p (1/\sqrt{T})$ for any $g \in \mathbb{G}$.

    It remains to establish that $\sigma$ is a permutation.
    For $g \neq \tilde g$, the triangular inequality gives 
    \begin{align*}
    \left\Vert \hat \theta_{\sigma (g)} - \hat \theta_{\sigma (\tilde g )} \right\Vert \geq  \left\Vert \theta_{g}^0 -  \theta_{\tilde g}^0 \right\Vert -  \left\Vert \theta_{g}^0 - \hat \theta_{\sigma (g)} \right\Vert - \left\Vert \theta_{\tilde g}^0 - \hat \theta_{\sigma (\tilde g)} \right\Vert.
    \end{align*}
    In the above we have seen that $ \left\Vert \theta_{g}^0 - \hat \theta_{\sigma (g)} \right\Vert = O_p(1/\sqrt{T})$ and $ \left\Vert \theta_{\tilde g}^0 - \hat \theta_{\sigma(\tilde g)} \right\Vert = O_p(1/\sqrt{T})$. Assumption~\ref{assumption:weak-sep_group-separation-w} states that $\left\Vert \theta_{g}^0 -  \theta_{\tilde g}^0 \right\Vert >c_T$. The condition that $c_T > T^{-1/2 +e}$ implies that $\left\Vert \theta_{g}^0 -  \theta_{\tilde g}^0 \right\Vert -  \left\Vert \theta_{g}^0 - \hat \theta_{\sigma (g)} \right\Vert - \left\Vert \theta_{\tilde g}^0 - \hat \theta_{\sigma (\tilde g)} \right\Vert >0$ with probability approaching one. This means that $\sigma (g) \neq \sigma (\tilde g)$ for $g\neq \tilde g$ with probability approaching one. Thus, $\sigma$ possesses a well-defined inverse and is bijective; in other words, $\sigma$ is a permutation. 
\end{proof}

From Lemmas \ref{lem-rate-conv-w} and \ref{lem-coef-rate-w}, we observe that the Hausdorff distance between $\theta^0$ and $ \hat \theta$ converges to 0 at the rate of $\sqrt{T}$. By using the labelling such that $\sigma(g) =g$, we can write $\left\Vert \theta_{g}^0 - \hat \theta_{g} \right\Vert^2 = O_p (1/\sqrt{T})$ for any $g \in \mathbb{G}$.

We then establish that the group membership structure is correct asymptotically as long as the coefficients are in a neighborhood of the true value. 	
Let $\mathcal{N} = \{ \theta: \left \Vert \theta_{g}^0 -\theta_{g} \right \Vert < \eta = T^{-1/2+f}, \forall g \in \mathbb{G} \}$ for $0< f < e$, where $e$ is defined in Assumption \ref{assumption:weak-sep}.\ref{assumption:weak-sep_group-separation-w}, for any $g \in \mathbb{G}$.

\begin{lemma}\label{lem-group-consistency-w}
    Suppose that Assumptions~\ref{assumption:weak-sep}.\ref{assumption:weak-sep_asym-tail-w}, \ref{assumption:weak-sep}.\ref{assumption:weak-sep_beta-compact}, \ref{assumption:weak-sep}.\ref{assumption:weak-sep_x-min-eigen} and \ref{assumption:weak-sep}.\ref{assumption:weak-sep_group-separation-w} hold. As $N,T\to \infty $ with $NT^{-er} \to 0$, where $e$ and $r$ are defined in Assumptions \ref{assumption:weak-sep}.\ref{assumption:weak-sep_asym-tail-w} and \ref{assumption:weak-sep}.\ref{assumption:weak-sep_group-separation-w}, respectively, it holds that 
    \begin{align*}
    \Pr \left\{ \hat \gamma (\theta)  \neq \gamma^0 \text{ \emph{for some} }  \theta \in \mathcal{N}\right\} \to 0.
    \end{align*}
\end{lemma}
\begin{proof}
    We establish an equivalent statement:
    \begin{align*}
    \max_{1 \leq i \leq N} \sup_{\theta \in \mathcal{N}} \mathbf{1} \{\hat g_i (\theta) \neq g_i^0\} = o_p(1).
    \end{align*}
    Note that 
    \begin{align*}
    \mathbf{1}\left\{\hat g_i (\theta) \neq g_i^0 \right\} 
    = \max_{g \in \mathbb{G} \backslash \{g_i^0 \}} \mathbf{1} \left( \sum_{t=1}^{T} (y_{it} - x_{it}'\theta_{g})^2 < \sum_{t=1}^{T} (y_{it} - x_{it}'\theta_{g_i^0})^2 \right).\label{eq:indicator-fun-ext}
    \end{align*}
    We have 
    \begin{align*}
    & \sum_{t=1}^{T}  \left( (y_{it} - x_{it}'\theta_{g})^2 - (y_{it} - x_{it}'\theta_{g_i^0})^2 \right) \\
    = &  \sum_{t=1}^{T} 2 u_{it} x_{it} (\theta_{g_i^0}^0 - \theta_{g}^0) 
    + \sum_{t=1}^{T} (x_{it}'( \theta_{g_i^0}^0  -\theta_{g}^0))^2 + \Psi,
    \end{align*}
    where
    \begin{align*}
    \Psi=&  \sum_{t=1}^{T} 2 u_{it} x_{it} (\theta_{g_i^0} - \theta_{g} - \theta_{g_i^0}^0 + \theta_{g}^0) \\
    & + \sum_{t=1}^{T}  (\theta_{g_i^0} - \theta_{g} - \theta_{g_i^0}^0 + \theta_{g}^0)'x_{it} x_{it}'(2 \theta_{g_i^0}^0 - \theta_{g_i^0} -\theta_{g})\\
    & + \sum_{t=1}^{T}  (\theta_{g_i^0}^0 - \theta_{g}^0)'x_{it} x_{it}'( \theta_{g_i^0}^0 - \theta_{g_i^0} -\theta_{g}  + \theta_{g}^0 ) 	.
    \end{align*}
    Applying the Cauchy-Schwarz inequality and then Assumption \ref{assumption:weak-sep}.\ref{assumption:weak-sep_beta-compact} and the definition of $\mathcal{N}$ gives 
    \begin{align*}
    |\Psi| \leq \eta C_1 \left\Vert \sum_{t=1}^{T} u_{it} x_{it}\right\Vert + \eta C_2 \left\Vert \sum_{t=1}^{T} x_{it} x_{it}' \right\Vert ,
    \end{align*}
    where $C_1$ and $C_2$ are constants independent of $\eta$ and $T$.
    We thus have the following inequality
    \begin{align*}
    &\mathbf{1} \left( \sum_{t=1}^{T} (y_{it} - x_{it}'\theta_{g})^2 < \sum_{t=1}^{k-1} (y_{it} - x_{it}'\theta_{g_i^0})^2 \right) \\
    \leq &  \mathbf{1} \Bigg(  \sum_{t=1}^{k^0-1} 2 u_{it} x_{it}' (\theta_{g_i^0}^0 - \theta_{g}^0) \\
    & \quad \quad -  \sum_{t=1}^{T}   (x_{it}'( \theta_{g_i^0}^0  -\theta_{g, B}^0))^2   
    + \eta  C_1 \left\Vert \sum_{t=1}^{T} u_{it} x_{it}\right\Vert + \eta C_2 \left\Vert \sum_{t=1}^{T} x_{it} x_{it}' \right\Vert  \Bigg).
    \end{align*}
    Thus, we have 
    \begin{align*}
    &\Pr \left( \sup_{\theta \in \mathcal{N}} \mathbf{1} (\hat g_i  (\theta) \neq g_i^0) \neq 0 \right) \\
    \leq &  \sum_{g \in \mathbb{G} \backslash  \{g_i^0\}} \Bigg( \Pr \left( \frac{1}{T}\sum_{t=1}^{T}   (x_{it}'( \theta_{g_i^0}^0  -\theta_{g}^0))^2   \leq  \frac{c_T''}{2}  \right)  
    + \Pr \left(\left\Vert \frac{1}{T} \sum_{t=1}^{T} u_{it} x_{it}\right\Vert  \geq M \right) \\ 
    & \quad + \Pr \left(  \left\Vert \frac{1}{T} \sum_{t=1}^{T} x_{it} x_{it}' \right\Vert \geq M \right)  \\
    & \quad +  \Pr \Bigg(  \frac{1}{T} \sum_{t=1}^{T} 2 u_{it} x_{it}' (\theta_{g_i^0}^0 - \theta_{g}^0) 
    < - \frac{c_T''}{2} + \eta  C_1 M  + \eta C_2 M  \Bigg) \Bigg),
    \end{align*}
    where we take $c_T'' = c_T \times \rho^*$ for $c$ in Assumption \ref{assumption:weak-sep}.\ref{assumption:weak-sep_group-separation-w} and $\rho^*$ in Assumption \ref{assumption:weak-sep}.\ref{assumption:weak-sep_x-min-eigen} and $M$ is some large constant.
    
    We now bound the second and third terms on the right-hand side of the inequality.
    First, we have 
    \begin{align*}
    \Pr \left(  \left\Vert \frac{1}{T} \sum_{t=1}^{T} x_{it} x_{it}' \right\Vert \geq M \right)  \leq \Pr \left(   \frac{1}{T} \sum_{t=1}^{T} \left\Vert x_{it} x_{it}' \right\Vert \geq M \right) = \Pr \left(   \frac{1}{T} \sum_{t=1}^{T} x_{it}' x_{it} \geq M \right).
    \end{align*}
    Assumption \ref{assumption:weak-sep}.\ref{assumption:weak-sep_asym-tail-w} enables us to the Markov inequality and Theorem 1 of \textcite{Yokoyama1980} to $ x_{it}' x_{it} - E( x_{it}' x_{it})$, and we establish 
    \begin{align*}
    \Pr \left(  \left\Vert \frac{1}{T} \sum_{t=1}^{T} x_{it} x_{it}' \right\Vert \geq M \right)  =  O\left( T^{-r/2}\right),
    \end{align*}
    by taking $M$ large enough such that $\sum_{t=1}^T E( x_{it}' x_{it}) /T < M$.
    A similar argument under Assumption \ref{assumption:weak-sep}.\ref{assumption:weak-sep_asym-tail-w} implies that 
    $		\Pr \left(\left\Vert (T )^{-1} \sum_{t=1}^{T} u_{it} x_{it}\right\Vert  \geq M \right) = O\left( T^{-r/2}\right)$.

    Next, we consider the first term. We now use Assumptions \ref{assumption:weak-sep}.\ref{assumption:weak-sep_x-min-eigen}, \ref{assumption:weak-sep}.\ref{assumption:weak-sep_group-separation-w} and \ref{assumption:weak-sep}.\ref{assumption:weak-sep_asym-tail-w}. The Markov inequality combined with Theorem 1 of \textcite{Yokoyama1980} implies
    \begin{align*}
    \Pr \left( \left| \frac{1}{T}\sum_{t=1}^{T}   (x_{it}'( \theta_{g_i^0}^0  -\theta_{g}^0))^2 - \frac{1}{T}\sum_{t=1}^{T}   E ((x_{it}'( \theta_{g_i^0}^0  -\theta_{g}^0))^2 ) \right| \geq  \frac{c_T''}{2}  \right)  = O( T^{-er}) .
    \end{align*}
    We thus have uniformly over $g$:
    \begin{align*}
    \Pr \left( \frac{1}{T}\sum_{t=1}^{T}   (x_{it}'( \theta_{g_i^0}^0  -\theta_{g}^0))^2   \geq  \frac{c_T''}{2} \right)  = O (T^{-er}).
    \end{align*}

    Lastly, we consider the fourth term. It follows that 
    \begin{align*}
        & \Pr \Bigg(  \frac{1}{T} \sum_{t=1}^{T} 2 u_{it} x_{it}' (\theta_{g_i^0}^0 - \theta_{g}^0) 
        < - \frac{c_T''}{2} + \eta  C_1 M  + \eta C_2 M  \Bigg) \\
        \leq &  \Pr \Bigg(  \frac{1}{T} \sum_{t=1}^{T} 2 u_{it} x_{it}' (\theta_{g_i^0}^0 - \theta_{g}^0) 
        < - \frac{c_T''}{4}   \Bigg)= O (T^{-er})
        \end{align*}
        uniformly over $g$ under Assumption \ref{assumption:weak-sep}.\ref{assumption:weak-sep_asym-tail-w}. 
        The inequality follows by $c_T'' = O(c_T) = O( T^{-1/2+e})$ and $\eta = o (T^{-1/2 + e}) $. The equality holds by the Markov inequality and Theorem 1 of \textcite{Yokoyama1980}.

To sum up, we have 
    \begin{align*}
    & \Pr \left(\max_{1 \leq i \leq N} \sup_{\theta \in \mathcal{N}}  \mathbf{1} (\hat g_i  (\theta) \neq g_i^0) \neq 0 \right)  
    \leq  \sum_{i=1}^N \Pr \left( \sup_{\theta \in \mathcal{N}} \mathbf{1} (\hat g_i (\theta) \neq g_i^0) \neq 0 \right) \\  
    =&  O (N ( T^{-er} + T^{-r/2} )) 
    =  O(NT^{-er} ).
    \end{align*}
    
\end{proof}

\printbibliography

\end{document}